\DeclarePairedDelimiter{\ceil}{\lceil}{\rceil}
\DeclarePairedDelimiter{\floor}{\lfloor}{\rfloor}
\newcommand{\FullOnly}[1]{%
  \iftoggle{fullversion}{#1}{}%
}
\newcommand{\CameraOnly}[1]{%
  \iftoggle{fullversion}{}{#1}%
}
\newcommand{\R}{\mathbb{R}}
\newcommand{\N}{\mathbb{N}}
\newcommand{\No}{\mathcal{N}}
\newcommand{\X}{\mathscr{X}}
\newcommand{\D}{\mathscr{D}}
\newcommand{\B}{\mathcal{B}}
\newcommand{\1}{\mathbbm{1}}
\newcommand{\Z}{\mathbb{Z}}
\newcommand{\Y}{\mathscr{Y}}
\newcommand{\F}{\mathcal{C}}
\newcommand{\G}{\mathcal{A}}
\newcommand{\Ga}{\mathcal{G}}
\newcommand{\htlc}{\mathcal{H}}
\newcommand{\Q}{\mathcal{Q}}
\newcommand{\Po}{\mathscr{P}}
\newcommand{\gvec}{\mathbf{g}}
\newcommand{\vvec}{v}
\newcommand{\lambdavec}{\boldsymbol\lambda}
\renewcommand{\P}{\mathcal{P}}
\newcommand{\bgame}{\B}
\newcommand{\bmodel}{\mathscr{O}}
\newcommand{\bg}{\textup{bg}}
\newcommand{\ag}{\textup{ag}}
\newcommand{\nog}{\textup{ng}}
\newcommand{\cg}{\textup{cg}}
\newcommand{\Uup}{\textup{U}}
\newcommand{\Gup}{\textup{G}}
\newcommand{\Lup}{\textup{L}}
    \newif\ifshowtheoremlinks
    \newtheoremstyle{theoremwithlink} 
      {3pt}  
      {3pt}  
      {\itshape} 
      {}     
      {\bfseries} 
      {.}    
      { }    
      {%
        \bfseries #1~#2%
        \ifshowtheoremlinks
          \ (#3)%
        \fi
      }
    \theoremstyle{theoremwithlink}
\newtheorem{theorem}{Theorem}[section]
\theoremstyle{plain}
\newtheorem{remark}[theorem]{Remark}
\newtheorem{definition}[theorem]{Definition}
\newtheorem{lemma}[theorem]{Lemma}
\newtheorem{corollary}[theorem]{Corollary}
\newcommand{\tx}[2]{x_{#1}^{#2}}
\newcommand{\pim}[1]{}
\newcommand{\gf}[1]{}
\newcommand{\za}[1]{}
\newcommand{\fr}[1]{}
\newcommand{\mm}[1]{}
\def\bitcoin{%
  \leavevmode
  \vtop{\offinterlineskip 
    \setbox0=\hbox{B}%
    \setbox2=\hbox to\wd0{\hfil\hskip-.03em
    \vrule height .3ex width .15ex\hskip .08em
    \vrule height .3ex width .15ex\hfil}
    \vbox{\copy2\box0}\box2}}
\begin{document}
%
\title{A Composable Game-Theoretic Framework for Blockchains\thanks{This research was partially funded by the European Research Council (ERC) under the European Union’s Horizon 2020 research (grant agreement 101141432-BlockSec), by the Austrian Science Fund (FWF) 10.55776/F85 (subprojects F85-10, F85-12, and F85-03), and by the Vienna Science and Technology Fund (WWTF) through the projects 10.47379/ICT22045 and 10.47379/ICT25056.}}

\makeatletter
\newcommand{\linebreakand}{%
  \end{@IEEEauthorhalign}
  \hfill\mbox{}\par
  \mbox{}\hfill\begin{@IEEEauthorhalign}
}
\makeatother


\FullOnly{

    
    
    \author{
        \IEEEauthorblockN{Zeta Avarikioti}
        \IEEEauthorblockA{TU Wien, Common Prefix\\georgia.avarikioti@tuwien.ac.at}
        \and
        \IEEEauthorblockN{Georg Fuchsbauer}
        \IEEEauthorblockA{TU Wien\\georg.fuchsbauer@tuwien.ac.at}
        \and
        \IEEEauthorblockN{Pim Keer}
        \IEEEauthorblockA{TU Wien\\pim.keer@tuwien.ac.at}
        \linebreakand
        \IEEEauthorblockN{Matteo Maffei}
        \IEEEauthorblockA{TU Wien\\matteo.maffei@tuwien.ac.at}
        \and
        \IEEEauthorblockN{Fabian Regen}
        \IEEEauthorblockA{TU Wien\\fabian.regen@tuwien.ac.at}
    }
}


%


\maketitle

\begin{abstract}
    Blockchains rely on economic incentives to ensure secure and decentralised operation, making incentive compatibility a core design concern. However, protocols are rarely deployed in isolation. Applications interact with the underlying consensus and network layers, and multiple protocols may run concurrently on the same chain. These interactions give rise to complex incentive dynamics that traditional, isolated analyses often fail to capture. 
    
    We propose the first compositional game-theoretic framework for blockchain protocols. Our model represents blockchain protocols as interacting games across the application, network, and consensus layers. It enables formal reasoning about incentive compatibility under composition by introducing two key abstractions: the cross-layer game, which models how strategies in one layer influence others, and cross-application composition, which captures how application protocols interact concurrently through shared infrastructure. 

    \CameraOnly{
        We illustrate our framework through case studies on Hashed Timelock Contracts (HTLCs) and Layer-2 protocols
        showing how compositional analysis reveals new subtle incentive vulnerabilities and supports modular security proofs. Also, by introduction of a novel rational miner model, we derive new conditions for the robustness of timelocks to bribing attacks. 
    }
    \FullOnly{
        We illustrate our framework through case studies on Hashed Timelock Contracts (HTLCs), Layer-2 protocols, and Maximal Extractable Value (MEV) showing how compositional analysis reveals new subtle incentive vulnerabilities and supports modular security proofs. Also, by introduction of a novel rational miner model, we derive new conditions for the robustness of timelocks to bribing attacks.
    }
    
\end{abstract}


%
\IEEEpeerreviewmaketitle

\section{Introduction}
\label{sec:introduction}

Blockchains currently secure more than {\$3 trillion} in digital assets. As most operate in an open, permissionless fashion, such as {Bitcoin} and {Ethereum}, their security fundamentally relies on financial incentives. That is, participants invest resources, whether computational power or cryptocurrency holdings, in exchange for rewards such as block rewards and transaction fees. 

This incentive-driven security model supports a broader ecosystem commonly referred to as Web3---a decentralised digital environment where applications operate on distributed networks rather than being governed by centralised entities. Web3 is composed of \emph{multiple interdependent layers}, each fulfilling distinct roles. At its core, the \emph{blockchain (or consensus) layer} is responsible for ordering transactions in a trustless manner. On top of it operates the \emph{application layer}, which includes decentralised applications (dApps) and Layer 2 (L2) solutions designed to enhance functionality, efficiency, and scalability of the underlying blockchain layer. Underpinning these layers is the \emph{network layer}, which facilitates secure and reliable communication among geographically distributed participants.

These layers \emph{do not operate in isolation}. Their interdependence gives rise to complex \emph{security dependencies} that fall outside the scope of traditional game-theoretic models, which typically focus on a single layer while abstracting away the rest. In particular, \emph{incentives leak across layers}, introducing subtle but critical vulnerabilities.
Notable examples of such cross-layer issues include \emph{timelock bribing attacks}~\cite{nadahalli2021timelocked, avarikioti2022suborn} and \emph{Maximal Extractable Value (MEV) exploits}~\cite{daian2020flash}. In timelock bribing attacks, adversaries at the application layer {bribe} validators to {censor transactions}, allowing them to extract funds from protocols relying on time-based smart contracts. MEV exploits, in contrast, arise from validators (or external actors) observing pending transactions and manipulating their inclusion on-chain, for instance via \emph{frontrunning}, \emph{backrunning}, or \emph{sandwich attacks}~\cite{zhou2023sokdefi}. While timelock bribing constitutes an explicit {security breach}, MEV exploits reflect an emergent consequence of \emph{misaligned incentives between the network and consensus layers}.
Both cases underscore the limitations of layer-specific security models and show the need for a \emph{composable game-theoretic security framework} that captures the incentive dynamics spanning multiple layers and supports the design of economically secure blockchain protocols.

{Nevertheless, existing analyses largely fail to address this need.} 
As we argue in Section~\ref{subsec:related-work}, prior works either \emph{do not account for rational behaviour}, remaining rooted in cryptographic models with only honest and Byzantine participants, e.g.,~\cite{garay2024bitcoin,kiayias2017ouroboros,yossi2017algorand,phil2019snowwhite,maofan2019hotstuff}, \emph{are too general to produce meaningful results in the blockchain setting}~\cite{ghani2016compositional}, or \emph{adopt monolithic rational models}, abstracting away cross-layer interactions, e.g., \cite{zappala2021framework,badertscher2018but,kiayias2016blockchain,aiyer2005bar,pass2017fruitchains, avarikioti2020cerberus,avarikioti2023muskateer}. As a result, none of these approaches offer a \emph{general, composable framework that can model cross-layer incentives or the concurrent execution of multiple applications on a shared blockchain infrastructure.}


\subsection{Our Contributions}  

This work addresses this gap by introducing a composable game-theoretic framework for blockchains that enables formal reasoning about protocol security across multiple layers of the blockchain stack. To this end, we characterise\FullOnly{ and formalise} the distinct layers of a blockchain ecosystem---the application, network,
and consensus layers---alongside the interfaces that govern their interactions.

We first define the \emph{blockchain game}, which models the strategic behaviour of miners or validators in response to a set of fee-paying transactions. We then define the \emph{application game}, which captures the actions of protocol participants operating an application-layer protocol and whose outcome is a set of transaction triples (each consisting of a transaction, a fee, and a timestamp) intended for submission to the blockchain. 
\CameraOnly{In the full version, we moreover introduce the \emph{network game}}\FullOnly{Finally, we introduce the \emph{network game}}, a novel intermediary layer that models how these transactions are disseminated to the consensus layer. This is the layer where phenomena such as Maximal Extractable Value (MEV) emerge, as it governs the visibility and timing of transactions across the system.\CameraOnly{ In this work, we restrict ourselves to a simple network model that allows us to omit the network game to simplify the presentation.}

To reason about incentive security in such settings, we introduce a rigorous notion of \emph{game-theoretically secure protocols}. Informally, we require that the expected utility of protocol participants does not increase when the protocol is executed in a composed setting---interacting with the blockchain game\FullOnly{, the network game,} and other application games---compared to when it is executed in isolation.
At the core of our framework are two new constructs: the \emph{cross-layer game}, which integrates the application\FullOnly{, network,} and blockchain layers to capture their strategic \emph{sequential} interplay, and \emph{cross-application composition}, which enables reasoning about the concurrent execution of multiple protocols on a shared blockchain infrastructure.


We demonstrate our framework's applicability through several examples, focusing mostly on timelock bribing as an approachable example of cross-layer incentives. 
\gf{..and because it does not require formalizing the network layer(?)}
Here, a party tries to censor a transaction by bribing miners to include a conflicting timelocked transaction once it becomes valid. This leads to the \emph{Censor-Only Blockchain Game}, a novel model where rational miners/validators can only censor, not fork or withhold blocks. To the best of our knowledge, it is the first to explicitly account for a \emph{myopic} portion of hashrate or stake not susceptible to bribes. For Bitcoin, these myopic miners may be small miners that together hold a significant share of hashrate. We find that with an estimated $2\%$ myopic hashrate, a timelock of $T=144$ blocks resists censoring if the bribe is below $36$ times the attacked transaction's fee.

Our Censor-Only Blockchain Game also supports various leader selection mechanisms, from unpredictable protocols like Bitcoin, where the leader is unknown until block discovery, to globally predictable ones like Ethereum, with a public leader schedule. This enables studying the vulnerability of applications such as payment channels and optimistic rollups to timelock bribing on blockchains with different leader selection mechanisms. We derive concrete conditions under which these applications are less vulnerable on unpredictable chains than on globally predictable ones.

\CameraOnly{
    Moving on, we showcase the expressivity of our framework through concrete blockchain protocols. First, we describe the composition of Hashed Timelock Contracts (HTLCs) across multiple payment channels (PCs), identifying those parameter regimes under which compositional security holds in the presence of rational participants. The framework's modularity becomes apparent here, as we can apply our abstract analysis of Censor-Only Blockchain Games in multiple different settings. In the full version, we illustrate how adding a network model can, when composed with an application game, lead to rational miners deviating from consensus to exploit MEV opportunities. This showcases how our framework can surface emergent vulnerabilities that layer-specific analyses fail to capture.
}
\FullOnly{
    Moving on, we showcase the expressivity of our framework through concrete blockchain protocols. First, we describe the composition of Hashed Timelock Contracts (HTLCs) across multiple payment channels (PCs), identifying those parameter regimes under which compositional security holds in the presence of rational participants. The framework's modularity becomes apparent here, as we can apply our abstract analysis of Censor-Only Blockchain Games in multiple different settings. Second, we illustrate how even a network model can, when composed with an application game, lead to rational miners deviating from consensus to exploit MEV opportunities. This showcases how our framework can surface emergent vulnerabilities that layer-specific analyses fail to capture.
}

Beyond these examples, we outline a broader class of settings that our framework naturally captures. These include interactions between decentralized exchanges and oracles, cross-chain protocols such as atomic swaps, and multi-layer incentive mechanisms like proposer-builder separation (PBS). While not formalised as case studies, these examples underscore the versatility and broader applicability of our framework for reasoning about complex, concurrent protocols within modern blockchain ecosystems.

\subsubsection*{\textbf{Summary of Contributions}} 
\begin{itemize}[leftmargin=*]
    \CameraOnly{
        \item We develop the first composable game-theoretic framework for blockchain ecosystems, capturing incentive interactions across application and consensus layers.
        To the best of our knowledge, our framework represents the first modular approach to reason about blockchain protocol security with rational participants, enabling the formulation of security guarantees for abstract classes of protocols. 
    }
    \FullOnly{
        \item We develop the first composable game-theoretic framework for blockchain ecosystems, capturing incentive interactions across application, network, and consensus layers. To the best of our knowledge, it represents the first modular approach to reason about blockchain protocol security with rational participants, enabling the formulation of security guarantees for abstract classes of protocols. 
    }
    \CameraOnly{
        \item We apply our framework in multiple settings (timelock bribing, HTLC composition), showcasing its ability to capture incentive dynamics and expose vulnerabilities (such as the wormhole attack \cite{malavolta2018anonymous}). We further outline broader applications, covering decentralised exchanges, atomic swaps, and proposer-builder separation, to demonstrate the framework’s potential to model complex settings spanning multiple applications, layers, and chains.
    }
    \FullOnly{
        \item We apply our framework in multiple settings (timelock bribing, HTLC composition, MEV), showcasing its ability to capture incentive dynamics and expose vulnerabilities (such as the wormhole attack \cite{malavolta2018anonymous}). We further outline broader applications, covering decentralised exchanges, atomic swaps, and proposer-builder separation, to demonstrate the framework’s potential to model complex settings spanning multiple applications, layers, and chains.
    }
    \item As part of our case studies, we develop a new family of blockchain models, accounting for myopic hashrate and the blockchain's leader selection mechanism. This framework allows us to give concrete timelock lower bounds to prevent bribing attacks for a given share of myopic hashrate. Also, we can derive conditions under which unpredictable chains are less vulnerable to timelock bribing than globally predictable chains. 
\end{itemize}

\subsubsection*{\textbf{Paper Organisation}}  
In Section~\ref{sec:model}, we introduce our framework by translating protocols into formal games, allowing us to define incentive compatibility in a composable setting. Section~\ref{sec:total-composition} then formalises the composition of these games and identifies conditions under which incentive compatibility is preserved. To illustrate the applicability of our framework, Section~\ref{sec:examples} presents several representative case studies. This allows a better comparison to the existing literature in Section~\ref{subsec:related-work}. Finally, Section~\ref{sec:conclusion} concludes and outlines directions for future research. 

\section{Model}
\label{sec:model}

\FullOnly{
    Our framework aims to support a broad class of protocols operating atop a common blockchain infrastructure. We consider three distinct layers: a \emph{network layer}, composed of nodes that communicate over a peer-to-peer network; a \emph{consensus layer}, which establishes an ordered sequence of transactions; and an \emph{application layer}, which captures the internal logic of decentralised applications. Our framework models each of these layers separately and formalises their interfaces. 
}

\CameraOnly{
    Our framework aims to support a broad class of protocols operating atop a common blockchain infrastructure. We consider three distinct layers: a \emph{consensus layer}, which establishes an ordered sequence of transactions;  an \emph{application layer}, which captures the internal logic of decentralised applications; and a \emph{network layer}, 
    composed of nodes that communicate over a peer-to-peer network. Our framework models each of these layers separately and formalises their~interfaces. 

    To ease the presentation of the framework, however, the formal modelling of the network layer will be deferred to the full version. As the interfaces between layers are formalised, it will be straightforward to insert later on a network layer between the application and consensus layers. In this paper, we assume a synchronous network model with instantaneous message delivery and a global clock, as is common in prior work~\cite{garay2024bitcoin,kiayias2017ouroboros}.
}

The framework enables us to reason about the composition of the different layers and consequently study the \emph{game-theoretic} security properties of protocols operating within a blockchain ecosystem in a modular way.
\begin{figure}[h]
    \centering
    \includegraphics[width=0.7\linewidth]{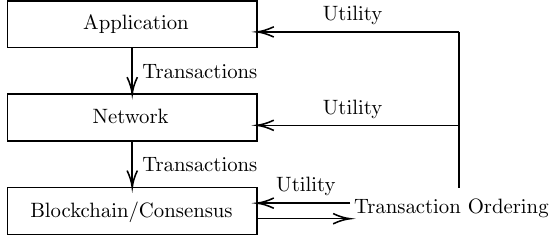}
    \caption{High-level overview of the blockchain layers.}
    \label{fig:layers}
\end{figure}

Game-theoretic security concerns the behaviour of \emph{rational players} who act to maximise their utility, typically measured in cryptocurrency holdings. A protocol is said to be \emph{incentive-compatible (IC)} if following its prescribed behaviour yields the highest expected utility for all participants, making deviation unprofitable.

Participants behave rationally if they act to maximise their utility. The utility function can be defined to model different behaviours. For instance, participants may always follow the protocol, which can be modelled by assigning infinite utility when they comply and zero otherwise. Alternatively, adversarial participants may derive utility from minimising another participant’s utility. In the following, we make the natural assumption that everyone tries to maximise their cryptocurrency holdings, but we stress that our framework can also accommodate other utility functions.

We assume that participants can collude, meaning they can coordinate strategies and aggregate their utilities. They effectively merge into a single entity capable of executing joint actions. They can also freely share any information they possess. Conversely, non-colluding participants only communicate through the blockchain. We formally define collusion and information-sharing in Section~\ref{sec:total-composition}.

\begin{remark}
\label{rem:ic}
    Incentive compatibility does not assess the quality of a protocol’s design. A flawed protocol---such as one that destroys half of all participants' funds---could still be IC if rational participants adhere to its prescribed actions. While such protocols may be impractical, our focus is solely on whether rational agents will follow the protocol {\em once they choose to participate.}
\end{remark}

The cryptocurrency holdings of all players (which they try to maximise) are determined by the \emph{state} of the blockchain. This state can be updated by players sending transactions over the network.
The role of the blockchain is to create an ordered sequence of those transactions. We will denote by $\X$ the set of all transactions that could be made within the blockchain ecosystem, and denote by $\Omega$ the set of all orderings of subsets of transactions of $\X$. Typically, a blockchain creates a transaction ordering by batching together transactions into blocks. Such a block is in itself a sequence of transactions. Therefore, an ordering of blocks always defines an ordering of transactions. That is, if we denote by $\Omega'$ the set of all orderings of blocks in $\X$, there exists a straightforward, surjective mapping from $\Omega'$ to $\Omega$. We abstract the behaviour of turning a set of transactions into a block ordering into the following mapping.
\begin{definition}
\label{def:blockchain}
    We define a \emph{block behaviour $\beta:\Po(\X)\to\Omega'$} as a function which has as input a set $X\subseteq\X$ of transactions, and outputs a sequence $(A_h)_{h\geq0}\in\Omega'$ of blocks of transactions, where for each $h\geq0$, the \emph{block} $A_h=(x_{h,k})_{k=1}^{K_h}$ is a \emph{finite}\pim{please check} sequence of transactions, such that $\bigcup_{h\geq0}\qty{x_{h,k}|1\leq k\leq K_h}=X$, i.e., such that \emph{exactly} the transactions in $X$ are ordered. We call a specific instance of the output of $\beta$ a \emph{block ordering}. The block ordering $(A_h)_{h\geq0}$ defines a transaction ordering, or \emph{blockchain ordering} in $\Omega$ by concatenation: $A_0\|A_1\|\ldots$. 
\end{definition}
Our definition of a block behaviour is as abstract as possible, in order to accommodate for the widest range of existing blockchain systems. In particular, it allows transactions to appear more than once or to conflict with other transactions (for example, for UTXO chains, by spending the same coins). The transaction ordering therefore still needs to be interpreted according to a certain set of rules (e.g., only the first occurrence of a transaction counts) to determine how the state of the blockchain actually changes. Since we are interested in making game-theoretic statements, we abstract the specific ruleset away by defining an \emph{execution function} $\omega_i:\Omega\to\R$, which for a given ordering returns the balance of a blockchain participant $i$. 

\subsection{The Blockchain Game}
\label{subsec:blockchain-game}
We are interested in turning a block behaviour into an object, which we can reason with game-theoretically. After all, the ordering that comes out upon input of a set of transactions is potentially determined by decisions made by rational actors, who try to maximise a given utility. These actors could be miners in a Proof-of-Work blockchain, or validators in a Proof-of-Stake blockchain. 

To this end, we introduce the \emph{blockchain game}. This is not a game in the traditional game-theoretic sense due to the lack of a utility function. To keep our framework as modular as possible, we refrain from defining the utility function as part of this blockchain game and introduce it only in Section~\ref{subsec:framework}. 
The players in this game are the miners or validators. We will use these terms interchangeably. We denote the player set by $M$. Upon input of a set of transactions, their decisions will lead to a certain blockchain ordering. In our framework, we explicitly mention two attributes alongside each transaction: a posting time and a transaction fee function. A block behaviour may use these attributes to determine the ordering it outputs, but note that explicitly keeping track of time and fee does not contradict Definition~\ref{def:blockchain}. These attributes are considered implicitly present in the transaction, such that a block behaviour can use this information when given a set of transactions. 

Intuitively, the posting time determines when a miner/validator learns about the existence of a transaction, and the transaction fee function determines, via the execution function, what utility the miners/validators receive if the transaction is included. By explicitly stating the posting time and transaction fee function, we extract from the transaction the information that players in the blockchain game need to make decisions. We therefore speak of \emph{transaction triples}. These are tuples $(x,t,f)$, with $x$ a transaction, $t\in\N_0$ the posting time of that transaction, and $f:\Omega\to\D(\R_{\geq0}^{|M|})$ the transaction fee function mapping each blockchain ordering to a probability distribution over the space of fee assignments to the miners. For a set $X$ of transactions, we denote by $\Y(X)$ the set of all sets of transaction triples of $X$, i.e., 
$\Y(X)=\Po\big(\big\{(x,t,f):x\in X,t\in\N_0,f\in(\Omega\to\D(\R_{\geq0}^{|M|}))\big\}\big)$.

The blockchain game thus takes as input a set of transaction triples $Y$ for some set of transactions $X$. The players in this game then all decide on a strategy, which results in an ordering $b\in\bmodel_M(Y)$ of the transactions present in $Y$. 
We denote here by $\bmodel_M(Y)\subseteq\Omega$ the set of all possible orderings from an input set $Y$ of transaction triples for a given set $M$ of blockchain game players. Recall that $\Omega$ is the set of all possible orderings. More correctly, given $Y$, the strategies of the players give a strategy profile, which results in a probability measure on $\bmodel_M(Y)$, as there might still be randomness involved in determining the ordering. 
\begin{definition}[Blockchain Game]
\label{def:b-game}
    Let $Y$ be a set of transactions triples.
    The \emph{blockchain game $\B(Y)$ induced by $Y$} is a tuple $(M,Y,\Sigma_{\bg},\pi_{\bg})$, where:
    \begin{itemize}[leftmargin=*]
        \item $M$ is the set of miners/validators, which are the \emph{players} of the blockchain game, together with their respective hashrates (for miners) or stakes (for validators).
        \item $\Sigma_{\bg}$ is the set of strategy profiles. For any $(x,t,f)\in Y$, players of the blockchain game only become aware of the transaction $x$ and its fee function $f$ at time $t$.
        \item $\pi_{\bg}:\Sigma_{\bg}\to\D(\bmodel_M(Y))$ is the \emph{ordering function}, mapping each $\sigma_{\bg}\in\Sigma_{\bg}$ to a probability measure on $\bmodel_M(Y)$.
    \end{itemize}
\end{definition}
\begin{remark}
\label{rem:constant-fee}
    We will often use the shorthand $(\tx{}{},t,f)$, where $f\in\R_{\geq0}$, to indicate that a fee of $f$ is awarded to whichever miner includes the transaction $\tx{}{}$, which would be the standard Bitcoin fee mechanism. 
\end{remark}

Capturing the full complexity of a block behaviour in a game model is far from trivial. We therefore consider three variants of a simplified block behaviour where miners (or \emph{validators}) choose which transactions to include and when, but always build on the same chain and never withhold valid blocks, as they would in selfish mining \cite{eyal2018majority}. We assume only the provided transactions pay fees, with no other sources of miner income like block rewards. We refer to this as a \emph{Censor-Only} block behaviour. Depending on whether validators know in advance which round(s) they will mine, we define three blockchain games:

\begin{itemize}[leftmargin=*]
    \item Unpredictable: Just as in Bitcoin, no miner knows whether they mine a block, until they actually mine it. There is no way to predict who will mine the next block. 
    \item Globally predictable: Everyone knows for a given number of blocks in the future who will mine each block. An example of this is Ethereum.
    \item Locally predictable: Each miner knows for a given number of blocks in the future which blocks are assigned to it, but is not aware of which miners get to mine the other blocks. Examples are Algorand and Cardano.
\end{itemize}
Due to our framework's modularity, we can easily swap out one variant for the other, or any variant for a more realistic model adhering to the blockchain game format. In the rest of this paper, we work with the following formal definition of the three Censor-Only variants.
\begin{definition}[Censor-Only Blockchain Games]
\label{def:nfnwmg}
    We define three types of Censor-Only Blockchain Games (COBGs). These games are all induced by a set $Y$ of transaction triples and all define a set $M=\qty{0,\ldots,m}$ of miners/validators. 
    
    \smallskip\noindent\textbf{Gameplay.} The games proceed in rounds indexed by $t$, starting from round $t=0$. At the start of round $s\geq0$, all miners learn about the transaction triples which have posting time $s$. At the end of round $s$, a block $b_s$ is added to the blockchain ordering based on the miner strategies in that round, which vary with the COBG type. All miners become aware of the block $b_s$ before the start of the next round $s+1$. 
    
    \smallskip\noindent\textbf{Hashrate distribution.} For all COBGs, the set of miners is represented as a \emph{hashrate distribution} (or \emph{stake distribution} in PoS) $\lambdavec=\qty(\lambda_j)_{j=0}^m$, with $\lambda_m\geq\ldots\geq\lambda_1>0$ the hashrates of $m$ rational miners, and $\lambda_0\geq0$ such that $\sum_{j=0}^m\lambda_j=1$. We call $\lambda_0$ the \emph{myopic} hashrate, aggregating the hashrate of each myopic miner, who always aims to build a block which gives maximal utility to that miner in that round (once an execution function is defined).
    
    \smallskip\noindent\textbf{Strategies.} The three types of COBGs differ in the strategies that players can take. These strategies implicitly define the strategy profile sets and ordering functions, for which we omit exact definitions. 
    \begin{itemize}[leftmargin=*]
        \item The \emph{Unpredictable COBG (U-COBG) $\bgame_{\Uup}(Y)$ induced by $Y$} is a tuple $(\lambdavec,Y,\Sigma_{\Uup,\bg},\pi_{\Uup,\bg})$. For round $s$, each miner $j$ will create a block proposal $b_{j,s}$. At the end of the round, one of the $m+1$\footnote{Because all myopic miners follow the same strategy, their block proposal will be the same.} block proposals will be randomly selected according to the hashrate distribution $\lambdavec$.
        \item The Globally Predictable (L-COBG) and Locally Predictable (G-COBG) COBGs $\bgame_{\Gup}(Y)$ and $\bgame_{\Lup}(Y)$ induced by $Y$ with period $D$ are tuples $(\lambdavec,Y,D,\Sigma_{\Gup/\Lup,\bg},\pi_{\Gup/\Lup,\bg})$. In these games, at the start of each \emph{epoch} $e$, i.e., the set of rounds $eD,eD+1,\ldots,(e+1)D-1$, with $D$ the \emph{epoch period}, a \emph{validator assignment} $\vvec^e$ is drawn at random from $\qty{0,\ldots,m}^D$, where miner $j$ has probability $\lambda_j$ of being selected for each round in the epoch to mine the block in that round. In that case, $v^e_{[s\bmod D]}=j$.
        \begin{itemize}
            \item In G-COBG, $\vvec^e$ is known to everyone. 
            \item In L-COBG, each miner $j$ only learns $\vvec^e|_j$, which are the values $t\in\Z_D$ for which $v_t^e=j$, i.e., when $j$ gets to mine a block.
        \end{itemize}
        The strategy of miner $j\in M\setminus\qty{0}$ is the sequence of functions $(\mu_{j,t})_{t\geq0}$, which, given $Y$, maps for each $t\geq0$, $\vvec^{\floor{t/D}}$ (in G-COBG) or $\vvec^{\floor{t/D}}|_j$ (in L-COBG) and $(b_s)_{s=0}^{t-1}$ to a block $b_t$ if $v^{\floor{t/D}}_{[t\bmod D]}=j$, and to $\perp$ otherwise. 
    \end{itemize}
\end{definition}
\begin{remark}[Myopia]
\label{rem:lambda0}
    Including the myopic hashrate makes COBGs more general than traditional rational miner models (e.g., \cite{aumayr2024securing}), which typically consider only a finite set of miners (i.e., the special case $\lambda_0=0$). In a permissionless system like Bitcoin, this neglects small, potentially unknown miners outside this set who, unlikely to mine again, include all known non-conflicting transactions maximising immediate fees. Ignoring this permissionless nature may lead to conclusions like miners censoring transactions indefinitely. To account for this, U-COBGs with a value $\lambda_0>0$ capture the non-negligible share of hashrate from negligibly-sized non-colluding rational miners that effectively behave \emph{as one honest miner}. This is because they all follow the same myopic strategy, which leads to the same block proposal under the implicit assumption these miners receive the same transactions. We stress that one can still explicitly define miners with (negligibly) small hashrate as separate, rational entities as players in the set $\qty{1,\ldots,m}$. In G-COBGs and L-COBGs, this intuition does not hold, because even small miners may censor if selected in an epoch. In these cases, we simply interpret $\lambda_0$ as a portion of hashrate/stake that behaves myopically for unknown reasons. 
\end{remark}
\begin{remark}
\label{rem:time}
    \pim{please check}
    In the Censor-Only blockchain game, by introducing the notion of rounds indexed by the posting time, we aligned the posting time with the blockchain block numbers. This is a design choice of the COBG, but by no means required for the framework. Indeed, we may for example express the posting time in seconds, according to some clock, and have blocks that come at random intervals according to this clock. In the remainder of this paper, aligning posting time and block heights is only done in the context of COBGs. 
\end{remark}
\subsection{The Application Game}
\label{subsec:application-game}
The transaction triples that will serve as the input of the blockchain game will come from the specific application that we are studying. Our goal is now to encapsulate everything that can happen while running some application protocol, into a game-like structure. Studying this structure will allow us to determine whether a protocol $\Pi$ is secure, i.e., incentivises participants to follow the intended protocol behaviour. The idea is that the internal protocol logic, and how the application outputs are submitted to the blockchain, should be represented by a strategy profile, which is then mapped to a set of transaction triples that serves as input to the blockchain game.
Our framework will be able to reason about any protocol $\Pi$ that can be written as a tuple $(N,\G,\overline{\sigma}_\ag)$, with $N$ the set of protocol participants, $\G$ a \emph{parametrised application game}, and $\overline{\sigma}_\ag$ the intended protocol behaviour(s) (IPB). 
\begin{definition}[Application Game]
\label{def:l2-game}
    An \emph{application game} $\G$ is a tuple $(N,X,\Sigma_\ag,\pi_\ag)$, where:
    \begin{itemize}[leftmargin=*]
        \item $N$ is the set of $n=|N|$ application protocol participants, referred to as \emph{(protocol) players}.
        \item $X$ is the set of all transactions that could be included in the blockchain as a result of the application protocol.
        \item $\Sigma_\ag$ is the set of \emph{strategy profiles}, where a strategy profile $\sigma_\ag\in\Sigma_\ag$ is a vector of \emph{strategies} $(\sigma_{\ag,i})_{i\in N}$, where a strategy $\sigma_{\ag,i}$ specifies what a player $i\in N$ does at each point in the game.
        \item $\pi_\ag:\Sigma_\ag\to \Y(X)$ is the \emph{outcome function}, which maps each $\sigma_\ag\in\Sigma_\ag$ to a set of transaction triples. 
    \end{itemize}
\end{definition}

\begin{definition}
    A \emph{parametrised application game} is a collection $\G=(\G^p)_{p\in\P}$, where $\P$ is some parameter set, and $\G^p$ is an application game $(N,X^p,\Sigma_\ag^p,\pi_\ag^p)$ for each $p\in\P$.
\end{definition}
\begin{remark}
\label{rem:dependence}
    For a given set $X$ of transactions, we can interpret the collection $(\bgame(Y))_{Y\in\Y(X)}$ as a parametrised blockchain game with parameter space $\Y(X)$. We may then denote $\bgame(Y)=(M,Y,\Sigma_{\bg}^Y,\pi_{\bg}^Y)$ to clearly indicate the dependency of $\Sigma_{\bg}$ and $\pi_{\bg}$ on~$Y$. 
\end{remark}

Remark~\ref{rem:dependence} shows how one can use parametrised games. Indeed, a protocol (or a blockchain, in this case) cannot be modelled by a single game. Depending on the context, such as the incoming transactions, or in the case of a protocol, e.g., when players learn certain information, the possible actions change. Making a game's parameter a function of what happens in another game will be what enables us to capture the interplay between different layers (Section~\ref{subsec:framework}) and different applications (Section~\ref{sec:composition}). Therefore, from now on, we only consider protocols $\Pi$ that can be written as a tuple $(N,(\G^p)_{p\in\P},(\overline{\sigma}_\ag^p)_{p\in\P})$.\FullOnly{ We also assume that there will always at most one player in $N$ who can broadcast a specific transaction and alter its fee. To this end, we define for each $p\in\P$ a so-called \emph{player function}.
\begin{definition}[Player function]
\label{def:player-function}
    Given a parametrised application game $(\G^p)_{p\in\P}$ with $\G^p=(N,X^p,\Sigma_\ag^p,\pi_\ag^p)$, we define for each $p\in\P$ the \emph{player function} $\chi^p:N\to\Po(X^p)$, which maps each player $i$ to the set of transactions for which $i$ pays the fees, such that $\chi^p(i)\cap\chi^p(j)=\varnothing$ for all $i,j\in N$.  
\end{definition}
}

\FullOnly{
    \subsection{The Network Game}
    \label{subsec:network}
    The application game outputs a set of transaction triples, but these do not immediately become inputs to the blockchain game. In practice, transactions are relayed over a communication network, whose structure determines how and when they are observed. For instance, in a synchronous setting with a global clock, all blockchain participants see the same transactions at the same time. In contrast, in more realistic settings, where mempool visibility may vary and adversarial actors can selectively delay or withhold propagation, participants may receive different subsets of transactions at different times, leading to diverging views and strategic behaviour.
    
    The network model we choose induces a so-called \emph{network game}. Players in this game could be protocol participants, as well as players in the blockchain game. For example, the L2 participants might have the option to share their transactions only with specific blockchain game players\footnote{Think for example, in the context of Bitcoin, about services that allow for the sharing of a transaction with one specific miner, such as the Mempool Accelerator (\url{https://mempool.space/accelerator}) or Marathon Slipstream (\url{https://slipstream.mara.com/})}. This, together with network assumptions, may lead to players in the blockchain game having different perspectives on the outcome of the application game. Moreover, in settings where the ordering of transactions within blocks in the blockchain might be relevant, such as Maximal Extractable Value (MEV)~\cite{daian2020flash}, a flexible network game is crucial.
    \begin{definition}[Network Game]
    \label{def:n-game}
        Let $Y\in\Y(X)$ be a set of transaction triples for some set $X$ of transactions. 
        The \emph{network game $\No(Y)$ induced by $Y$} is a tuple $(N,M,Y,\Sigma_{\nog},\pi_{\nog})$, where
        \begin{itemize}[leftmargin=*]
            \item $N$ is the set of $n=|N|$ application protocol participants, referred to as \emph{(protocol) players}.
            \item $M$ is the set of miners/validators, which are the \emph{players} of the blockchain game.
            \item $\Sigma_{\nog}$ is the set of \emph{strategy profiles}, where a strategy profile $\sigma_{\nog}\in\Sigma_{\nog}$ is a vector of \emph{strategies} $(\sigma_{\nog,i})_{i\in N\cup M}$, where a strategy $\sigma_{\nog,i}$ specifies what a player $i\in N\cup M$ does at each point in the game.
            \item $\pi_{\nog}=(\pi_{\nog,j})_{j\in M}$ is the \emph{outcome function}, where $\pi_{\nog,j}:\Sigma_{\nog}\to \Y(X)$ maps each $\sigma_{\nog}\in\Sigma_{\nog}$ to the set of transaction triples shared with $j$ for the blockchain game.
        \end{itemize}
    \end{definition}    
    \begin{remark}
    \label{rem:n-game}
    In general, blockchain players may receive different sets of transaction triples. Definition~\ref{def:b-game} extends naturally by taking as input a tuple $(Y_j)_{j \in M}$, with each player $j$ receiving only $Y_j$. The resulting blockchain game $\bmodel_M((Y_j)_{j \in M})$ produces orderings based on individual views. 
    \end{remark}
}

\section{Composition}
\label{sec:total-composition}
In the previous section we have introduced game-theoretic representations of the different layers of a blockchain ecosystem. We will now discuss how these representations relate to each other, and how we can form a game in the traditional sense out of them.

\subsection{Cross-layer Composition}
\label{subsec:framework}
Protocols deployed on a blockchain rarely operate in isolation. Instead, their security depends not only on the protocol logic at the application level but also on how these applications interact with the underlying network and consensus layers. For instance, a protocol that assumes timely inclusion of transactions may become insecure if miners have incentives to censor or reorder them. To reason about such interactions systematically, we introduce the \emph{cross-layer game}—a formal construction that \emph{sequentially} composes layers into a unified model.

This composition allows us to evaluate how strategies in one layer, such as transaction selection by miners, affect outcomes and incentives in other layers. On a high level, the cross-layer game's structure can be visualised by Figure~\ref{fig:layers}. That is, first the players of the application interact and a set of transactions triples is output, which then gets relayed according to the behaviour in the network\FullOnly{, modelled by playing the network game}\CameraOnly{\footnote{Since we assumed a synchronous network model with instantaneous message delivery and a global clock in this work, the output of the application game serves directly as the input set of transaction triples to the blockchain game.}}, and lastly get ordered into blocks according to the blockchain game. 
%
%
%
At the core of this construction is a utility function that maps the blockchain ordering \(b\) coming out of the blockchain game, which is the result of the strategic behaviour throughout application and blockchain layers, to payoffs for all players in these layers. This translation from ordering to utility is controlled by an execution function $\omega$. The cross-layer game allows us to reason about the incentive compatibility of a protocol with respect to a block behaviour \(\beta\), which we assume can be instantiated as a blockchain game \(\bgame(Y)\) for any set of transaction triples \(Y\), and an execution function.

\CameraOnly{
    \begin{definition}[(Parametrised) Cross-layer Game]
    \label{def:fgame}
        Consider a protocol $\Pi=\qty(N,(\G^p)_{p\in\P},(\overline{\sigma}_\ag^p)_{p\in\P})$ as defined in Section~\ref{subsec:application-game} with parametrised application game $(\G^p)_{p\in\P}$, a block behaviour $\beta$ that can be modelled by a blockchain game $\B^\beta$ with player set $M$ (for ease of notation we write $\B$ instead of $\B^\beta$), and an execution function $\omega$. The \emph{parametrised cross-layer game with respect to $\beta$ and $\omega$ of the protocol $\Pi$} is a collection $(\F^p_{\Pi,\beta,\omega})_{p\in\P}$, where for each $p\in\P$, $\F^p_{\Pi,\beta,\omega}$ is a \emph{cross-layer game with respect to $\beta$ and $\omega$ of the protocol $\Pi$}, which is defined as a tuple $(\G^p,\B(\G^p),\omega)$, where
        \begin{itemize}[leftmargin=*]
            \item $\G^p=(N,X^p,\Sigma_\ag^p,\pi_\ag^p)$ is an application game, which yields for every strategy $\sigma_\ag\in\Sigma_\ag^p$ a set $\pi_\ag^p(\sigma_\ag)$ of transaction triples. 
            \item $\B(\G^p)=\qty(\bgame(\pi_\ag^p(\sigma_\ag)))_{\sigma_\ag\in\Sigma_\ag^p}$ is a collection of blockchain games specified by $\beta$, where $\bgame(\pi_\ag^p(\sigma_\ag))=\Big(M,\pi_\ag^p(\sigma_\ag),\allowbreak\Sigma_{\bg}(\sigma_{\ag}),\allowbreak\pi_{\bg}^{(\sigma_\ag)}\Big)$\footnote{We use the shorthand $\Sigma_{\bg}(\sigma_{\ag}):=\Sigma_{\bg}^{\pi_\ag^p(\sigma_\ag)}$ (similar for $\sigma_{\bg}$) and $\pi_{\bg}^{(\sigma_\ag)}:=\pi_{\bg}^{\pi_\ag^p(\sigma_\ag)}$.} for each $\sigma_\ag\in\Sigma_\ag^p$. This outputs for each $\sigma_\ag\in\Sigma_\ag^p$ and $\sigma_{\bg}\in\Sigma_{\bg}(\sigma_{\ag})$ an output probability measure $P_{\sigma_{\bg}}$ on $\bmodel_M(\pi^p_\ag(\sigma_\ag))$.
            \item $\omega=(\omega_i)_{i\in N\cup M}$ specifies the \emph{execution function} for each player. This defines moreover a utility function $u_i^p$ for each $i\in N\cup M$. Indeed, first observe that the tuple $\sigma_\cg=(\sigma_\ag,\sigma_{\bg}(\cdot))$\footnote{We use the shorthand $\sigma_{\bg}(\cdot):=(\sigma_{\bg}(\sigma_{\ag}'))_{\sigma_{\ag}'\in\Sigma_{\ag}^p}$.} can be seen, up to natural identification, as a strategy profile of the cross-layer game. We therefore denote the set of cross-layer strategy profiles by
            \begin{multline*}
                \Sigma_\cg^p:=\Big\{(\sigma_\ag,\sigma_{\bg}(\cdot))\,\Big|\,\sigma_\ag\in\Sigma_\ag^p,\forall\sigma_\ag'\in\Sigma_\ag^p:\\ \sigma_{\bg}(\sigma'_{\ag})\in\Sigma_{\bg}(\sigma'_{\ag})\Big\}.
            \end{multline*}
            The utility function for player $i\in N\cup M$ can then be defined as a function $u^p_i:\Sigma_\cg^p\to\R$ which maps\footnote{We use measure-theoretic notation for generality and conciseness. In particular, $P_{\sigma_{\bg}}$ will generally be a discrete measure.}
            \begin{equation}
            \label{eq:utility}
                (\sigma_\ag,\sigma_{\bg}(\cdot)) \mapsto \int_{\bmodel_M(\pi_\ag^p(\sigma_\ag))}\omega_i(b)\dd P_{\sigma_{\bg}(\sigma_{\ag})}(b).
            \end{equation}
        \end{itemize}
    \end{definition}

    We have now defined an actual game, where first, for any $p\in\P$, players in $N$ choose a strategy profile $\sigma_\ag \in \Sigma_\ag^p$ in the \emph{application game}, and players in $M$ determine a strategy profile $\sigma_{\bg}(\sigma_\ag') \in \Sigma_{\bg}(\sigma_\ag')$ for every $\sigma_\ag'\in\Sigma_\ag^p$. The \emph{utility function} $u$ then assigns a utility to each player in $N \cup M$, given the combined strategy profile $\sigma_\cg = (\sigma_\ag,(\sigma_{\bg}(\sigma_\ag'))_{\sigma_\ag'\in\Sigma_\ag^p})$. 
}

\FullOnly{
    \begin{definition}[(Parametrised) Cross-layer Game]
    \label{def:fgame}
        Consider a protocol $\Pi=\qty(N,(\G^p)_{p\in\P},(\overline{\sigma}_\ag^p)_{p\in\P})$ as defined in Section~\ref{subsec:application-game} with parametrised application game $(\G^p)_{p\in\P}$, a block behaviour $\beta$ that can be modelled by a blockchain game $\B^\beta$ with player set $M$ (for ease of notation we write $\B$ instead of $\B^\beta$), and an execution function $\omega$. The \emph{parametrised cross-layer game with respect to $\beta$ and $\omega$ of the protocol $\Pi$} is a collection $(\F^p_{\Pi,\beta,\omega})_{p\in\P}$, where for each $p\in\P$, $\F^p_{\Pi,\beta,\omega}$ is a \emph{cross-layer game with respect to $\beta$ and $\omega$ of the protocol $\Pi$}, which is defined as a tuple $(\G^p,\No(\G^p),\B(\No),\omega)$, where
        \begin{itemize}[leftmargin=*]
            \item $\G^p=(N,X^p,\Sigma_\ag^p,\pi_\ag^p)$ is an application game, which yields for every strategy $\sigma_\ag\in\Sigma_\ag^p$ a set $\pi_\ag^p(\sigma_\ag)$ of transaction triples. 
            \item $\No(\G^p)=\qty(\No(\pi_\ag^p(\sigma_\ag)))_{\sigma_\ag\in\Sigma_\ag^p}$ is a collection of network games, where $\No(\pi_\ag^p(\sigma_\ag))=\Big(N,M,\pi_\ag^p(\sigma_\ag),\Sigma_{\nog}(\sigma_{\ag}),\allowbreak\pi_{\nog}^{(\sigma_\ag)}\Big)$\footnote{We use the shorthands $\Sigma_{\nog}(\sigma_{\ag}):=\Sigma_{\nog}^{\pi_\ag^p(\sigma_\ag)}$ (similar for $\sigma_{\nog}$) and $\pi_{\nog}^{(\sigma_\ag)}:=\pi_{\nog}^{\pi_\ag^p(\sigma_\ag)}$.} for each $\sigma_\ag\in\Sigma_\ag^p$. This outputs for each $\sigma_\ag\in\Sigma_\ag^p$ and $\sigma_{\nog}\in\Sigma_{\nog}(\sigma_{\ag})$ a collection $(\pi_{\nog,j}^{(\sigma_{\ag})}(\sigma_{\nog}))_{j\in M}$, indicating which transaction triples are shared with each miner $j\in M$. We denote by $\Sigma^p_{\ag,\nog}$ the set of all tuples $(\sigma_{\ag},\sigma_{\nog}(\cdot))$, where $\sigma_{\ag}\in\Sigma_{\ag}^p$ and $\sigma_{\nog}(\cdot)$ is a shorthand for $(\sigma_{\nog}(\sigma_{\ag}'))_{\sigma_{\ag}'\in\Sigma_{\ag}^p}$ such that $\sigma_{\nog}(\sigma_{\ag}')\in\Sigma_{\nog}(\sigma_{\ag}')$. The latter essentially encapsulates the strategies taken in the network game for each possible $\sigma_{\ag}$ that may be taken in the application game.
            \item $\B(\No)=\qty(\bgame\big(\pi_\nog^{(\sigma_{\ag})}(\sigma_\nog(\sigma_{\ag}))\big))_{(\sigma_\ag,\sigma_{\nog}(\cdot))\in\Sigma_{\ag,\nog}^p}$ is the collection of blockchain games specified by $\beta$, where $\bgame\big(\pi_\nog^{(\sigma_{\ag})}(\sigma_\nog(\sigma_{\ag}))\big)=\Big(M,\pi_\nog^{(\sigma_{\ag})}(\sigma_\nog(\sigma_{\ag})),\allowbreak\Sigma_{\bg}(\sigma_{\ag},\sigma_{\nog}(\cdot)),\allowbreak\pi_{\bg}^{(\sigma_{\ag},\sigma_{\nog}(\cdot))}\Big)$ for each $(\sigma_\ag,\sigma_{\nog}(\cdot))\in\Sigma_{\ag,\nog}^p$. For one such blockchain game, each $\sigma_{\bg}\in\Sigma_{\bg}(\sigma_{\ag},\sigma_{\nog}(\cdot))$ implies an output probability measure $P_{\sigma_{\bg}}$ on $\bmodel_M(\pi_\nog^{(\sigma_{\ag})}(\sigma_\nog(\sigma_{\ag})))$.
            \item $\omega=(\omega_i)_{i\in N\cup M}$ specifies the \emph{execution function} for each of the players. This defines moreover a utility function $u_i^p$ for each $i\in N\cup M$. Indeed, first observe that the tuple $\sigma_\cg=(\sigma_\ag,\sigma_{\nog}(\cdot),\sigma_{\bg}(\cdot,\cdot))$\footnote{We use the shorthand $\sigma_{\bg}(\cdot,\cdot):=(\sigma_{\bg}(\sigma_{\nog}(\sigma_{\ag})))_{(\sigma_\ag,\sigma_{\nog}(\cdot))\in\Sigma_{\ag,\nog}^p}$.} can be seen, up to natural identification, as a strategy profile of the cross-layer game. We therefore denote the set of cross-layer strategy profiles
            \begin{footnotesize}
            \begin{equation*}
                \left\{
                (\sigma_\ag,\sigma_{\nog}(\cdot),\sigma_{\bg}(\cdot,\cdot))
                \;\middle|\;
                \begin{aligned}
                    & \sigma_\ag\in\Sigma_\ag^p, \\
                    & \forall\sigma_\ag'\in\Sigma_\ag^p:\sigma_{\nog}(\sigma'_{\ag})\in\Sigma_{\nog}(\sigma'_{\ag}), \\
                    & \forall\sigma_{\ag,\nog}'\in\Sigma_{\ag,\nog}^p:\sigma_{\bg}(\sigma_{\ag,\nog}')\in\Sigma_{\bg}(\sigma_{\ag,\nog}')
                \end{aligned}
                \right\}
            \end{equation*}
            \end{footnotesize}
            by $\Sigma_\cg^p$, where we used the shorthand $\sigma_{\ag,\nog}'=(\sigma_{\ag}',\sigma_{\nog}'(\cdot))$. The utility function for player $i\in N\cup M$ can then be defined as a function $u^p_i:\Sigma_\cg^p\to\R$ which maps\footnote{We use measure-theoretic notation for generality and conciseness. In particular, $P_{\sigma_{\bg}}$ will generally be a discrete measure.}
            \begin{footnotesize}
            \begin{equation}
            \label{eq:utility}
                (\sigma_\ag,\sigma_{\nog}(\cdot),\sigma_{\bg}(\cdot,\cdot)) \mapsto \int_{\bmodel_M(\pi_\nog^{(\sigma_{\ag})}(\sigma_\nog(\sigma_{\ag})))}\omega_i(b)\dd P_{\sigma_{\bg}(\sigma_{\nog}(\sigma_{\ag}))}(b).
            \end{equation}
            \end{footnotesize}
        \end{itemize}
    \end{definition}

    We have now defined an actual game, where first, for any $p\in\P$, players in $N$ choose a strategy profile $\sigma_\ag \in \Sigma_\ag^p$ in the \emph{application game}, both players in $N$ and $M$ choose a network game strategy profile $\sigma_{\nog}(\cdot)$, considering all possible $\sigma_\ag \in \Sigma_\ag^p$, and players in $M$ determine a blockchain game strategy profile $\sigma_{\bg}(\cdot,\cdot)$, considering all possible $\sigma_{\ag,\nog}\in\Sigma_{\ag,\nog}^p$. The \emph{utility function} $u$ then assigns a utility to each player in $N \cup M$, given the combined strategy profile $(\sigma_\ag,\sigma_{\nog}(\cdot),\sigma_{\bg}(\cdot,\cdot))$. 

    \begin{remark}
        Henceforth, we assume a synchronous network model with instantaneous message delivery and a global clock, as is common in prior work~\cite{garay2024bitcoin,kiayias2017ouroboros}. Hence, all blockchain players observe the same application game outcome. That is, we assume for any set $X$ of transactions and set $Y$ of transaction triples in $\Y(X)$ the network game with $\Sigma_{\nog}=\qty{\sigma_{\nog}}$ such that $\pi_{\nog,j}(\sigma_{\nog})=Y$ for each $j\in M$. This allows us to simplify notation by fully omitting the network game, except in Section~\ref{subsec:mev}, where MEV motivates a richer model. We can therefore apply all shorthands for network game components instead to blockchain game components, e.g., writing an element of $\Sigma_{\cg}^p$ now as $(\sigma_{\ag},\sigma_{\bg}(\cdot))$.
    \end{remark}
}

\begin{remark}
    \pim{please check}One limitation of Definition~\ref{def:fgame} is that the cross-layer composition is sequential. In our effort to analyse the application and blockchain layers separately, certain scenarios cannot be captured by the framework. In particular, application game players cannot strategise based on what is the actual state of the blockchain. For example, if at a certain point in the application game, a player Alice posts two conflicting transaction triples $(\tx{1}{},t,f)$ and $(\tx{2}{},t,f)$, another player Bob has to make a decision based on whether he thinks $\tx{1}{}$ or $\tx{2}{}$ will be ordered first. However, it would be more accurate for Bob to strategise based on the transaction that \emph{actually} gets ordered first. It is left for future work to determine whether such scenarios could be artificially captured by the current framework, and, if not, whether it is at all possible to model the above scenarios while keeping the analysis of application and blockchain games separate.
\end{remark}

To analyse the \emph{game-theoretic security} of a protocol $\Pi=\qty(N,(\G^p)_{p\in\P},(\overline{\sigma}_\ag^p)_{p\in\P})$ with respect to a given $\beta$, we must construct the game $\F^p_{\Pi,\beta,\omega}$ for each $p\in\P$ and examine its utility function. In particular, we are interested in how the \emph{utility changes when players deviate from the IPB}.
\begin{definition}[Deviation]
\label{def:deviation}
    Consider a cross-layer game $\F^p_{\Pi,\beta,\omega}$ with strategy profile space $\Sigma_\cg^p$. Let $\sigma_\cg,\sigma_\cg'\in\Sigma_\cg^p$ be two arbitrary strategy profiles. A strategy profile $\sigma_\cg$ can be interpreted as a vector of strategies for all players in the cross-layer game, i.e., $\sigma_\cg=\qty(\sigma_{\cg,i})_{i\in N\cup M}$, where $\sigma_{\cg,i}=\sigma_{\ag,i}$ for $i\in N\setminus M$, $\sigma_{\cg,i}=\sigma_{\bg,i}(\cdot)$ for $i\in M\setminus N$, and $\sigma_{\cg,i}=(\sigma_{\ag,i},\sigma_{\bg,i}(\cdot))$ for $i\in N\cap M$. A strategy $\sigma_\cg'= (\sigma'_{\cg,i})_{i\in N\cup M}$ is called a \emph{deviation from $\sigma_\cg$ by $i$} if $\sigma_{\cg,i}\neq\sigma'_{\cg,i}$ and $\sigma_{\cg,j}=\sigma'_{\cg,j}$ for all $j\in (N\cup M)\setminus\qty{i}$.
\end{definition}
A key goal of our framework is to reason about collusion without sacrificing composability. To this end, we formalise a reduction mechanism that preserves strategic structure and utility. While our initial notion of deviation focuses on unilateral actions, real-world settings often involve coordinated deviations by coalitions. We define a coalition (or collusion) as any group of players acting as one, possibly including multiple application-layer players and one blockchain-layer player. The latter simplifies exposition and is without loss of generality for many blockchain games. For instance, in U-COBG, colluding miners $j,j' \in M$ with hashrates $\lambda_j, \lambda_{j'}$ can be treated as a single miner with hashrate $\lambda_j + \lambda_{j'}$ in a modified game with $\lambdavec=(\lambda_0,\ldots,\lambda_j,\ldots,\lambda_{j'},\ldots,\lambda_m)$.

To model collusions formally, we introduce an \emph{$\eta$-collusion reduction (CR)}, which maps a game $\Ga$ to a reduced game $\tilde{\Ga}$ in which a set of colluding players is represented by a single aggregated player. The function $\eta$ specifies how coalitions in $\Ga$ are mapped to unified players in $\tilde{\Ga}$, enabling structured analysis of coalition strategies while preserving the utility semantics of the original game. Note that the game $\Ga$ can refer to an \CameraOnly{application/blockchain}\FullOnly{application/network/blockchain} game, or most likely, a cross-layer composition of different layer games. Due to our construction of the layer games, even a cross-layer composition will always have the form $(N,*,\Sigma,\pi)$, where $N$ is some player set, $\Sigma$ a strategy profile set, and $\pi$ some kind of outcome function. $*$ indicates any additional layer-specific components that would be present.


\begin{definition}[$\eta$-CR]
\label{def:eta-collusion-reduction}   
    Let $\Ga=\qty(N,*,\Sigma,\pi)$ and $\tilde{\Ga}=(\tilde{N},*,\tilde{\Sigma},\tilde{\pi})$ be two application games. Let $\eta:N\rightarrow N$ be a transformation of $N$. We say that $\tilde{\Ga}$ is an \emph{$\eta$-collusion reduction (CR) of $\Ga$} if $\tilde{N}=N$, $\tilde{\Sigma}=\Sigma$, $\tilde{\pi}=\pi$ and any other additional components are also equal. Every player $i\in\eta(N)$ now decides on a strategy $\sigma_{i}=(\sigma_{j})_{j\in\eta^{-1}(i)}$, i.e., on the strategies of all $j\in N$ for which $\eta(j)=i$. 
\end{definition}
\begin{remark}
\label{rem:eta-one-miner}
    We assume from now on that if the player set $N$ contains a subset $N_{\bg}$ of players from a blockchain game, every $\eta:N\to N$ satisfies $\eta(j)=j$ for all $j
    \in N_{\bg}$. 
\end{remark}
Note that all the players in $N$ are still part of the $\eta$-CR $\tilde{\Ga}$, except some of them will have no actions to take. The $\eta$-CR $\tilde{\Ga}$ of a game $\Ga$ can again be used to construct---or already is---a cross-layer game. Recalling \eqref{eq:utility}, the utility of any player in $i\in\tilde{N}$ is the sum of the utilities of the players in $\eta^{-1}(i)$. This effectively defines the CR of a cross-layer game, from which we can define the CR of a protocol.
\begin{definition}[CR of a protocol]
\label{def:collusion-reduced}
    Consider two instances $\Pi=(N,(\G^p)_{p\in\P},(\overline{\sigma}_\ag^p)_{p\in\P})$ and $\tilde{\Pi}=(N,(\tilde{\G}^p)_{p\in\P},\allowbreak(\overline{\sigma}_\ag^p)_{p\in\P})$ of the same protocol, leading with a block behaviour $\beta$ (implying a player set $M$) and an execution function $\omega$ to respective collections of cross-layer games $(\G^p,\B(\G^p),\omega)_{p\in \P}$ and $(\tilde{\G}^p,\B(\tilde{\G}^p),\omega)_{p\in\P}$. We say that $\tilde{\Pi}$ is a \emph{CR of $\Pi$}, if there exists an $\eta:N\cup M\rightarrow N\cup M$ such that for every $p\in\P$, the cross-layer game $(\tilde{\G}^p,\B(\tilde{\G}^p),\omega)$ is a CR, more specifically an $\eta$-CR, of the cross-layer game $(\G^p,\B(\G^p),\omega)$.
\end{definition}
The CR of a protocol groups each collusion of players into a single player, responsible for the decisions of all members. Definition~\ref{def:deviation} naturally applies to the CR of a cross-layer game: a deviation in the CR corresponds to multiple deviations in that cross-layer game.

Since blockchain game players are assumed not to collude, we can treat the blockchain game as a black box. Given a set of transaction triples, it outputs a probability measure over blockchain orderings. Intuitively, we can pre-compute the strategy profiles miners or validators would adopt for any set of transaction triples. 
\begin{definition}[Optimal strategy sets of $X$ w.r.t. $\beta$]
\label{def:xi-star}
     For a given blockchain $\beta$, an execution function $\omega$ and a set $Y\in\Y(X)$, we define the \emph{completed blockchain game} $(\bgame(Y),\omega)$. This is just the blockchain game induced by $Y$, together with the utility function $u=(u_j)_{j\in M}$, where $u_j:\Sigma_{\bg}\rightarrow\R_{\geq0}$ for $j\in M$ is defined, similarly to \eqref{eq:utility}, as the expected fees obtained by player $j$, together with any other funds that can be claimed by $j$ according to the execution function, i.e., $u_j(\sigma_{\bg})=\int_{\bmodel_M(Y)}\omega_j(b)\dd P_{\sigma_{\bg}}(b)$. One can compute the set of strategy profiles $\overline{\Sigma}_{\bg}^{Y}$, 
     containing all (mixed-strategy) Nash equilibria of $(\bgame(Y),\omega)$, after iterated removal of weakly dominated strategies\footnote{We assume that the blockchain model leads to a completed blockchain game such that this set is non-empty.}. We call $\overline{\Sigma}_{\bg}^{Y}$ the \emph{optimal strategy set for $Y$ w.r.t. $\beta$}. Similarly, we call elements of $\overline{\Sigma}_{\bg}^{Y}$ the \emph{optimal strategy profiles for $Y$ w.r.t. $\beta$} and denote them by $\overline{\sigma}_{\bg}$\footnote{Note that $\overline{\sigma}_{\bg}$ could be a mixed strategy, but it will still yield a probability distribution on the appropriate space of blockchain orderings and is thus well-defined within our framework.}. 
     The collection $(\overline{\Sigma}_{\bg}^{Y})_{Y\in\Y(X)}$ is the \emph{collection of optimal strategy sets of $X$}. Keep in mind that an (optimal) strategy profile in the cross-layer game would be a collection of (optimal) strategy profiles for $Y$, for all $Y$ that could be the outcome of the application game. 
\end{definition}
\begin{remark}
\label{rem:one-miner-optimum}
For ease of presentation, we will henceforth assume that for a given block behaviour $\beta$ (implying player set $M$), an execution function $\omega$ and a set $Y\in\Y(X)$, $|\overline{\Sigma}_{\bg}^Y|=1$, i.e., there will be a unique optimal strategy profile, denoted by $\overline{\sigma}_{\bg}^Y$ for $Y$ w.r.t. $\beta$. This allows us to define a utility function immediately for an application game $\G=(N,X,\Sigma_\ag,\pi_\ag)$ as follows. For $i\in N\cup M$, let $u_i:\Sigma_\ag\to\R$ map $\sigma_\ag\mapsto\int_{\bmodel_M(\pi_\ag(\sigma_\ag))}\omega_i(b)\dd P_{\overline{\sigma}_{\bg}(\sigma_{\ag})}(b)$. We denote the \emph{completed application game w.r.t. $\beta$ and $\omega$} as $(\G,\beta,\omega)$. The definition of a CR easily extends to this completed application game. This assumption is strong, as existence and uniqueness (especially the latter) of game-theoretic equilibrium concepts are not guaranteed, especially in games that become more complex in trying to better approximate reality. Future research on more complex models for block behaviour should be accompanied by investigating the existence, and possible uniqueness of optimal strategy profiles.  
\end{remark}

Note that the notions of completed blockchain and application games allow us to consider individual layers as stand-alone games that define a utility, where the other layers have been abstracted. For completed blockchain games, we start our analysis with a set of transaction triples, not having to worry about how this set was obtained through strategic interactions in the application layer. In the case of completed application games, the outcome is directly mapped to a utility via a black box abstracting away a blockchain game with optimal play. Since the optimal play is assumed to exist and to be unique, this mapping is well-defined. 


As discussed earlier, a protocol is considered game-theoretically secure if rational players, or coalitions thereof, have no incentive to deviate from the prescribed strategy profile. For a given set of transactions $X$ and a block behaviour $\beta$, we have defined the optimal strategy sets of $X$ with respect to $\beta$, allowing us to describe how rational blockchain players will respond to any set of transaction triples in $\Y(X)$. With these components in place, we can define the notion of \emph{incentive compatibility} for a protocol.

\begin{definition}[IC w.r.t. $\beta$]
\label{def:incentive-compatibility}
    Let $\Pi=(N,(\G^p)_{p\in\P},\allowbreak(\overline{\sigma}_\ag^p)_{p\in\P})$ be a protocol, $\beta$ a block behaviour (implying player set $M$), $\omega$ an execution function and $(\F^p_{\Pi,\beta,\omega})_{p\in\P}$ the implied collection of cross-layer games. We say that $\Pi$ is \emph{IC w.r.t. $\beta$} if for each $p\in\P$, the pair $((\G^p,\beta,\omega),\overline{\sigma}_\ag^p)$ is \emph{IC w.r.t. $\beta$}. That is, after removal of weakly dominated strategies, the strategy profile $\overline{\sigma}_\ag^p$ is a \emph{Nash equilibrium} for every CR of $(\G^p,\beta,\omega)$. In other words, for each $\eta:N\cup M\rightarrow N\cup M$, we have for each $i\in\eta(N\cup M)$ and deviation $\sigma_\ag$ from $\overline{\sigma}_\ag^p$ by $i$, that $u_i(\overline{\sigma}_\ag^p)\geq u_i(\sigma_\ag)$.    

    
\end{definition}

\subsection{Cross-application Composition}
\label{sec:composition}
A central goal of our framework is to reason about the incentive security of protocols not just in isolation, but when deployed together. We refer to this property as \emph{security under composition}. 
While the previous section focused on a single protocol’s interaction with the blockchain, real-world systems often involve multiple applications operating concurrently. These may interfere with each other, affecting both player strategies and outcome.
To capture such settings, we now formalise the composition of parametrised application games.
For simplicity, we assume that the sets of transactions \(X_1\) and \(X_2\) generated by two protocols \(\Pi_1\) and \(\Pi_2\) are disjoint; otherwise, we would no longer be modelling two distinct protocols, but rather a single unified one.

\begin{definition}[$\gvec$-composition of parametrised application games]
\label{def:composition}
    Given the parametrised application games $\G_1:=(\G_1^p)_{p\in\P}$ and $\G_2:=(\G_2^q)_{q\in\Q}$, where for any $p\in\P$, $\G_1^p=(N_1,X_1^p,\Sigma_{\ag,1}^p,\pi_{\ag,1}^p)$ and any $q\in\Q$, $\G_2^q=(N_2,X_2^q,\Sigma_{\ag,2}^q,\pi_{\ag,2}^q)$, and $\gvec$ is a collection of functions $\gvec=(g_p)_{p\in\P}$, where $g_p:\Sigma_{\ag,1}^p\to\Q$, we define the \emph{$g$-composition of $\G_2$ and $\G_1$}, denoted by $\G_2\circ_\gvec\G_1$, as the parametrised application game $\G_2\circ_\gvec\G_1=(\G^p)_{p\in\P}$, where for any $p\in\P$, $\G^p=\qty(N,X^p,\Sigma_\ag^p,\pi_\ag^p)$ is defined by 
    \begin{itemize}[leftmargin=*]
        \item $N=N_1\cup N_2$ is the set of players,
        \item $X^p=X^p_1\cup\bigcup_{q\in g_p(\Sigma_{\ag,1}^p)}X^q_2$ is the set of transactions,
        \item $\Sigma_\ag^p$ is the set of strategy profiles, which up to natural identification equals $\Sigma_{\ag,1}\times\prod_{\sigma_{\ag,1}\in\Sigma_{\ag,1}}\Sigma_{\ag,2}^{g_p(\sigma_{\ag,1})}$. 
        \item $\pi_\ag^p:\Sigma_\ag^p\to\Y(X^p)$ is the outcome function, mapping each $\sigma_\ag=\allowbreak(\sigma_{\ag,1},(\sigma_{\ag,2,\sigma_\ag})_{\sigma_\ag\in\Sigma_{\ag,1}})\in\Sigma_\ag^p$ to $\pi_{\ag,1}^p(\sigma_{\ag,1})\cup\pi_{\ag,2}^{g_p(\sigma_{\ag,1})}(\sigma_{\ag,2,\sigma_{\ag,1}})$.
    \end{itemize}    
\end{definition}

Intuitively, for a given $p\in\P$, the function $g_p$ tells us how the strategic decisions taken in the application game $\G_1^p$ influence the state of the application $\Pi_2$, i.e., which game from $\G_2$ actually gets played. $\gvec$-compositions enable us to model the interaction of protocols via means other than the blockchain. We illustrate this in Section~\ref{subsec:pc}, where a suitable $\gvec$ can model out-of-band communication of information required to take certain actions in a protocol.

Since the composition of two parametrised application games is again a parametrised application game, we can use it to construct a cross-layer game with respect to a block behaviour $\beta$. We can now say whether two protocols are \emph{IC w.r.t $\beta$ under $\gvec$-composition}. 
\begin{definition}[IC w.r.t. $\beta$ under $\gvec$-composition]
\label{def:secure-composition}
    Consider a block behaviour $\beta$, an execution function $\omega$, and two protocols $\Pi_1=\qty(N_1,(\G^p_1)_{p\in\P},(\overline{\sigma}^p_{\ag,1})_{p\in \P}), \Pi_2=\qty(N_2,(\G^q_2)_{q\in\Q},(\overline{\sigma}^q_{\ag,2})_{q\in\Q})$. 
    We say that $\Pi_1$ and $\Pi_2$ are \emph{IC w.r.t. $\beta$ under $\gvec$-composition} if $\Pi_1$ is IC w.r.t. $\beta$, $\Pi_2$ is IC w.r.t. $\beta$, and if for each $p\in\P$, $((\G^p,\beta,\omega),\overline{\sigma}_\ag^p)$ is IC w.r.t. $\beta$, where $(\G^p)_{p\in\P}=\G_2\circ_\gvec\G_1$ and for each $p\in\P$, $\overline{\sigma}_\ag^p$ is given, up to natural identification, by $\overline{\sigma}_\ag^p=\qty(\overline{\sigma}_{\ag,1}^p,(\overline{\sigma}_{\ag,2}^{g_p(\sigma_\ag)})_{\sigma_\ag\in\Sigma_{\ag,1}^p})$.
\end{definition}
Determining which protocols remain IC under \(\gvec\)-composition, and under what conditions on the blockchain and execution models, is a challenging problem. In general, such results require specific structural assumptions about the protocols, player utilities, and the environment. We now highlight preliminary results in the setting of \emph{additive} execution functions.  
\begin{definition}
\label{def:additive-execution}
    Consider a block behaviour $\beta$ (implying player set $M$) and an execution function $\omega$. We say that the execution function $\omega$ is \emph{additive}, if for any two different parametrised application games $(\G_1^p)_{p\in\P}$ and $(\G_2^q)_{q\in\Q}$, for every collection of functions $\gvec=(g_p)_{p\in\P}$, and for every $p\in\P$ and $i\in N\cup M$, the utility function $u^p_i:\Sigma_\ag^p\rightarrow\R$ of the completed application game $((\G_2\circ_\gvec\G_1)^p,\beta,\omega)$ is given by the map $(\sigma_{\ag,1},(\sigma_{\ag,2,\sigma_\ag'})_{\sigma_\ag'\in\Sigma_{\ag,1}})\mapsto u^p_{1,i}(\sigma_{\ag,1})\1_\qty{i\in N_1\cup M}+u^{g_p(\sigma_{\ag,1})}_{2,i}(\sigma_{\ag,2,\sigma_{\ag,1}})\1_\qty{i\in N_2\cup M}$, where $u^p_{1/2,i}$ is the utility function of the completed application game $(\G_{1/2}^p,\beta,\omega)$.
\end{definition}
Intuitively, we would expect realistic execution functions to be additive, assuming that once the input to the blockchain game is known, the validators will consider how to handle the transactions from different applications separately. Indeed, transactions from different applications should not be ``conflicting'' according to a reasonable execution function, as otherwise the two applications would in fact form one large application. It remains however a direction for future work to concretely formulate and categorise realistic execution functions, and determine whether they are additive. 

A simple result holds for compositions with collections $\gvec=(g_p)_{p\in\P}$, where for each $p\in\P$, $g_p$ is a constant function of $\sigma_\ag\in\Sigma_\ag^p$. That is, for a fixed $p\in\P$, regardless of the strategic decisions in the first game, the exact same version of the second game is played. That is, the first game has no effect on the second one. The result follows by contradiction: if there would be a profitable deviation from the IPB of the $\gvec$-composition, we could find a profitable deviation from the IPB of one of the original games, which is at odds with these games being IC.
\begin{theorem}[\ref{subsec:thm:constant-composition}]
\label{thm:constant-composition}
    With $\beta$ a block behaviour and $\omega$ an additive execution function, let $\Pi_1=(N_1,(\G^p_1)_{p\in\P},\allowbreak(\overline{\sigma}^p_{\ag,1})_{p\in \P})$ and $\Pi_2=(N_2,(\G^p_2)_{q\in\Q},(\overline{\sigma}^q_{\ag,2})_{q\in\Q})$ be two IC protocols w.r.t. $\beta$. Then $\Pi_1$ and $\Pi_2$ are IC w.r.t. $\beta$ under $\gvec$-composition, for each collection $\gvec=(g_p)_{p\in\P}$ of constant functions, i.e., where for each $p\in\P$, there exists some $q_p\in\Q$ such that $g_p(\sigma_{\ag,1})=q_p$ for each $\sigma_{\ag,1}\in\Sigma_{\ag,1}^p$.
\end{theorem}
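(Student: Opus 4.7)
The plan is to reduce the IC of the composed protocol to the IC of the two components by exploiting additivity of $\omega$. I would fix $p \in \P$ and let $q_p$ denote the constant value of $g_p$. Under this constancy assumption, the intended profile of the composed game simplifies, up to natural identification, to $\overline{\sigma}_a^p = (\overline{\sigma}_{a,1}^p, (\overline{\sigma}_{a,2}^{q_p})_{\sigma_a' \in \Sigma_{a,1}^p})$, and by Definition~\ref{def:additive-execution}, for every profile $\sigma_a = (\sigma_{a,1}, (\sigma_{a,2,\sigma_a'})_{\sigma_a'})$ and every player $i \in N \cup M$ one has $u_i^p(\sigma_a) = u_{1,i}^p(\sigma_{a,1})\1_\qty{i \in N_1} + u_{2,i}^{q_p}(\sigma_{a,2,\sigma_{a,1}})\1_\qty{i \in N_2}$.

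Next, I would observe that an arbitrary CR $\eta$ of the composed cross-layer game induces CRs $\eta_1$ of $(\G_1^p,\beta,\omega)$ and $\eta_2$ of $(\G_2^{q_p},\beta,\omega)$ via the partitions of $N_1$ and $N_2$ obtained by intersecting each $\eta$-coalition with $N_1$ and $N_2$ (miners being fixed by Remark~\ref{rem:eta-one-miner}). Because the utility of an aggregated player equals the sum of the utilities of its constituents, additivity extends coalitionally: the utility of $\eta(i)$ at $\sigma_a$ is the sum of the $\G_1^p$-utility of $\eta^{-1}(i) \cap N_1$ at $\sigma_{a,1}$ and the $\G_2^{q_p}$-utility of $\eta^{-1}(i) \cap N_2$ at $\sigma_{a,2,\sigma_{a,1}}$.

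The core of the argument then fits in a line. For any deviation $\sigma_a$ by $\eta(i)$ from $\overline{\sigma}_a^p$, the resulting utility difference splits into (i) the change experienced by $\eta^{-1}(i) \cap N_1$ under the deviation from $\overline{\sigma}_{a,1}^p$ in $\G_1^p$ given by $\sigma_{a,1}$, and (ii) the change experienced by $\eta^{-1}(i) \cap N_2$ under the deviation from $\overline{\sigma}_{a,2}^{q_p}$ in $\G_2^{q_p}$ given by $\sigma_{a,2,\sigma_{a,1}}$, using that the non-deviating players' components coincide with $\overline{\sigma}_{a,1,-i}^p$ and $\overline{\sigma}_{a,2,-i}^{q_p}$ in the relevant slot. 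IC of $\Pi_1$ bounds (i) by zero and IC of $\Pi_2$ bounds (ii) by zero, so summing yields the Nash condition for $\overline{\sigma}_a^p$ in the composed game.

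The main technical obstacle will be justifying that the iterated removal of weakly dominated strategies in the composed game is compatible with the same procedure in each component game. The composed strategy space contains payoff-irrelevant coordinates $\sigma_{a,2,\sigma_a'}$ for $\sigma_a' \neq \sigma_{a,1}$, so a naive pruning step could, in principle, behave differently from the componentwise procedure. However, additivity and the fact that the outcome depends only on $(\sigma_{a,1}, \sigma_{a,2,\sigma_{a,1}})$ imply that a strategy is weakly dominated in $\G_1^p$ (respectively $\G_2^{q_p}$) exactly when every composed strategy extending it is weakly dominated by the analogous extension of a dominating strategy. Consequently the two pruning processes run in lock-step, and $\overline{\sigma}_a^p$ survives in the composed game precisely because its projections survive in the components.
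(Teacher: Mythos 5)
Your proposal is correct and follows essentially the same route as the paper's proof: additivity splits the utility of any coalition into its $\G_1^p$- and $\G_2^{q_p}$-parts, the CR of the composed game induces CRs of the two component games (the paper formalises this as a ``projection on a player subset''), and a profitable deviation in the composition would force a profitable deviation in one of the components, contradicting their IC. Your direct argument is just the contrapositive of the paper's proof by contradiction, and your extra remarks---on the compatibility of iterated removal of weakly dominated strategies, and on why the non-deviating players' second-game slot stays $\overline{\sigma}_{a,2,-i}^{q_p}$ under a deviation (the only place constancy of $g_p$ is actually used)---address points the paper's proof leaves implicit.
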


\section{Case Studies} 
\label{sec:examples}
We present a series of case studies illustrating the applicability of our framework. We first define the class of \emph{timelock-bribe sensitive applications} and analyse their susceptibility to timelock bribing attacks as a function of blockchain predictability. This case study shows how the framework yields insights that apply to abstract classes of blockchain protocols. We then demonstrate how the framework captures common cross-layer and cross-application interactions, using several payment channel (PC)-based constructions
. Finally, we outline a broader class of applications, including cross-chain protocols and multi-layer mechanisms such as proposer-builder separation (PBS). Although not formalised here, these examples highlight the framework’s broader applicability to modelling incentive dynamics across heterogeneous blockchain protocols and deployment architectures.

\subsection{Timelock-Bribe Sensitive Applications}
\label{subsec:tbs}
Our framework avoids ad-hoc analyses for specific blockchain applications by enabling game-theoretic security statements for entire classes of applications. We demonstrate this by deriving necessary conditions for the incentive compatibility of \emph{timelock-bribe sensitive (TBS) applications}, using a Censor-Only Blockchain Game (COBG). Informally, TBS applications depend on the inclusion of specific transactions to function correctly, but allow deviations by posting conflicting, timelocked transactions, trying to censor the original transactions. We assume a Bitcoin-like execution function $\omega$, which enforces orderings aligned with Bitcoin’s rules (e.g., no spending the same coin twice, adhering to timelocks and signature requirements). Thus, we assume miners avoid proposing blocks with invalid transactions, as these yield suboptimal utility. 
\begin{remark}
    We \pim{please check}informally state that transactions are included in (or censored from) the blockchain ordering. This seems at odds with Definition~\ref{def:blockchain}, but is purely meant to ease the presentation. Blockchains like Bitcoin do not explicitly create an ordering of all input transactions. The Bitcoin ledger only contains a subset of these transactions, implicitly ordering all transactions not included on-chain in such a way that the execution function ignores them. Henceforth, we may state that a transaction $\tx{1}{}$ is included instead of $\tx{2}{}$. This formally means that both transactions are included in the transaction ordering, but that $\tx{1}{}$ is prioritised and $\tx{2}{}$ is ignored by the execution function. This interpretation of what is included in the blockchain does not change Bitcoin's properties. In Definition~\ref{def:timelock-based}, we explicitly state the rules the execution function follows, interpreting the blockchain ordering not as defined in the more general Definition~\ref{def:blockchain}, but instead as the sequence containing only the transactions actually included on-chain.
\end{remark}
\begin{definition}[Timelock-Bribe Sensitive Application]
\label{def:timelock-based}
    An application $\Pi=(N,(\G^p)_{p\in\P},(\overline{\sigma}_\ag^p)_{p\in\P})$ is called a \emph{timelock-bribe sensitive (TBS) application with timelock $T$ under $\omega$} if for some $p\in\P$, there exists a deviation $\sigma_\ag^p$ from $\overline{\sigma}_\ag^p$ such that $(\tx{1}{},t,f_1)\in\pi_\ag^p(\overline{\sigma}_\ag^p)$, $(\tx{2}{},t,f_2)\notin\pi_\ag^p(\overline{\sigma}_\ag^p)$ and $\qty{(\tx{1}{},t,f_1),(\tx{2}{},t,f_2)}\subseteq\pi_\ag^p(\sigma_\ag^p)$ where $\tx{1}{},\tx{2}{}\in X^p$ and where, if $\qty{\tx{1}{},\tx{2}{}}\subseteq b$ or if $b$ includes $\tx{2}{}$ before block height $T$, $\omega_i(b)=-\infty$ for any blockchain participant $i$. 
\end{definition}

The transaction triples output by a TBS application induce a probability distribution over possible blockchain orderings. To assess incentive compatibility, we must study this distribution: specifically, the probability of the event $C$ that $\tx{1}{}$ is censored and $\tx{2}{}$ is included instead.
Normally, when a miner would only see $\tx{1}{}$, it is rational to include it immediately. Since all miners would do so, $\tx{1}{}$ is included with probability one in the round $t_1$. In the presence of a valid conflicting $\tx{2}{}$, miners would include the transaction with the highest fee (invalidating the other). TBS applications, however, involve two mutually exclusive transactions with \emph{different timing constraints}: one immediately includable, the other timelocked. This tension has been studied in various mining models~\cite{avarikioti2022suborn, wang2022arbitrage, aumayr2024securing}. We now examine it under U- and G-COBG, deferring L-COBG to future work. Consider the minimal setting $Y = {(\tx{1}{}, t_1, f_1), (\tx{2}{}, t_2, f_2)}$, where $\tx{1}{}$ is available from time 0 and $\tx{2}{}$ from time $T$. Let $f_1 \in \R_{\geq 0}$ and $f_2$ be a fee function distributing a total \emph{bribe budget} $F_2$ among miners. We focus without loss of generality on the case that $t_1=t_2=0$ and $T>0$; in other cases either one transaction is visible before the other and will simply be included immediately, or both transactions are valid at the same time and the highest paying one is included. We also assume $\kappa>1$, where we write $\kappa=F_2/f_1$. If $\kappa\leq1$, miners would always prefer $\tx{1}{}$.
\smallskip\noindent\textbf{G-COBG.} The censoring probability $P(C)$ depends on how many distinct miners can be bribed slightly above $f_1$ to censor $\tx{1}{}$; if the bribe budget suffices to bribe every miner until round $T$, the rationality of censoring becomes clear once the leader schedule is known.
\begin{theorem}[\ref{subsec:thm:g-cobg}]
\label{thm:g-cobg}
    Consider the G-COBG $(\lambdavec,Y,D,\Sigma_{\Gup,\bg},\pi_{\Gup,\bg})$ with stake distribution $\lambdavec=(\lambda_j)_{j=0}^m$ and $Y=\qty{(\tx{1}{},0,f_1),(\tx{2}{},0,f_2)}$, such that $\tx{2}{}$ can be included at or after round $T>0$, where $T<D$, and where a bribe budget $F_2:=\kappa f_1$ is distributed by $f_2$ ($\kappa>1$), which has been fixed before the leader schedule for epoch $0$ is known. Then the probability that $\tx{1}{}$ is censored equals
    \begin{equation*}
        P(C)=(1-\lambda_0)^{T}\big((1-\lambda_0)p_{T}(\kappa)+\lambda_0p_{T-1}(\kappa-1)\big),
    \end{equation*}
    where we write $p_T(\kappa):=P(N_0^T<\kappa)$, with $N_0^T$ the number of distinct validators in $(v_t^0)_{t=0}^{T}$ given that there is no $t$ such that $v_t^0=0$, i.e., $N_{0}^{T}=\sum_{j=1}^m\1_\qty{\exists t\in\{0,\ldots,T\}:v_t=j}$.
\end{theorem}
\begin{remark}
\label{rem:N0T}
    The distribution of the number $N_0^T$ of distinct validators drawn with replacement from a set of $m$ validators with drawing probabilities $\lambda_1/(1-\lambda_0),\ldots,\lambda_m/(1-\lambda_0)$ does in general not have a closed-form expression, we can instead compute it numerically\footnote{\label{fot:link}\url{https://anonymous.4open.science/r/COBG}.}.
\end{remark}
As an example, in Figure~\ref{fig:p}, we show with the solid line how $P(C)$ evolves in G-COBG as a function of $\kappa$, for a toy hashrate distribution $\lambdavec'=\frac{1}{10}(2,1,3,4)$ and $T=8$. Note how the probability increases with $\kappa$ and never grows larger than $(1-\lambda_0)^{T}\approx0.17$.

\smallskip\noindent\textbf{U-COBG.} Analysing U-COBG turns out to be more complex. This is because at a given round $t$, miner $j=1,\ldots,m$ decides whether or not to censor $\tx{1}{}$ based on the expected fees it would theoretically gain from the blockchain history up to round $t$, and based on the fees it still expects to gain until round $T$, which all depends on which other miners are still censoring. More formally, we can formulate a recursive relation for the expected gain of miner $j=1,\ldots,m$ at round $t=0,\ldots,T$ under optimal play, assuming that $\tx{1}{}$ has not been included so far and that $h$ is the sequence of leaders in rounds $0,\ldots,t-1$:
\begin{equation}
\label{eq:u-cobg}
    g_j^t(h)=\lambda_j\qty(f_1\vee g_j^{t+1}(h\|j))+\sum_{k\in \Gamma^t\setminus\qty{j}}\lambda_k g_j^{t+1}(h\|k),
\end{equation}
where $\Gamma^t$ is the set of censoring miners in round $t$. The reasoning behind \eqref{eq:u-cobg} is that at round $t$, miner $j$ will choose to censor if the expected gain $g_j^{t+1}(h\|j)$ from being the leader in round $t$ and censoring is larger than the gain $f_1$ from including $\tx{1}{}$ instead. To compute the actual expected gain, one also needs to account for other censoring miners being the leader in round $t$ and thus computing the expected gain from round $t+1$, given the past leader sequence. From \eqref{eq:u-cobg}, the following bounds on $P(C)$ are immediate.
\begin{theorem}[\ref{subsec:thm:u-cobg}]
\label{thm:u-cobg}
    Consider the U-COBG $(\lambdavec,Y,\Sigma_{\Uup,\bg},\pi_{\Uup,\bg})$ with hashrate distribution $\lambdavec=(\lambda_j)_{j=0}^m$ and $Y=\qty{(\tx{1}{},0,f_1),(\tx{2}{},0,f_2)}$, such that $\tx{2}{}$ can be included at or after round $T>0$, and where a bribe budget $F_2:=\kappa f_1$ is distributed by $f_2$ ($\kappa>1$). Then (i) $P(C)\leq(1-\lambda_0)^T$, (ii) if $\kappa<(1-\lambda_0)^{-T+1}$, $P(C)=0$.
\end{theorem}
 We provide an implementation\footref{fot:link} to compute, based on \eqref{eq:u-cobg}, the censoring probability for a given $\lambdavec$, $f_1$ and $f_2$, where we can handle any $f_2$ written as a vector in $\R_{\geq0}^{T+1}$, where the $t$-th entry is the fee given to the miner censoring $\tx{1}{}$ at round $t-1$. For $\lambdavec'$ with $T=8$, this allows us to compute $P(C)$ as a function of $\kappa$, shown by the dashed line in Figure~\ref{fig:p}.
 \begin{figure}[h]
    \centering
    \includegraphics[width=0.8\linewidth]{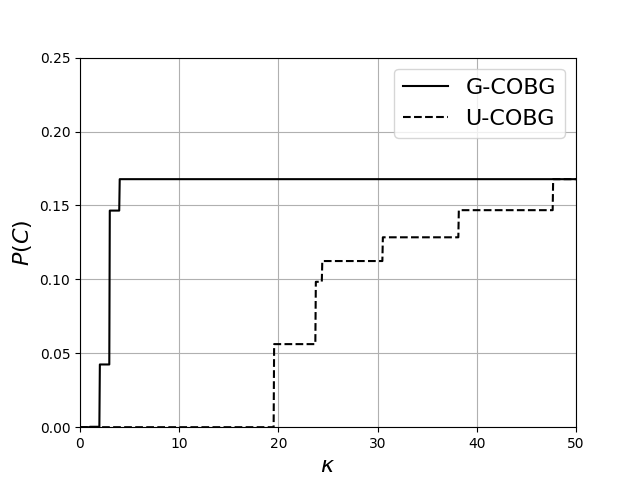}
    \caption{Censoring probability in G/U-COBG as a function of $\kappa$ for $\lambdavec'=\frac{1}{10}(2,1,3,4)$ and $T=8$.}
    \label{fig:p}
\end{figure}

\smallskip\noindent\textbf{Comparing G-COBG and U-COBG.} By Theorem~\ref{thm:g-cobg} and Theorem~\ref{thm:u-cobg}, for any $\lambda_0>0$ and number of miners $m$, we can find a lower bound on the timelock for which U-COBG is less vulnerable to timelock bribing than G-COBG, noticing that if $\kappa>m$ or $\kappa>T$, each miner until round $T$ can be bribed and so $P(C)=(1-\lambda_0)^T$ for $\kappa>m\wedge T$.
\begin{corollary}
    If $m\wedge T<(1-\lambda_0)^{-T+1}-1$, 
    the censoring probability in U-COBG is at most the censoring probability in G-COBG for any bribe budget $F_2$.
\end{corollary}
For our toy example, $m\wedge T=3<3.76\approx(1-\lambda_0)^{-T+1}-1$, in accordance with the dashed line never coming above the solid line in Figure~\ref{fig:p}. For a more realistic setting, we can look at the Bitcoin hashrate distribution of the past year\footnote{\url{https://mempool.space/graphs/mining/pools\#1y} accessed 09-10-2025.}, crudely estimating $\lambda_0=0.02$ ($\lambda_0=0.005$) as the percentage of hashrate labelled as ``Other'' (``Unknown'') by the explorer. Our reasoning here is that these miners have such a negligibly small percentage of hashrate, that they behave myopically. For $\lambda_0=0.02$, this bound guarantees that censoring is less likely to occur on Bitcoin under this model, than on an equivalent, globally predictable chain for timelocks of at least $139$ blocks. For context, the Lightning Network~\cite{poon2016bitcoin} uses timelocks of a day ($T=144$). This hints towards Lightning being more robust on Bitcoin then on a globally predictable chain, regardless of the bribe.

\smallskip\noindent\textbf{Pre-BitVM Bitcoin timelock bribing.} For specific cases, a recursion simpler than \eqref{eq:u-cobg} may hold. For example, in the case considered in \cite{nadahalli2021timelocked} that $f_2$ simply gives $F_2$ to the miner who includes $\tx{2}{}$ in round $T$ (which would be the only way to do timelock bribing in Bitcoin, before BitVM~\cite{aumayr2024bitvm,linus2024bitvm2}), we show in Theorem~\ref{thm:tau} that any miner $j=1,\ldots,m$ will try to include $\tx{1}{}$ up to some round $t_j^*$, and try to censor $\tx{1}{}$ afterwards. We henceforth assume this simplified setting.
\begin{theorem}[\ref{subsec:thm:tau}]
\label{thm:tau}
    Consider the U-COBG $(\lambdavec,Y,\allowbreak\Sigma_{\Uup,\bg},\pi_{\Uup,\bg})$ with hashrate distribution $\lambdavec=\qty(\lambda_j)_{j=0}^m$, and $Y=\qty{(\tx{1}{},0,f_1),(\tx{2}{},0,f_2)}$, such that $\tx{2}{}$ can be included at or after round $T>0$. Assume that $f_1,f_2\in\R_{\geq0}$ with $f_1<f_2$. Then the optimal strategy $(\overline{\sigma}_{\bg,j}^t)_{t=0}^{T}$ for miner $j=1,\ldots,m$ is to include $\tx{1}{}$ if $t< t_j^*$, and to censor $\tx{1}{}$ until $T$ if $t\geq t_j^*$, for any $t=0,\ldots,T$, where we let $t_j^*=T-\ceil{\rho_j}$, with $\rho_j(f_1,f_2;\lambdavec)$ defined recursively as
    \begin{footnotesize}
    \begin{equation}
    \label{eq:rho}
        \rho_j=\ceil{\rho_{j-1}}+\frac{\log\frac{f_1}{\lambda_jf_2}-\sum_{i=\ell+1}^{j-1}\qty(\ceil{\rho_{i}}-\ceil{\rho_{i-1}})\log\qty(\sum_{k=i}^m\lambda_k)}{\log\sum_{k=j}^m\lambda_k},
    \end{equation}
    \end{footnotesize}
    for $j=\ell+1,\ldots,m$, where $\ell$ is defined such that $\lambda_\ell\leq f_1/f_2<\lambda_{\ell+1}$. For $j=0,\ldots,\ell$, we have $\rho_j=0$.
    
    Moreover, the probability $P(C)$ that $\tx{2}{}$ is included is 0 if $T>\ceil{\rho_m}$ and for $\ceil{\rho_{j-1}}<T\leq \ceil{\rho_j}$ is given by
    \begin{footnotesize}
     \begin{equation*}
        P(C)=
            \qty(\sum_{k=j}^m\lambda_k)^{T-\ceil{\rho_{j-1}}}\prod_{i=1}^{j-1}\qty(\qty(\sum_{k=i}^m\lambda_k)^{\ceil{\rho_i}-\ceil{\rho_{i-1}}}).
    \end{equation*}
    \end{footnotesize}
\end{theorem} 
This result follows from assuming $\tx{1}{}$ was censored until round $t$, and then computing the expected gain for each miner in that round, starting at $t=T$ and ending at $t=0$. We then show that as $t$ decreases, more and more miners will no longer be censoring $\tx{1}{}$. The censoring probability is then computed by counting the proportion of censoring hashrate in each round until $T$. For our toy hashrate distribution $\lambdavec'$, and for any $t\geq0$, we can easily compute using \eqref{eq:rho} whether a miner would censor, were we to play a U-COBG with $T=t$, fixing $f_2=\kappa f_1$. This is done in Figure~\ref{fig:timeline} for $\kappa=25$. For $T=8$, the censoring probability is then $0.7^3\cdot0.8^5\approx0.11$, exactly the value of the dashed line at $\kappa=25$ in Figure~\ref{fig:p}.

\begin{figure}[h]
    \centering
    \includegraphics[width=\linewidth]{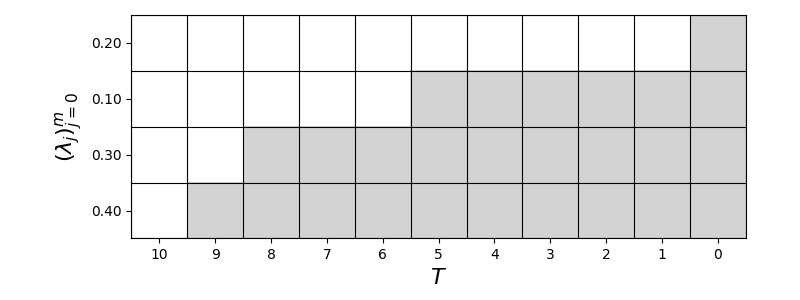}
    \caption{If the square $(\lambda,T)$ is grey, the miner with hashrate $\lambda$ censors at the start of U-COBG with timelock $T$, given $\lambdavec'=\frac{1}{10}(2,1,3,4)$ and $\kappa=25$.}
    \label{fig:timeline}
\end{figure}
For a given hashrate distribution, and fixed $f_1,f_2$, Theorem~\ref{thm:tau} tells us how large we should set the corresponding timelock $T^*$ for the censoring probability to be zero. We argue in \CameraOnly{the full version}\FullOnly{Appendix~\ref{app:worst-case}} that for given $f_1,f_2,\lambda_0$, the hashrate distribution $\lambdavec=(\lambda_0,\frac{1-\lambda_0}{2},\frac{1-\lambda_0}{2})$ yields the largest $T^*$, assuming $\lambdavec$ does not contain a majority miner (which would break consensus).  It is then easy to compute that Bitcoin timelocks of $T=144$ blocks (as in Lightning) are not vulnerable to timelock bribing as long as $f_2<\alpha f_1$, i.e., the bribe $f_2$ is smaller than $\alpha$ times the fee $f_1$ of the vulnerable transaction, with $\alpha\approx36$ ($\alpha\approx4$).

\subsection{Payment Channels}
\label{subsec:pc}
The general analysis from the previous section can be applied by our framework to any TBS application when assuming a COBG. We therefore assume pre-BitVM timelock bribing to study the security of several Hashed Timelock Contract (HTLC) and PC protocols, which are well-known examples of TBS applications.

\smallskip\noindent\textbf{An HTLC in a PC
.} 
Assume Alice and Bob share a PC with balances $v_A$ and $v_B$, and an HTLC of value $v$ from Alice to Bob with timelock $T$ and secret $\varsigma$. This state can be closed with a transaction $\tx{}{H}$. The HTLC is an output of $\tx{}{H}$ spendable in two ways: Bob claims $v$ by posting $\tx{B}{}$ and revealing $\varsigma$ in the witness, or, if Bob does not do so within $T$ blocks, Alice reclaims $v$ by posting $\tx{A}{}$. Alice and Bob can collaboratively update the channel state or unilaterally close it using a fully signed commitment transaction. We model this with a simplified application game (although it captures the IPB): if they collaborate, they either revert the payment, using $\tx{}{R}$ where Alice has $v_A+v$ and Bob $v_B$, or complete it, using $\tx{}{P}$ where Alice has $v_A$ and Bob $v_B+v$. 
\begin{definition}[HTLC Application Game]
\label{def:htlc-game}
    We define the \emph{HTLC application game $\htlc^p$ with parameter $p=(t^\varsigma,t^e)$, timelock $T$, secret $\varsigma$ (which Bob learns at time $t^\varsigma$), and values $(v_A,v_B,x)$ between $A$ and $B$} as an application game with player set $N=\qty{A,B}$, and where $(X,\Sigma_\ag,\pi)$ are defined implicitly by the sequence $(H_t)_{t=0}^{t^e}$, where $H_t$ is defined by Figures~\ref{fig:htlc-case1} and \ref{fig:htlc-case2}.
    \begin{figure*}
        \begin{subfigure}[b]{0.3\linewidth}
            \includegraphics[width=\linewidth]{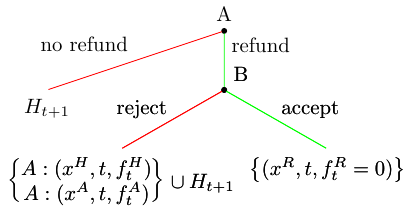}
            \caption{Tree for $H_t$ for $t<t^\varsigma$. If $t=t^e$, replace $H_{t+1}$ by $\varnothing$.}
            \label{fig:htlc-case1}
        \end{subfigure}
        \begin{subfigure}[b]{0.65\linewidth}
            \includegraphics[width=\linewidth]{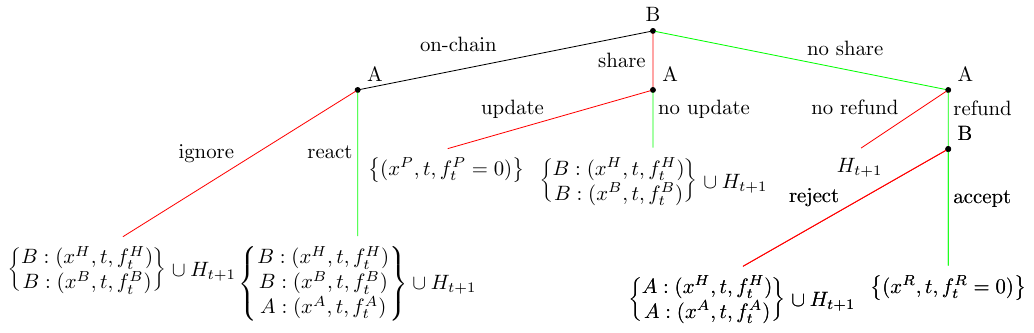}
            \caption{Tree for $H_t$ for $t\geq t^\varsigma$. If $t=t^e$, replace $H_{t+1}$ by $\varnothing$.}
            \label{fig:htlc-case2}
        \end{subfigure}
    \end{figure*}
    By $Y\cup H_{t+1}$ we mean that the game proceeds to $H_{t+1}$, with the set $Y$ added to every leaf. If a transaction triple is prefixed by ``$A:$'' or ``$B:$'', Alice or Bob posts that transaction and chooses the transaction fee at that node. If no prefix is present, the fee is zero, so the transaction is only included if no conflicting one is posted, modelling the channel remaining open with no fees paid. We indicate in red the IPB for $t<T$, which consists of waiting until the secret is known and then Bob revealing it, and in green the IPB for $t\geq T$, which consists of reverting the payment off-chain after the timelock expires.
\end{definition}
Note that setting $\P=\overline{\N}_0\times\N_0$ defines a parametrised application game $(\htlc^p)_{p\in\P}$. Each $\htlc^p$ is essentially a game tree with at each leaf a set of transaction triples. By playing the completed HTLC application game, implied by some $\beta$ and $\omega$, we can replace these sets of transaction triples by utilities for Alice and Bob. Standard computational tools could be used to solve this game, especially for larger $t^e$.

We now focus on $\htlc^p$ with $p=(0,T)$ and $T>0$, as this already captures most of the interesting behaviour\footnote{Indeed, at time $t=T$, any strategy leading to $H_{T+1}$ is strictly dominated by an action in $H_T$, where Alice will attempt to revert the payment or include $\tx{}{A}$. Moreover, a game with $t^\varsigma=t'>0$ and timelock $T>t'$ is equivalent to one with $t^\varsigma=0$ and timelock $T'=T-t'$. For $T=0$, the IPB in Definition~\ref{def:htlc-game} is not IC w.r.t.\ $\beta_0$, as Bob is indifferent between collaboratively updating the channel and closing it, paying at most $v$ in fees to get $x^B$ included (with probability $1/2$) if Alice also pays $v$ in fees to include $x^A$. Finally, for $t^\varsigma>T$, Bob never learns the secret before the game ends. The indicated IPB is again not IC, as Bob is indifferent between accepting and rejecting in round $t=T$.}.
For $\htlc^{(0,T)}$, the IPB from Definition~\ref{def:htlc-game} is not IC with respect to $\beta_0$, where $\beta_0$ is modelled by U-COBG with hashrate distribution $\lambdavec$. To make it IC, the red actions must be modified starting at $t=t^*_m(\overline{f^B},v;\lambdavec,T)$, where $t^*_m$ is defined as in Theorem~\ref{thm:tau} and $\overline{f^B}$ is the maximum fee Bob is willing to pay for $\tx{}{B}$. From this point on, Alice can force a channel closure and, with nonzero probability, censor $\tx{}{B}$, either requiring Bob to pay more than $\overline{f^B}$ to get $\tx{}{B}$ confirmed (at no loss for Alice) or censoring it entirely and profiting herself, yielding strictly higher expected utility for Alice. We do not explicitly describe the IC strategy profile.


Conversely, for any $\overline{f^B}$, Bob can choose $T>0$ such that $t^*_m(\overline{f^B},v;\lambdavec,T)>0$. Then play follows the $(\text{share,update})$ path in $H_0$, since deviation is irrational for either party\footnote{Assuming Alice prefers to keep the channel open; otherwise she is indifferent and may let Bob close and pay the fees.}. Thus, for suitable $T$ and $\overline{f^B}$, the IPB leads to $(\text{share,update})$ in $H_0$. For example, with hashrate distribution $(\lambda_0,\frac{1-\lambda_0}{2},\frac{1-\lambda_0}{2})$ and $\lambda_0=0.02$, Theorem~\ref{thm:tau} implies that for a standard Lightning HTLC timelock $T=144$, the HTLC value $v$ is at most about $36$ times $\overline{f^B}$. In other words, if we assume a payment channel holding $1\bitcoin$ between Alice and Bob, where Alice first holds $1\bitcoin$ and then transfers it to Bob, Bob can be sure he will receive his $1\bitcoin$ in case of a malicious closure if he is willing to pay $0.028\bitcoin$ in fees.


\smallskip\noindent\textbf{Two PCs
.} 
The HTLC within a PC is arguably among the simplest protocols in the blockchain setting. As we just saw, even this basic construction can fail to be IC. Nonetheless, under certain conditions, Alice and Bob may still adhere---at least partially---to the IPB. We now extend our analysis to a setting with two PCs, each containing an HTLC. While incentive compatibility under $\gvec$-composition is clearly unattainable, this example offers insight into how compositional effects can reshape incentive structures.

To this end, consider two PC: one between Alice and Bob, and one between Charlie and Dave. Both channels set up HTLCs at $t=0$, with timelocks $T_1,T_2$, secrets $\varsigma_1,\varsigma_2$, and values $(v_A,v_B,v_1),(v_C,v_D,v_2)$, respectively. We model both PCs by parametrised HTLC application games $(\htlc_1^{(t^\varsigma_1,t^e_1)})_{(t_1^\varsigma,t_1^e)\in\P}$ and $(\htlc_2^{(t^\varsigma_2,t^e_2)})_{(t_2^\varsigma,t_2^e)\in\Q}$ with $\P=\Q=\overline{\N}_0\times\N_0$. Using suitable compositions, we can analyse cases where one channel influences the other. For simplicity, assume $t_1^e=t_2^e=T:=T_1\vee T_2$. As in the single-channel case, this only removes dominated strategies.

We now study, w.l.o.g., two compositions of $(\htlc^{(p,T)}_1)_{p\in\overline{\N}_0}$ and $(\htlc^{(q,T)}_2)_{q\in\overline{\N}_0}$. Trivially, we could define the collection $\gvec^{ind}=(g_q^{ind})_{q\in\overline{\N}_0}$ where for all $q\in\overline{\N}_0$, $g_q^{ind}(\sigma_{\ag,2})=q$ for every $\sigma_{\ag,2}\in\Sigma_{\ag,2}^{(q,T)}$. This ``independent'' composition models two unrelated channels, where Bob and Dave learn $\varsigma_1$ and $\varsigma_2$ simultaneously at time $q\in\overline{\N}_0$. Note that, had we defined the ``HTLC in PC''-IPB differently, Theorem~\ref{thm:constant-composition} would imply security of the modified protocols under $\gvec^{ind}$-composition.

Another collection is $\gvec^{dep}=(g_q^{dep})_{q\in\overline{\N}_0}$ where for all $q\in\overline{\N}_0$, $g_q^{dep}(\sigma_{\ag,2})=\tau_C(\sigma_{\ag,2})$ for every $\sigma_{\ag,2}\in\Sigma_{\ag,2}^{(q,T)}$, where $\tau_C(\sigma_{\ag,2})\in\overline{\N}_0$ is the time Charlie learns the secret in case strategy profile $\sigma_{\ag,2}\in\Sigma_{\ag,2}^{(q,T)}$ is chosen. This ``dependent'' composition models the case where Charlie and Bob are the same entity, or where Charlie immediately shares the secret with Bob. Whether $\varsigma_1=\varsigma_2$ or $\varsigma_1\neq \varsigma_2$ is irrelevant, as the composition represents Bob figuring out $\varsigma_1$ as soon as Charlie learns $\varsigma_2$.

Extra requirements are imposed for security under $\gvec^{dep}$-composition. In particular, consider a CR where Alice and Dave form one party, and Bob and Charlie another, yielding channels Dave–Charlie ($\htlc_1$) and Charlie–Dave ($\htlc_2$). Then $T_1$ must be sufficiently larger than $T_2$ to prevent Dave from deviating and causing Charlie to lose funds.
\begin{theorem}[\ref{subsec:thm:2htlc-timelock}]
\label{thm:2htlc-timelock}
    In the above setting, with the usual U-COBG, it is necessary in order for Dave not to deviate from the IPB, that $t^*_m(\overline{f^C_1},v_1;\lambdavec,T_1)+1\geq t^*_m(\overline{f^D_2},v_2;\lambdavec,T_2)$.
\end{theorem}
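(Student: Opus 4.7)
The proof plan is a contrapositive: assuming $t^*_m(\overline{f^C_1},v_1;\lambdavec,T_1)+1 < t^*_m(\overline{f^D_2},v_2;\lambdavec,T_2)$, I will exhibit a strictly profitable deviation by Dave. Recall that in the relevant CR, Dave plays Alice's role in $\htlc_1$ (the sender, who can refund $v_1$) and Bob's role in $\htlc_2$ (the receiver, who can claim $v_2$), while Charlie plays the complementary roles and under $\gvec^{dep}$ learns $s_1$ precisely when $s_2$ becomes public. Under the IPB both channels update off-chain, no fees are paid, and Dave's net baseline payoff is $v_2-v_1$.

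The proposed deviation is: in $\htlc_2$, Dave refuses the off-chain update, waits until round $t^\star := t^*_m(\overline{f^D_2},v_2;\lambdavec,T_2)-1$ and then posts the closing transaction $\tx{}{H}$ together with his claim $\tx{}{B}$ for $\htlc_2$ at an arbitrarily small fee; in $\htlc_1$, he posts his refund $\tx{}{A}$ with a competing bribe (up to $v_1$) as soon as Charlie broadcasts Charlie's claim. To see that the $\htlc_2$ claim is confirmed deterministically, apply Theorem~\ref{thm:tau} to the pair $(\tx{}{B},\tx{}{A})$ in $\htlc_2$: every miner is still in the include regime for any round strictly less than $t^*_m(\overline{f^D_2},v_2;\lambdavec,T_2)$, and in particular at $t^\star$. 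Hence Dave collects $v_2$ up to a vanishing fee, $s_2$ is revealed in the block at round $t^\star$, and by $\gvec^{dep}$ Charlie immediately learns $s_1$ and broadcasts his $\htlc_1$ claim in the same round.

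Next I analyse $\htlc_1$ from round $t^\star$ onwards. By the hypothesis, $t^\star = t^*_m(\overline{f^D_2},v_2;\lambdavec,T_2)-1 \geq t^*_m(\overline{f^C_1},v_1;\lambdavec,T_1)+1 > t^*_m(\overline{f^C_1},v_1;\lambdavec,T_1)$, so every miner is already in the censor regime for Charlie's $\tx{}{B}$ versus Dave's $\tx{}{A}$. Theorem~\ref{thm:tau} then yields a strictly positive censoring probability $1-p(T_1-t^\star;\lambdavec,\overline{f^C_1},v_1)>0$, under which event $\tx{}{A}$ is confirmed at $T_1$ and Dave is refunded $v_1$. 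Even in the worst case where Dave sets his bribe equal to $v_1$, his net from $\htlc_1$ in this event is $0$, strictly better than the IPB value $-v_1$; with the complementary probability his net matches the IPB, so the expected gain from $\htlc_1$ relative to the IPB is strictly positive.

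Putting the pieces together, Dave's total expected utility under the deviation strictly exceeds the baseline $v_2-v_1$, contradicting the assumption that Dave follows the IPB. The main obstacle is the careful matching of the two censoring thresholds via the round count $t^\star$: one must justify that $t^\star$ is simultaneously small enough for Dave to safely reveal $s_2$ in $\htlc_2$ and large enough for him to threaten censorship in $\htlc_1$. This is exactly what the hypothesis encodes, and is where the ``$+1$'' in the theorem statement enters, reflecting the round in which Charlie reacts to Dave's on-chain reveal; the remainder of the argument is routine bookkeeping using Theorem~\ref{thm:tau} and the fee conventions of Section~\ref{subsec:opt_strat}.
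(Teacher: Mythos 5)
Your deviation is the same one the paper uses---argue by contraposition, have Dave withhold $s_2$ so that Charlie only learns $s_1$ after the largest miner's censoring threshold in $\htlc_1$ has passed, collect $v_2$ in $\htlc_2$ without risk, and then bribe miners to censor Charlie's claim in $\htlc_1$---but the step where you establish \emph{strict} profitability fails. You write that ``even in the worst case where Dave sets his bribe equal to $v_1$, his net from $\htlc_1$ in this event is $0$, strictly better than the IPB value $-v_1$.'' This mixes two baselines. Under the IPB Dave ends up without the locked $v_1$ (it goes to Charlie); under a successful censorship with a bribe of exactly $v_1$ he recovers the $v_1$ but pays all of it out as a fee, so he also ends up without it. The two outcomes give Dave identical utility, the deviation is at best weakly profitable, and the contrapositive does not go through. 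The missing ingredient is precisely the paper's key technical step: Dave must bribe with $v_1-\varepsilon$ for some $\varepsilon>0$, and one must argue that $\varepsilon$ can be chosen so that the censoring threshold is unaffected, i.e., that $t^*_m(\overline{f^C_1},v_1-\varepsilon;\lambdavec,T_1)$ still equals the round at which Charlie posts his claim. This is not automatic, because lowering the bribe \emph{delays} the onset of censoring ($\rho_m$ decreases as the delayed transaction's fee decreases), so an unquantified reduction could destroy the censoring event entirely; the paper pins this down via the integrality of $\ceil{\rho_m}$. Only with such an $\varepsilon$ does the positive-probability censoring branch yield a strict gain.

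Two secondary issues. First, your claim that for $t>t^*_m(\overline{f^C_1},v_1;\lambdavec,T_1)$ ``every miner is already in the censor regime'' is backwards: by Theorem~\ref{thm:tau}, $t^*_j=T_1-\ceil{\rho_j}$ with $\rho_j$ nondecreasing in $j$, so the \emph{largest} miner is the first to start censoring and smaller miners may still be including at that round. What survives (and is all you need) is a strictly positive censoring probability. Second, delaying until $t^*_m(\overline{f^D_2},v_2;\lambdavec,T_2)-1$ and closing $\htlc_2$ on-chain ``at an arbitrarily small fee'' is shakier than necessary: that threshold is computed for a claim fee of $\overline{f^D_2}$, not a vanishing one. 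The paper's cleaner choice is to delay only until $t^*_m(\overline{f^C_1},v_1;\lambdavec,T_1)+1$, which under the contrapositive hypothesis is still strictly below the $\htlc_2$ threshold, so censoring there fails with probability one and the channel can simply be updated off-chain at no cost.
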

With $\lambdavec=(\lambda_0,\frac{1-\lambda_0}{2},\frac{1-\lambda_0}{2})$ and $\lambda_0=0.02$, if $T_1=288$, $T_2=144$, $v_1=v_2=:v$, if Charlie is only willing to pay up to $\frac{1}{60}v$ in transaction fees, whereas Dave is willing to pay up to $\frac{1}{3}v$, we have $t^*_m(\frac{1}{60}v,v;\lambdavec,288)+1=119<124=t^*_m(\frac{1}{3}v,v;\lambdavec,144)$, allowing Dave to only release the secret at $t=119$, updating the Charlie-Dave channel off-chain, while having a positive probability to censor Charlie trying to claim the HTLC in the Dave-Charlie channel. 

\smallskip\noindent\textbf{Three PCs
.} The parametrised game approach easily scales to multiple compositions. For example, consider three channels Alice–Bob, Bob–Charlie, and Charlie–Dave, each setting up an HTLC at $t=0$ with timelocks $T_1,T_2,T_3$, secrets $\varsigma_1,\varsigma_2,\varsigma_3$, and values $(v^A_1,v^B_1,v_1)$, $(v^B_2,v^C_2,v_2)$, and $(v^C_3,v^D_3,v_3)$. Assume the timelocks satisfy the condition of Theorem~\ref{thm:2htlc-timelock}, mutatis mutandis. The PCs are modelled by parametrised HTLC application games $(\htlc_1^{(t_1^\varsigma,T)})_{t_1^\varsigma\in\overline{\N}_0}$, $(\htlc_2^{(t_2^\varsigma,T)})_{t_2^\varsigma\in\overline{\N}_0}$ and $(\htlc_3^{(t_3^\varsigma,T)})_{t_3^\varsigma\in\overline{\N}_0}$, where $T:=T_1\vee T_2\vee T_3$. Using suitable compositions, we can model Bob–Dave collusion so that Bob learns $\varsigma_1$ when Dave learns $\varsigma_3$. Specifically, we can study the composition $\htlc:=\htlc_1\circ_{\gvec^{12}}\htlc_2\circ_{\gvec^{23}}\htlc_3$, where $\gvec^{23}$ would be like $\gvec^{dep}$, i.e., $\gvec^{23}=(g_r^{23})_{r\in\overline{\N}_0}$ with for all $r\in\overline{\N}_0$, $g^{23}_r(\sigma_{\ag,3})=\tau_D(\sigma_{\ag,3})$
, and where $\gvec^{12}$ would be like $\gvec^{ind}$, i.e., $\gvec^{12}=(g^{12}_q)_{q\in\overline{\N}_0}$ with for all $q\in\overline{\N}_0$, $g^{12}_q(\sigma_{\ag,2})=q$ 
. From now on, we use ``Dave'' to denote both Bob and Dave.
\begin{theorem}[\ref{subsec:thm:wormhole}]
\label{thm:wormhole}
    In the above setting with U-COBG, Dave deviates from the IPB in $\htlc_1\circ_{\gvec^{12}}\htlc_2\circ_{\gvec^{23}}\htlc_3$ if $v_2>v_3$.
\end{theorem}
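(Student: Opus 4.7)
The plan is to exhibit an explicit joint deviation by the Bob--Dave coalition (hereafter \emph{Dave}, following the convention of the statement) and to show that under the hypothesis $v_2>v_3$ its coalition utility strictly exceeds that of the IPB. I rely throughout on two earlier results: the timelock-ordering condition of Theorem~\ref{thm:2htlc-timelock}, applied mutatis mutandis to each adjacent pair of channels in the three-channel setting, and the COMG miner analysis of Theorem~\ref{thm:tau}, which determines the probability of censorship when a competing transaction becomes valid later.

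First I would compute the coalition utility under the IPB in $\htlc_1\circ_{\gvec^{12}}\htlc_2\circ_{\gvec^{23}}\htlc_3$. Dave learns $s_3$ at round $r$ and plays the \emph{share}/\emph{update} path of $\htlc_3$, receiving $v_3$ off-chain. Through $\gvec^{23}$ this propagates knowledge of $s_2$ to Charlie, who follows the analogous path in $\htlc_2$ and credits Bob with $v_2$ off-chain. Through $\gvec^{12}$ Bob learns $s_1$ at the same round $r$ and updates channel~$1$ with Alice, receiving $v_1$ off-chain. All updates are collaborative, so no fees are paid, and the aggregated payoff of Bob and Dave is $u_{\mathrm{IPB}}=(v_1-v_2)+v_3$.

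Next I would construct the deviation $\tilde\sigma$. Bob still plays the IPB in $\htlc_1$ and takes the collaborative off-chain update with Alice worth $v_1$; feasibility of this step follows from the timelock hypothesis together with Theorem~\ref{thm:tau}, which drives the probability that Alice can censor Bob's $\tx{B}{}$ to zero on the window $[r,T_1]$. In $\htlc_3$, however, Dave plays \emph{no share} in every round and never posts an on-chain spend, so $s_3$ is never revealed; by the semantics of $\gvec^{23}$ the value $\tau_D(\sigma_{a,3})$ that parametrises $\htlc_2$ then lies beyond $T_2$, so Charlie never learns $s_2$ in time and is restricted to the refund path, accepting Bob's off-chain refund at $T_2$. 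Consequently the $v_2$ Bob had locked for Charlie returns to Bob, and the $v_3$ Charlie had locked for Dave is refunded to Charlie. The coalition payoff is $u_{\mathrm{dev}}=v_1$, so $u_{\mathrm{dev}}-u_{\mathrm{IPB}}=v_1-((v_1-v_2)+v_3)=v_2-v_3>0$ whenever $v_2>v_3$; thus $\tilde\sigma$ is a strictly profitable joint deviation from the IPB for the Bob--Dave CR, which is the claim.

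The main obstacle I expect is arguing rigorously that the deviation is consistent with the composed game tree---specifically, that Dave's persistent \emph{no share} in $\htlc_3$ together with abstention from on-chain spending really does push $\tau_D(\sigma_{a,3})$ beyond $T_2$, so that Charlie cannot act on $s_2$ before the channel-$2$ timeout, via the intended behaviour of the composition maps $g^{12}_q$ and $g^{23}_r$. A secondary, more routine step is ruling out a counter-deviation by Alice in $\htlc_1$ that would try to censor $\tx{B}{}$ and retain $v_1$; this is handled by Theorem~\ref{thm:tau} with the timelock chosen large relative to $r$ as in Section~\ref{subsec:2-htlc}, though one has to re-check that the relevant fee bound $\overline{f^B}$ and Bob's effective deadline still go through once composed with the disturbed $\htlc_2$ and $\htlc_3$ subgames.
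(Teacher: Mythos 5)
Your proposal is correct and follows essentially the same route as the paper's proof: exhibit the deviation in which the Bob--Dave coalition withholds $s_3$ forever (so $\tau_D=\infty$ forces the refund paths in $\htlc_3$ and $\htlc_2$) while still completing $\htlc_1$ via the independent $\gvec^{12}$ link, and compare coalition payoffs to obtain a gain of exactly $v_2-v_3>0$. The only difference is cosmetic --- you work with net payoff deltas and flag the (unneeded for this claim) counter-deviation by Alice, whereas the paper computes absolute balances after normalising $t_3^s=0$.
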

In essence, Theorem~\ref{thm:wormhole} shows that Dave can \emph{steal} the routing fee intended for Charlie, since typically $v_2=v_3+f_{route}^C>v_3$ to compensate Charlie for routing the payment. Charlie believes the payment failed, while Dave captures $f_{route}^C$ by sharing the secret directly to Alice (via Bob). This demonstrates that the HTLC construction used in systems such as the Lightning Network is vulnerable to the \emph{wormhole attack} \cite{malavolta2018anonymous}. Known mitigations would, in our framework, rule out the $\gvec^{12}$-composition and instead enforce a dependent composition of $\htlc_1$ and $\htlc_2$, similar to $\gvec^{23}$.

\smallskip\noindent\textbf{CRAB PCs
.} Apart from the timelock bribing on HTLCs described in this section, one can mount bribing attacks to post outdated channel commitment transactions and censor the ensuing punishment transaction. CRAB PCs \cite{aumayr2024securing} do not have this vulnerability as participants must put in a collateral $c$ given to the miner by the punishment transaction. \CameraOnly{As can be seen in the full version,}\FullOnly{As can be seen in Appendix~\ref{app:timelock-examples},} we conclude just like \cite{aumayr2024securing} that setting $c>\frac{v}{2}$, with $v$ the channel capacity, ensures rational participants will not post old commitment transactions.

\FullOnly{
    \subsection{Sandwich Attack}
    \label{subsec:mev}
    Consider a DeFi smart contract where a user $U$ places a buy limit order $\tx{U}{}$ at price $l$, while the current price is $l - p$ for some $p \geq 0$. If included directly, $U$ gains utility $p$. However, a miner $j$ can front-run with $\tx{F,j}{}$ (buying at price $l - p$), and back-run with $\tx{B,j}{}$ (selling at price $l$), capturing the profit $p$ and leaving $U$ with zero utility.
    Until now, the network game has been kept intentionally simple: all players in the blockchain game receive the full set of transaction triples produced by the application game. A richer network game could now be introduced in which $U$ decides to which miners to send $\tx{U}{}$. This models limited propagation which can influence strategic behaviour. Although miners might already know $U$ wants to make the trade $\tx{U}{}$, they would only be willing to sandwich $\tx{U}{}$ if they have actually received $\tx{U}{}$ as input to the subsequent blockchain game. A more detailed analysis in Appendix~\ref{app:mev} shows how $U$ remains vulnerable to a sandwich attack, unless we introduce a trusted miner who would only include $\tx{U}{}$. Future work could explore how repeated interactions in the presence of such a trusted miner may incentivise rational miners to not perform sandwich attacks in order to gain future fees of transactions that a user would otherwise not choose to share with that miner.

}

\CameraOnly{
    \subsection{Extensions and Broader Applications}
    \label{subsec:general-examples}
    Our framework is designed to facilitate compositional reasoning in complex blockchain environments where incentive alignment arises from the interaction of multiple protocols and layers. While earlier case studies have formalised specific constructions, the framework naturally generalises to a broader class of real-world settings. This subsection first briefly describes how modelling the network layer helps capturing these real-world settings. Then, we outline three principal categories that reflect common compositional patterns in practice, each introducing distinct modelling requirements and incentive dynamics. Detailed examples are provided in the full version. 
    
    \smallskip\noindent\textit{The network game.} The application game outputs a set of transaction triples, but these do not immediately become inputs to the blockchain game. In practice, transactions are relayed over a communication network, whose structure determines how and when they are observed. Mempool visibility may vary and adversarial actors can selectively delay or withhold propagation, leading to diverging views and strategic behaviour amongst players. We can capture the features of the network in the \emph{network game}. Through this game, both application game players and validators can strategise to determine with which transactions each validator starts the blockchain game. In settings where the ordering of transactions within blocks in the blockchain might be relevant, such as Maximal Extractable Value (MEV) \cite{daian2020flash}, a flexible network game is crucial. It allows validators to exploit their privileged position to determine the transaction ordering: potentially inserting, reordering, and leaving out transactions given the actions from the application layer. For a more detailed account on the network game, we refer to the full version.\pim{please check} 
}

\FullOnly{
    \subsection{Broader Applications}
    \label{subsec:general-examples}
    Our framework is designed to facilitate compositional reasoning in complex blockchain environments where incentive alignment arises from the interaction of multiple protocols and layers. While earlier case studies have formalised specific constructions, the framework naturally generalises to a broader class of real-world settings. This subsection outlines three principal categories that reflect common compositional patterns in practice, each introducing distinct modelling requirements and incentive dynamics. Detailed examples are provided in Appendix~\ref{app:examples}. 
}

\smallskip\noindent\textit{Cross-application composition.} Multiple decentralised applications (dApps) share a common blockchain infrastructure and may interact implicitly through shared state or timing dependencies. Representative examples include arbitrage between decentralised exchanges (DEXs) or oracle-driven feedback loops, where strategic behaviour emerges only from the interplay of otherwise independent systems.

\smallskip\noindent\textit{Cross-blockchain composition.} Protocols such as atomic swaps coordinate actions across independent blockchains with distinct execution environments and trust models. By treating each chain as a separate game and specifying inter-chain dependencies explicitly, our framework enables modular analysis under heterogeneous assumptions.

\smallskip\noindent\textit{Complex Network and Multi-layer Dynamics.}\pim{Change?} Architectures like proposer-builder separation (PBS) span the application, network, and consensus layers. Our model treats each layer explicitly, supporting rigorous reasoning about how network-level logic (e.g., auctions among builders) and consensus-level inclusion policies jointly affect incentives. 

These categories illustrate how the framework accommodates modern protocol designs that defy traditional single-layer analysis. Concrete examples corresponding to each setting are outlined informally in \CameraOnly{the full version}\FullOnly{Appendix~\ref{app:examples}}, 
focusing on how the framework could be instantiated to capture the relevant dynamics. Although these use cases are not fully formalised, they demonstrate how modular reasoning can be leveraged to isolate critical assumptions, detect edge-case vulnerabilities, and explore design alternatives.

Crucially, each category supports meaningful questions that would be difficult to approach without a layered and compositional model. In cross-application scenarios, such a model enables formal reasoning about whether composability introduces profitable but unintended deviations, e.g., does arbitrage across DEXs destabilise pricing mechanisms, or can oracle design be hardened against feedback-driven manipulation? 
In cross-chain protocols, our framework provides the structure to analyse swap soundness under heterogeneous security assumptions, e.g., how do differences in block times, censorship resistance, or finality affect incentive compatibility?
Finally, in multi-layer systems such as PBS and MEV auctions, the framework enables principled mechanism design: which auction formats discourage censorship? When does exclusive order flow lead to centralisation? How should fees and rebates be structured to align incentives across builders, searchers, and proposers? 
These examples demonstrate how the compositional design of our framework provides a modular and rigorous foundation for evaluating incentive properties in emerging blockchain protocols.

\section{Related Work}
\label{subsec:related-work}

\textit{Layer-specific models.}  
A large body of work studies incentive mechanisms within isolated layers of the blockchain stack. 
At the consensus layer, researchers have analysed block rewards~\cite{eyal2018majority,budish2024economic,chen2019axiomatic} and transaction fee mechanisms~\cite{bahrani2024transactionfc,bahrani2024transaction,roughgarden2024transaction}. Other works model consensus dynamics using rational agents~\cite{abraham2013distributed, aiyer2005bar, amoussou2019rationals, kiayias2016blockchain, bentov2021tortoise, kiayias2021coalition, liao2017incentivizing, winzer2019temporary, liu2019survey, pass2017fruitchains, budish2024economic}. Similarly, other works restrict their analysis on the network layer, e.g.~\cite{babaioff2012bitcoin}. Several works conduct a rational analysis on the application layer, either targeting specific applications like auctions~\cite{chitra2023credible} or off-chain systems like payment channels~\cite{avarikioti2020ride,avarikioti2023muskateer,avarikioti2020cerberus,rain2021towards,aumayr2025xtrasnfer}.
However, these results assume fixed behaviour from other layers and do not account for strategic interactions across layers or applications. In contrast, our framework explicitly models such cross-layer and cross-application dependencies through the cross-layer game and the composition of application games. 

\textit{Monolithic protocol analyses.}  
Some works jointly analyse protocol layers, especially in Layer-2 constructions where application behaviour and miner incentives interact~\cite{tsabary2021mad, wadhwa2022helium,aumayr2024securing,avarikioti2022suborn,avarikioti2019brick,avarikioti2025thunderdome,doe2023incentive}. For instance,~\cite{tsabary2021mad,wadhwa2022helium} model how rational miners affect the execution of HTLCs. However, these analyses are monolithic, capturing the entire system in a single game. This limits extensibility and compositional reasoning: security properties must be re-derived from scratch for each new protocol interaction. Our approach instead defines modular game components---called games---with formally specified interfaces, allowing reuse and composition across protocols.

\textit{MEV and dynamic incentives.}  
Daian et al.~\cite{daian2020flash} introduced the idea of Maximal Extractable Value (MEV), showing how application-level transactions affect miner strategies. Follow-up work has proposed incentive-aligned auctions and mitigations~\cite{doe2023incentive}, but remains tied to specific systems and does not generalise to arbitrary protocol interactions. \CameraOnly{The full version of our}\FullOnly{Our} framework abstracts MEV via the network game, positioned between the application and blockchain layers, enabling systematic analysis across diverse protocol designs.


\textit{Cross-application and cross-layer security.}  
More closely related to our work is Zappala et al.~\cite{zappala2021framework}, who aim to formalise secure protocol composition. Unlike our framework, theirs assumes subgame independence rather than deriving it, limiting applicability to interdependent protocols. For instance, in their Lightning Network~\cite{poon2016bitcoin} analysis, HTLCs are treated as independent if posted on time, avoiding strategic deviations. Their security conditions omit dynamic strategic behaviours, whereas our framework supports compositional reasoning without assuming independence. This enables analysis of incentive-driven interactions such as adversarial timing and MEV extraction across concurrent protocols.

\textit{Compositional game theory.}  
Our work is inspired by compositional game theory (CGT)~\cite{ghani2016compositional}, which uses category theory to model complex systems as compositions of modular components called open games. The games defined in Section~\ref{sec:model} can each be translated into open games. For instance, a parametrised application game $(\G^p)_{p\in\P}=(N^p,X^p,\Sigma^p,\pi^p)$ corresponds to an open game $\G:(\P,\qty{*})\to(Z,R)$, where $Z = \qty{\pi^p(\sigma^p):p\in\P,\sigma^p\in\Sigma^p}$ and $R = \R$. This open game is a 4-tuple $\G=(\Sigma,f_P,f_C,f_B)$, where $\Sigma = \prod_{p\in\P}\Sigma^p$, $f_P:\Sigma\times\P\to Z$ is defined by $f((\sigma^q)_{q\in\P},p):=\pi^p(\sigma^p)$, and $f_C:\Sigma\times\P\times R\to\qty{*}$ is trivial since application games offer no coutility. The best response function $f_B:\P\times(Z\to R)\to\text{Rel}(\Sigma)$ maps $(p,u)$ to the best response relation on $\Sigma\times\Sigma$ (intuitively: $(\sigma,\sigma)\in f_B(p,u)$ if $\sigma$ is a Nash equilibrium). The utility function $u:Z\to R$ represents the utility derived from the application outcome, based on the subsequent network and blockchain (open) games and the execution function.


This approach provides a rigorous and elegant language for composing games, but remains largely theoretical, offering few concrete results on how strategic behaviour composes. In contrast, our framework instantiates the principles of CGT in the blockchain setting, leveraging the layered architecture to define explicit interfaces (e.g., transaction propagation and fee mechanisms) and enabling strategic reasoning across interdependent protocols. This yields concrete results on properties of specific (classes of) protocols, and how they change under composition, making the framework practical and promoting broader adoption of game-theoretic security reasoning in blockchain design.


\textit{Rational protocol design and simulation-based models.}
Our approach differs fundamentally from Rational Protocol Design (RPD)~\cite{garay2013rational,badertscher2018but}, which models security as a Stackelberg game between a protocol designer and an attacker, with utilities defined via ideal functionalities in the simulation paradigm. RPD's “composition” theorem refers to subroutine replacement, enabling game-theoretic analysis in a hybrid model where cryptographic components are idealised. The theorem ensures RPD-security when these are instantiated by implementations satisfying cryptographic (i.e. non-game-theoretic) security. This addresses issues in earlier rational models where, for example, assuming the existence of a secure communication channel turned out to not preserve equilibria when instantiated with a public-key infrastructure on insecure channels. In contrast, we view composition as building a rational protocol from rational parts, composing them cross-layer and cross-application, enabling reasoning about equilibria in composed games.

\textit{Automated tools.}
Recent automated tools for reasoning about rational security, such as CheckMate~\cite{brugger2023checkmate}, focus on verifying strategic properties in isolated protocols using symbolic game representations. These tools provide valuable automation but do not model layered architectures or inter-protocol interactions. In contrast, our framework is able to analyse systems spanning multiple layers and applications.

\textit{This work.}  
While prior works offer valuable insights into rational security, they either focus narrowly on isolated components, assume idealised independence between interacting parts, lack a modular framework for capturing the dynamic nature of incentive interactions across protocols and layers, or are too abstract to yield concrete results in blockchain settings. Our framework fills this gap by enabling modular, game-theoretic analysis of blockchain systems. 

\section{Conclusion}
\label{sec:conclusion}
In this work, we presented a compositional framework for analysing the game-theoretic security of blockchain protocols. Unlike traditional approaches that analyse individual protocols in isolation or assume fixed-layer behaviour, our model embraces the modular structure of modern blockchain ecosystems by decomposing them into layered games. Each layer---application, blockchain, and potentially network---is formalised as a strategic game, and interactions are expressed through explicitly defined interfaces. This layered architecture allows us to reason not only about each component's behaviour but also about how incentives propagate across the system.

At the core of our approach lie the definitions of cross-layer games and cross-application composition, which enable reasoning about concurrent protocol execution and emerging vulnerabilities such as censorship or MEV. These abstractions support modular security analysis: properties of individual layers can be verified and then composed, facilitating the study of complex interactions without rederiving the entire system’s behaviour. Our use of parametrised games further introduces modelling flexibility, allowing protocols to interleave in time or through shared components.

Through several detailed case studies, we demonstrate the expressiveness and utility of our framework. These examples show how existing incentive misalignments can be captured and formalised, revealing vulnerabilities and new levers for robust protocol design.

This work also opens several directions for future research. First, one should determine the possibility of fully separating the analysis of application and blockchain layers without limiting the framework applicability to a sequential composition of application and blockchain games. Second, extending the framework further to capture richer network models, including asynchronous message propagation, selective delivery, or targeted censorship, would reflect the operational characteristics of blockchain networks even more faithfully. 
Third, formalising the translation from protocol specifications to application games is essential to enable broader applicability and automation. Fourth, identifying classes of protocols that satisfy compositional incentive compatibility under various assumptions would clarify the framework's expressive limits.

Of separate interest is the further exploration of Censor-Only Blockchain Games; in particular the derivation of exact expressions for the censoring probability in the unpredictable and locally predictable variants, and identifying when each variant is less prone to timelock bribing. Finally, generalising the framework to support parametrised families of blockchain environments, e.g., based on hashrate distributions or latency assumptions, and exploring alternative definitions of incentive compatibility could yield practically relevant, robust security guarantees across diverse deployments. 

Our framework bridges a long-standing gap between incentive-aware protocol analysis and modular security reasoning. It provides a foundation for understanding the strategic behaviour of blockchain participants in complex, multi-protocol environments and contributes to the development of provably secure, incentive-aligned decentralised systems.

\bibliographystyle{IEEEtran}
\bibliography{references}

@inproceedings{aumayr2024securing,
  title={Securing Lightning Channels against Rational Miners},
  author={Aumayr, Lukas and Avarikioti, Zeta and Maffei, Matteo and Mazumdar, Subhra},
  booktitle={Proceedings of the 2024 on ACM SIGSAC Conference on Computer and Communications Security},
  pages={393--407},
  year={2024}
}

@misc{poon2016bitcoin,
  title={The bitcoin lightning network: Scalable off-chain instant payments},
  author={Poon, Joseph and Dryja, Thaddeus},
  year={2016}
}

@inproceedings{tsabary2021mad,
  author={Tsabary, Itay and Yechieli, Matan and Manuskin, Alex and Eyal, Ittay},
  booktitle={2021 IEEE Symposium on Security and Privacy (SP)}, 
  title={{MAD-HTLC}: Because {HTLC} is Crazy-Cheap to Attack}, 
  year={2021},
  volume={},
  number={},
  pages={1230-1248},
  keywords={Privacy;Smart contracts;Bitcoin;Tools;Software;Blockchains;Security;Blockchain;Smart-contracts;Mechanism-design},
  doi={10.1109/SP40001.2021.00080}}

@misc{wadhwa2022helium,
      author = {Sarisht Wadhwa and Jannis Stoeter and Fan Zhang and Kartik Nayak},
      title = {He-{HTLC}: Revisiting Incentives in {HTLC}},
      howpublished = {Cryptology ePrint Archive, Paper 2022/546},
      year = {2022},
      doi = {10.14722/ndss.2023.24775},
      note = {\url{https://eprint.iacr.org/2022/546}},
      url = {https://eprint.iacr.org/2022/546}
}

@inproceedings{nadahalli2021timelocked,
  title={Timelocked bribing},
  author={Nadahalli, Tejaswi and Khabbazian, Majid and Wattenhofer, Roger},
  Xbooktitle={Financial Cryptography and Data Security: 25th International Conference, FC 2021, Virtual Event, March 1--5, 2021, Revised Selected Papers, Part I 25},
booktitle={Financial Cryptography and Data Security, FC 2021, Part I 25},
  pages={53--72},
  year={2021},
  organization={Springer}
}

@inproceedings{wang2022arbitrage,
  title={Arbitrage attack: Miners of the world, unite!},
  author={Wang, Yuheng and Li, Jiliang and Su, Zhou and Wang, Yuyi},
  booktitle={International Conference on Financial Cryptography and Data Security},
  pages={464--487},
  year={2022},
  organization={Springer}
}

@inproceedings{malavolta2018anonymous,
  author       = {Giulio Malavolta and
                  Pedro Moreno{-}Sanchez and
                  Clara Schneidewind and
                  Aniket Kate and
                  Matteo Maffei},
  title        = {Anonymous Multi-Hop Locks for Blockchain Scalability and Interoperability},
  xbooktitle    = {26th Annual Network and Distributed System Security Symposium, {NDSS}
                  2019, San Diego, California, USA, February 24-27, 2019},
booktitle    = {26th Annual Network and Distributed System Security Symposium, {NDSS}
                  2019},
  publisher    = {The Internet Society},
  year         = {2019},
  url          = {https://www.ndss-symposium.org/ndss-paper/anonymous-multi-hop-locks-for-blockchain-scalability-and-interoperability/},
  timestamp    = {Thu, 14 Oct 2021 10:07:13 +0200},
  biburl       = {https://dblp.org/rec/conf/ndss/MalavoltaMSKM19.bib},
  bibsource    = {dblp computer science bibliography, https://dblp.org}
}

@inproceedings{zappala2021framework,
    TITLE = {{Game Theoretical Framework for Analyzing Blockchains Robustness}},
    author = {Zappal{\`a}, Paolo and Belotti, Marianna and Potop-Butucaru, Maria and Secci, Stefano},
    URL = {https://hal.science/hal-02634752},
    TYPE = {Research Report},
    BOOKTITLE = {{Leibniz International Proceedings in Informatics (LIPIcs)}},
    ADDRESS = {Freiburg, Germany},
    PUBLISHER = {{Schloss Dagstuhl}},
    SERIES = {Leibniz International Proceedings in Informatics (LIPIcs)},
    VOLUME = {209},
    PAGES = {42:1--42:18},
    INSTITUTION = {{Sorbonne Universit{\'e}, CNRS,  Laboratoire d'Informatique de Paris 6, LIP6, Paris, France}},
    YEAR = {2021},
    MONTH = Oct,
    DOI = {10.4230/LIPIcs.DISC.2021.42},
    PDF = {https://hal.science/hal-02634752v2/file/Robustness_techreport.pdf},
    HAL_ID = {hal-02634752},
    HAL_VERSION = {v2},
}

@inproceedings{brugger2023checkmate,
    author = {Brugger, Lea Salome and Kov\'{a}cs, Laura and Petkovic Komel, Anja and Rain, Sophie and Rawson, Michael},
    title = {CheckMate: Automated Game-Theoretic Security Reasoning},
    year = {2023},
    isbn = {9798400700507},
    publisher = {Association for Computing Machinery},
    Xaddress = {New York, NY, USA},
    url = {https://doi.org/10.1145/3576915.3623183},
    doi = {10.1145/3576915.3623183},
    booktitle = {Proceedings of the 2023 ACM SIGSAC Conference on Computer and Communications Security},
    pages = {1407–1421},
    numpages = {15},
    keywords = {automated reasoning, decentralized protocols, game theory, secure protocols, security analysis},
    location = {Copenhagen, Denmark},
    series = {CCS '23}
}

@inproceedings{abraham2013distributed,
  title={Distributed protocols for leader election: A game-theoretic perspective},
  author={Abraham, Ittai and Dolev, Danny and Halpern, Joseph Y},
  xbooktitle={Distributed Computing: 27th International Symposium, DISC 2013, Jerusalem, Israel, October 14-18, 2013. Proceedings 27},
booktitle={Distributed Computing: 27th International Symposium, DISC 2013h},
  pages={61--75},
  year={2013},
  organization={Springer}
}

@inproceedings{aiyer2005bar,
  title={BAR fault tolerance for cooperative services},
  author={Aiyer, Amitanand S and Alvisi, Lorenzo and Clement, Allen and Dahlin, Mike and Martin, Jean-Philippe and Porth, Carl},
  booktitle={Proceedings of the twentieth ACM symposium on Operating systems principles},
  pages={45--58},
  year={2005}
}

@article{amoussou2019rationals,
  title={Rationals vs byzantines in consensus-based blockchains},
  author={Amoussou-Guenou, Yackolley and Biais, Bruno and Potop-Butucaru, Maria and Tucci-Piergiovanni, Sara},
  journal={arXiv preprint arXiv:1902.07895},
  year={2019}
}

@inproceedings{garay2013rational,
  title={Rational protocol design: Cryptography against incentive-driven adversaries},
  author={Garay, Juan and Katz, Jonathan and Maurer, Ueli and Tackmann, Bj{\"o}rn and Zikas, Vassilis},
  booktitle={2013 IEEE 54th Annual Symposium on Foundations of Computer Science},
  pages={648--657},
  year={2013},
  organization={IEEE}
}

@inproceedings{badertscher2018but,
  title={But why does it work? A rational protocol design treatment of bitcoin},
  author={Badertscher, Christian and Garay, Juan and Maurer, Ueli and Tschudi, Daniel and Zikas, Vassilis},
  xbooktitle={Advances in Cryptology--EUROCRYPT 2018: 37th Annual International Conference on the Theory and Applications of Cryptographic Techniques, Tel Aviv, Israel, April 29-May 3, 2018 Proceedings, Part II 37},
booktitle={Advances in Cryptology--EUROCRYPT 2018: 37th Annual International Conference on the Theory and Applications of Cryptographic Techniques, 2018 Proceedings, Part II 37},
  pages={34--65},
  year={2018},
  organization={Springer}
}

@article{eyal2018majority,
  title={Majority is not enough: Bitcoin mining is vulnerable},
  author={Eyal, Ittay and Sirer, Emin G{\"u}n},
  journal={Communications of the ACM},
  volume={61},
  number={7},
  pages={95--102},
  year={2018},
  publisher={ACM New York, NY, USA}
}

@article{garay2024bitcoin,
author = {Garay, Juan A. and Kiayias, Aggelos and Leonardos, Nikos},
title = {The Bitcoin Backbone Protocol: Analysis and Applications},
year = {2024},
publisher = {Association for Computing Machinery},
Xaddress = {New York, NY, USA},
issn = {0004-5411},
url = {https://doi.org/10.1145/3653445},
doi = {10.1145/3653445},
note = {Just Accepted},
journal = {J. ACM},
month = {apr},
keywords = {Blockchain protocols, Proof of Work, Cryptocurrencies, Consensus}
}

@inproceedings{maofan2019hotstuff,
author = {Yin, Maofan and Malkhi, Dahlia and Reiter, Michael K. and Gueta, Guy Golan and Abraham, Ittai},
title = {HotStuff: BFT Consensus with Linearity and Responsiveness},
year = {2019},
isbn = {9781450362177},
publisher = {Association for Computing Machinery},
address = {New York, NY, USA},
url = {https://doi.org/10.1145/3293611.3331591},
doi = {10.1145/3293611.3331591},
pages = {347–356},
numpages = {10},
keywords = {scalability, responsiveness, consensus, byzantine fault tolerance, blockchain},
location = {Toronto ON, Canada},
series = {PODC '19}
}

@inproceedings{ghani2016compositional,
  author    = {Neil Ghani and Jules Hedges and Viktor Winschel and Philipp Zahn},
  title     = {Compositional Game Theory},
  booktitle = {Proceedings of the 33rd Annual ACM/IEEE Symposium on Logic in Computer Science (LICS)},
  pages     = {472--481},
  year      = {2018},
  publisher = {ACM},
  doi       = {10.1145/3209108.3209165}
}

@inproceedings{phil2019snowwhite,
author = {Daian, Phil and Pass, Rafael and Shi, Elaine},
title = {Snow White: Robustly Reconfigurable Consensus and Applications to Provably Secure Proof of Stake},
year = {2019},
isbn = {978-3-030-32100-0},
publisher = {Springer-Verlag},
address = {Berlin, Heidelberg},
url = {https://doi.org/10.1007/978-3-030-32101-7_2},
doi = {10.1007/978-3-030-32101-7_2},
pages = {23–41},
numpages = {19},
location = {St. Kitts, Saint Kitts and Nevis}
}

@inproceedings{yossi2017algorand,
author = {Gilad, Yossi and Hemo, Rotem and Micali, Silvio and Vlachos, Georgios and Zeldovich, Nickolai},
title = {Algorand: Scaling Byzantine Agreements for Cryptocurrencies},
year = {2017},
isbn = {9781450350853},
publisher = {Association for Computing Machinery},
address = {New York, NY, USA},
url = {https://doi.org/10.1145/3132747.3132757},
doi = {10.1145/3132747.3132757},
abstract = {Algorand is a new cryptocurrency that confirms transactions with latency on the order of a minute while scaling to many users. Algorand ensures that users never have divergent views of confirmed transactions, even if some of the users are malicious and the network is temporarily partitioned. In contrast, existing cryptocurrencies allow for temporary forks and therefore require a long time, on the order of an hour, to confirm transactions with high confidence.Algorand uses a new Byzantine Agreement (BA) protocol to reach consensus among users on the next set of transactions. To scale the consensus to many users, Algorand uses a novel mechanism based on Verifiable Random Functions that allows users to privately check whether they are selected to participate in the BA to agree on the next set of transactions, and to include a proof of their selection in their network messages. In Algorand's BA protocol, users do not keep any private state except for their private keys, which allows Algorand to replace participants immediately after they send a message. This mitigates targeted attacks on chosen participants after their identity is revealed.We implement Algorand and evaluate its performance on 1,000 EC2 virtual machines, simulating up to 500,000 users. Experimental results show that Algorand confirms transactions in under a minute, achieves 125x Bitcoin's throughput, and incurs almost no penalty for scaling to more users.},
booktitle = {Proceedings of the 26th Symposium on Operating Systems Principles},
pages = {51–68},
numpages = {18},
location = {Shanghai, China},
series = {SOSP '17}
}

@inproceedings{kiayias2017ouroboros,
  author       = {Aggelos Kiayias and
                  Alexander Russell and
                  Bernardo David and
                  Roman Oliynykov},
  editor       = {Jonathan Katz and
                  Hovav Shacham},
  title        = {Ouroboros: {A} Provably Secure Proof-of-Stake Blockchain Protocol},
  Xbooktitle    = {Advances in Cryptology - {CRYPTO} 2017 - 37th Annual International
                  Cryptology Conference, Santa Barbara, CA, USA, August 20-24, 2017,
                  Proceedings, Part {I}},
booktitle    = {Advances in Cryptology - {CRYPTO} 2017, Part {I}},
  series       = {Lecture Notes in Computer Science},
  volume       = {10401},
  pages        = {357--388},
  publisher    = {Springer},
  year         = {2017},
  url          = {https://doi.org/10.1007/978-3-319-63688-7\_12},
  doi          = {10.1007/978-3-319-63688-7\_12},
  timestamp    = {Mon, 28 Aug 2023 21:17:50 +0200},
  biburl       = {https://dblp.org/rec/conf/crypto/KiayiasRDO17.bib},
  bibsource    = {dblp computer science bibliography, https://dblp.org}
}

@inproceedings{kiayias2016blockchain,
  title={Blockchain mining games},
  author={Kiayias, Aggelos and Koutsoupias, Elias and Kyropoulou, Maria and Tselekounis, Yiannis},
  booktitle={Proceedings of the 2016 ACM Conference on Economics and Computation},
  pages={365--382},
  year={2016}
}

@inproceedings{bentov2021tortoise,
  title={Tortoise and hares consensus: the meshcash framework for incentive-compatible, scalable cryptocurrencies},
  author={Bentov, Iddo and Hub{\'a}{\v{c}}ek, Pavel and Moran, Tal and Nadler, Asaf},
  xbooktitle={Cyber Security Cryptography and Machine Learning: 5th International Symposium, CSCML 2021, Be'er Sheva, Israel, July 8--9, 2021, Proceedings 5},
booktitle={Cyber Security Cryptography and Machine Learning: 5th International Symposium, CSCML 2021},
  pages={114--127},
  year={2021},
  organization={Springer}
}

@inproceedings{kiayias2021coalition,
  title={Coalition-safe equilibria with virtual payoffs},
  author={Kiayias, Aggelos and Stouka, Aikaterini-Panagiota},
  booktitle={Proceedings of the 3rd ACM Conference on Advances in Financial Technologies},
  pages={71--85},
  year={2021}
}

@article{liu2019survey,
  title={A survey on blockchain: A game theoretical perspective},
  author={Liu, Ziyao and Luong, Nguyen Cong and Wang, Wenbo and Niyato, Dusit and Wang, Ping and Liang, Ying-Chang and Kim, Dong In},
  journal={IEEE Access},
  volume={7},
  pages={47615--47643},
  year={2019},
  publisher={IEEE}
}

@inproceedings{daian2020flash,
  title={Flash boys 2.0: Frontrunning in decentralized exchanges, miner extractable value, and consensus instability},
  author={Daian, Philip and Goldfeder, Steven and Kell, Tyler and Li, Yunqi and Zhao, Xueyuan and Bentov, Iddo and Breidenbach, Lorenz and Juels, Ari},
  booktitle={2020 IEEE symposium on security and privacy (SP)},
  pages={910--927},
  year={2020},
  organization={IEEE}
}

@article{doe2023incentive,
  author       = {Daniel Mawunyo Doe and
                  Jing Li and
                  Dusit Niyato and
                  Li Wang and
                  Zhu Han},
  title        = {Incentive Mechanism Design for Mitigating Frontrunning and Transaction
                  Reordering in Decentralized Exchanges},
  journal      = {{IEEE} Access},
  volume       = {11},
  pages        = {96014--96028},
  year         = {2023},
  url          = {https://doi.org/10.1109/ACCESS.2023.3236891},
  doi          = {10.1109/ACCESS.2023.3236891},
  timestamp    = {Sun, 24 Sep 2023 15:45:17 +0200},
  biburl       = {https://dblp.org/rec/journals/access/DoeLNWH23.bib},
  bibsource    = {dblp computer science bibliography, https://dblp.org}
}

@article{chitra2023credible,
  author       = {Tarun Chitra and
                  Matheus V. X. Ferreira and
                  Kshitij Kulkarni},
  title        = {Credible, Optimal Auctions via Blockchains},
  journal      = {CoRR},
  volume       = {abs/2301.12532},
  year         = {2023},
  url          = {https://doi.org/10.48550/arXiv.2301.12532},
  doi          = {10.48550/ARXIV.2301.12532},
  eprinttype    = {arXiv},
  eprint       = {2301.12532},
  timestamp    = {Wed, 01 Feb 2023 14:38:31 +0100},
  biburl       = {https://dblp.org/rec/journals/corr/abs-2301-12532.bib},
  bibsource    = {dblp computer science bibliography, https://dblp.org}
}

@inproceedings{babaioff2012bitcoin,
  author       = {Moshe Babaioff and
                  Shahar Dobzinski and
                  Sigal Oren and
                  Aviv Zohar},
  editor       = {Boi Faltings and
                  Kevin Leyton{-}Brown and
                  Panos Ipeirotis},
  title        = {On bitcoin and red balloons},
  Xbooktitle    = {Proceedings of the 13th {ACM} Conference on Electronic Commerce, {EC}
                  2012, Valencia, Spain, June 4-8, 2012},
booktitle    = {Proceedings of the 13th {ACM} Conference on Electronic Commerce, {EC}},
  pages        = {56--73},
  publisher    = {{ACM}},
  year         = {2012},
  url          = {https://doi.org/10.1145/2229012.2229022},
  doi          = {10.1145/2229012.2229022},
  timestamp    = {Tue, 16 Aug 2022 23:04:26 +0200},
  biburl       = {https://dblp.org/rec/conf/sigecom/BabaioffDOZ12.bib},
  bibsource    = {dblp computer science bibliography, https://dblp.org}
}

@inproceedings{liao2017incentivizing,
  title={Incentivizing blockchain forks via whale transactions},
  author={Liao, Kevin and Katz, Jonathan},
  Xbooktitle={Financial Cryptography and Data Security: FC 2017 International Workshops, WAHC, BITCOIN, VOTING, WTSC, and TA, Sliema, Malta, April 7, 2017, Revised Selected Papers 21},
booktitle={Financial Cryptography and Data Security: FC 2017 International Workshops, WAHC, BITCOIN, VOTING, WTSC, and TA},
  pages={264--279},
  year={2017},
  organization={Springer}
}

@INPROCEEDINGS{zhou2023sokdefi,
  author={Zhou, Liyi and Xiong, Xihan and Ernstberger, Jens and Chaliasos, Stefanos and Wang, Zhipeng and Wang, Ye and Qin, Kaihua and Wattenhofer, Roger and Song, Dawn and Gervais, Arthur},
  booktitle={2023 IEEE Symposium on Security and Privacy (SP)}, 
  title={SoK: Decentralized Finance (DeFi) Attacks}, 
  year={2023},
  volume={},
  number={},
  pages={2444-2461},
  keywords={Privacy;Smart contracts;Ecosystems;Finance;Decentralized applications;Frequency measurement;Security},
  doi={10.1109/SP46215.2023.10179435}}

@inproceedings{winzer2019temporary,
  title={Temporary censorship attacks in the presence of rational miners},
  author={Winzer, Fredrik and Herd, Benjamin and Faust, Sebastian},
  booktitle={2019 IEEE European Symposium on Security and Privacy Workshops (EuroS\&PW)},
  pages={357--366},
  year={2019},
  organization={IEEE}
}

@inproceedings{pass2017fruitchains,
  title={Fruitchains: A fair blockchain},
  author={Pass, Rafael and Shi, Elaine},
  booktitle={Proceedings of the ACM symposium on principles of distributed computing},
  pages={315--324},
  year={2017}
}

@inproceedings{chen2019axiomatic,
author = {Chen, Xi and Papadimitriou, Christos and Roughgarden, Tim},
title = {An Axiomatic Approach to Block Rewards},
year = {2019},
isbn = {9781450367325},
publisher = {Association for Computing Machinery},
url = {https://doi.org/10.1145/3318041.3355470},
doi = {10.1145/3318041.3355470},
booktitle = {Proceedings of the 1st ACM Conference on Advances in Financial Technologies},
pages = {124–131},
numpages = {8},
series = {AFT '19}
}

@article{roughgarden2024transaction,
  author       = {Tim Roughgarden},
  title        = {Transaction Fee Mechanism Design},
  journal      = {J. {ACM}},
  volume       = {71},
  number       = {4},
  pages        = {30:1--30:25},
  year         = {2024},
  url          = {https://doi.org/10.1145/3674143},
  doi          = {10.1145/3674143},
  timestamp    = {Sun, 19 Jan 2025 14:36:58 +0100},
  biburl       = {https://dblp.org/rec/journals/jacm/Roughgarden24.bib},
  bibsource    = {dblp computer science bibliography, https://dblp.org}
}

@inproceedings{bahrani2024transaction,
  author       = {Maryam Bahrani and
                  Pranav Garimidi and
                  Tim Roughgarden},
  editor       = {Rainer B{\"{o}}hme and
                  Lucianna Kiffer},
  title        = {Transaction Fee Mechanism Design in a Post-{MEV} World},
  xbooktitle    = {6th Conference on Advances in Financial Technologies, {AFT} 2024,
                  September 23-25, 2024, Vienna, Austria},
booktitle    = {6th Conference on Advances in Financial Technologies, {AFT} 2024
                  },
  series       = {LIPIcs},
  volume       = {316},
  pages        = {29:1--29:24},
  publisher    = {Schloss Dagstuhl - Leibniz-Zentrum f{\"{u}}r Informatik},
  year         = {2024},
  url          = {https://doi.org/10.4230/LIPIcs.AFT.2024.29},
  doi          = {10.4230/LIPICS.AFT.2024.29},
  timestamp    = {Mon, 16 Sep 2024 17:07:54 +0200},
  biburl       = {https://dblp.org/rec/conf/aft/BahraniGR24.bib},
  bibsource    = {dblp computer science bibliography, https://dblp.org}
}

@inproceedings{bahrani2024transactionfc,
  author       = {Maryam Bahrani and
                  Pranav Garimidi and
                  Tim Roughgarden},
  editor       = {Jurlind Budurushi and
                  Oksana Kulyk and
                  Sarah Allen and
                  Theo Diamandis and
                  Ariah Klages{-}Mundt and
                  Andrea Bracciali and
                  Geoffrey Goodell and
                  Shin'ichiro Matsuo},
  title        = {Transaction Fee Mechanism Design with Active Block Producers},
  Xbooktitle    = {Financial Cryptography and Data Security. {FC} 2024 International
                  Workshops - Voting, DeFI, WTSC, CoDecFin, Willemstad, Cura{\c{c}}ao,
                  March 4-8, 2024, Revised Selected Papers},
booktitle    = {Financial Cryptography and Data Security. {FC} 2024 International
                  Workshops - Voting, DeFI, WTSC, CoDecFin},
  series       = {Lecture Notes in Computer Science},
  volume       = {14746},
  pages        = {85--90},
  publisher    = {Springer},
  year         = {2024},
  url          = {https://doi.org/10.1007/978-3-031-69231-4\_6},
  doi          = {10.1007/978-3-031-69231-4\_6},
  timestamp    = {Thu, 12 Dec 2024 12:24:28 +0100},
  biburl       = {https://dblp.org/rec/conf/fc/BahraniGR24.bib},
  bibsource    = {dblp computer science bibliography, https://dblp.org}
}

@inproceedings{budish2024economic,
  author       = {Eric Budish and
                  Andrew Lewis{-}Pye and
                  Tim Roughgarden},
  editor       = {Dirk Bergemann and
                  Robert Kleinberg and
                  Daniela Sab{\'{a}}n},
  title        = {The Economic Limits of Permissionless Consensus},
  xbooktitle    = {Proceedings of the 25th {ACM} Conference on Economics and Computation,
                  {EC} 2024, New Haven, CT, USA, July 8-11, 2024},
booktitle    = {Proceedings of the 25th {ACM} Conference on Economics and Computation,
                  {EC} 2024},
  pages        = {704--731},
  publisher    = {{ACM}},
  year         = {2024},
  url          = {https://doi.org/10.1145/3670865.3673548},
  doi          = {10.1145/3670865.3673548},
  timestamp    = {Sun, 22 Dec 2024 15:48:36 +0100},
  biburl       = {https://dblp.org/rec/conf/sigecom/BudishLR24.bib},
  bibsource    = {dblp computer science bibliography, https://dblp.org}
}

@inproceedings{avarikioti2020ride,
  author    = {Zeta Avarikioti and
               Lioba Heimbach and
               Yuyi Wang and
               Roger Wattenhofer},
  title     = {Ride the Lightning: The Game Theory of Payment Channels},
  booktitle = {\text{Financial Cryptography and Data Security (FC)}},
  pages     = {264--283},
  year      = {2020},
  doi       = {10.1007/978-3-030-51280-4\_15},
}

@article{avarikioti2020cerberus,
  author    = {Zeta Avarikioti and
               Orfeas Stefanos Thyfronitis Litos and
               Roger Wattenhofer},
  title     = {Cerberus Channels: Incentivizing Watchtowers for Bitcoin},
  journal = {\text{Financial Cryptography and Data Security (FC)}},
  pages     = {346--366},
  year      = {2020},
  doi       = {10.1007/978-3-030-51280-4\_19},
}

@inproceedings{avarikioti2019brick,
    title={Brick: Asynchronous Incentive-Compatible Payment Channels},
    author={Zeta Avarikioti and Eleftherios Kokoris Kogias and Roger Wattenhofer and Dionysis Zindros},
    booktitle={\text{Financial Cryptography and Data Security (FC)}},
    year={2021}}

@inproceedings{avarikioti2022suborn,
    author       = {Zeta Avarikioti and
		    Orfeas Stefanos Thyfronitis Litos},
    title        = {Suborn Channels: Incentives Against Timelock Bribes},
    booktitle={{Financial Cryptography and Data Security (FC)}},
    year={2022}
}

@article{rain2021towards,
  author    = {Sophie Rain and
               Zeta Avarikioti and
               Laura Kov{\'{a}}cs and
               Matteo Maffei},
  title     = {Towards a Game-Theoretic Security Analysis of Off-Chain Protocols},
  journal   = {\text{IEEE Computer Security Foundations Symposium (CSF)}},
  year      = {2023},
  url       = {https://arxiv.org/abs/2109.07429},
}

@article{avarikioti2023muskateer,
      author = {Zeta Avarikioti and Stefan Schmid and Samarth Tiwari},
      title = {Musketeer: Incentive-Compatible Rebalancing for Payment Channel Networks},
      journal = {Advances in Financial Technologies (AFT)},
      year = {2024},
      url = {https://ia.cr/2023/938}
}

@inproceedings{avarikioti2025thunderdome,
      author = {Zeta Avarikioti and Yuheng Wang and Yuyi Wang},
      title = {Thunderdome:  Timelock-Free Rationally-Secure Virtual Channels},
      booktitle = {\text{USENIX Security Symposium}},
      year = {2025}
}

@inproceedings{aumayr2025xtrasnfer,
      author = {Lukas Aumayr and Zeta Avarikioti and Iosfi Salem and Stefan Schmid and Michelle Yeo},
      title = {X-Transfer: Enabling and Optimizing Cross-PCN Transactions},
      booktitle = {\text{Financial Cryptography and Data Security (FC)}},
      year = {2025}
}

@article{aumayr2024bitvm,
  title={BitVM: Quasi-Turing Complete Computation on Bitcoin},
  author={Aumayr, Lukas and Avarikioti, Zeta and Linus, Robin and Maffei, Matteo and Pelosi, Andrea and Stefo, Christos and Zamyatin, Alexei},
  journal={Cryptology ePrint Archive},
  year={2024}
}

@misc{linus2024bitvm2,
  title={BitVM2: Bridging Bitcoin to Second Layers},
  author={Linus, Robin and Aumayr, Lukas and Zamyatin, Alexei and Pelosi, Andrea and Avarikioti, Zeta and Maffei, Matteo},
  year={2024}
}

\appendices


\FullOnly{
    \section{Table of Most Important Notation}
\label{app:notation}

\CameraOnly{
\begin{strip}
\centering
\captionof{table}{Most important notation.}
    \label{tab:notation}
    \resizebox{\textwidth}{!}{%
\begin{tabular}{c|l|c|l|c|l}
\hline
$\X$           & \begin{tabular}[c]{@{}l@{}}Set of all possible transactions \\ in a blockchain ecosystem\end{tabular}           & $\lambdavec=(\lambda_j)_{j=0}^m$                                                                    & Hashrate distribution                                                                                                            & $\F_{\Pi,\beta,\omega}$                                  & \begin{tabular}[c]{@{}l@{}}Cross-layer game with respect \\ to $\beta$ and $\omega$ of $\Pi$\end{tabular}          \\ \hline
$\Omega$       & \begin{tabular}[c]{@{}l@{}}Set of all orderings of a (subset \\ of) transactions in $\X$\end{tabular}           & $\B_{\Uup/\Gup/\Lup}(Y)$                                                                            & \begin{tabular}[c]{@{}l@{}}Unpredictable/Globally Predictable\\ /Locally Predictable Censor-Only \\ Blockchain Game\end{tabular} & $\B(\G)$                                                 & \begin{tabular}[c]{@{}l@{}}Collection of blockchain games \\ induced by outcome of $\G$\end{tabular}               \\ \hline
$\beta$        & Block behaviour                                                                                                 & $D$                                                                                                 & Epoch length                                                                                                                     & $\Sigma_{\bg}(\sigma_{\ag}), \sigma_{\bg}(\sigma_{\ag})$ & $\Sigma_{\bg}^{\pi_\ag^p(\sigma_\ag)}, \sigma_{\bg}^{\pi_\ag^p(\sigma_\ag)}$                                       \\ \hline
$s,t$          & Time indices                                                                                                    & $e$                                                                                                 & Epoch index                                                                                                                      & $\pi_{\bg}^{(\sigma_\ag)}$                               & $\pi_{\bg}^{\pi_\ag^p(\sigma_\ag)}$                                                                                \\ \hline
$A_h$          & Block at height $h$                                                                                              & $\vvec^e=(v_t^e)_{t=0}^{D-1}$                                                                       & Miner assignment for epoch $e$                                                                                                   & $\sigma_{\bg}(\cdot)$                                    & $(\sigma_{\bg}(\sigma_{\ag}'))_{\sigma_{\ag}'\in\Sigma_{\ag}^p}$                                                   \\ \hline
$i,j$          & \begin{tabular}[c]{@{}l@{}}Indices used to represent players,\\ $j$ is usually reserved for miners\end{tabular} & $\vvec^e|_j$                                                                                        & \begin{tabular}[c]{@{}l@{}}Rounds for which $j$ is chosen \\ as leader by $\vvec^e$\end{tabular}                                 & $P$                                                      & Probability measure                                                                                                \\ \hline
$\omega$       & Execution function                                                                                              & $\mu_{j,t}$                                                                                         & \begin{tabular}[c]{@{}l@{}}Block proposal function for \\ miner $j$ at round $t$\end{tabular}                                    & $u=(u_i)_{i\in N\cup M}$                                 & Vector of utility functions for each $i$                                                                           \\ \hline
$M$            & \begin{tabular}[c]{@{}l@{}}Set of blockchain game players \\ (miners/validators)\end{tabular}                   & $b_{j,t}$                                                                                           & \begin{tabular}[c]{@{}l@{}}Block proposal of miner $j$ \\ at round $t$\end{tabular}                                              & $\eta$                                                   & Collusion reduction transformation                                                                                 \\ \hline
$\tx{}{}$      & Transaction                                                                                                     & $\Pi$                                                                                               & Application protocol                                                                                                             & $\tilde{\Ga}$                                            & Collusion reduction of $\Ga$                                                                                       \\ \hline
$f$            & Fee function                                                                                                    & $N$                                                                                                 & Application player set                                                                                                           & $\gvec=(g_p)_{p\in\P}$                                   & Collection of composition functions                                                                                \\ \hline
$\D(S)$        & \begin{tabular}[c]{@{}l@{}}Set of probability distributions \\ over $S$\end{tabular}                            & $\G$                                                                                                & Application game                                                                                                                 & $\circ_\gvec$                                            & $\gvec$-composition                                                                                                \\ \hline
$X$            & Set of transactions                                                                                             & $\Ga$                                                                                               & Arbitrary game                                                                                                                   & $\1_E$                                                   & \begin{tabular}[c]{@{}l@{}}Indicator function; equals $1$ \\ if $E$ is true and $0$ otherwise\end{tabular}         \\ \hline
$\Y(X)$        & \begin{tabular}[c]{@{}l@{}}Set of all transaction triples \\ of $X$\end{tabular}                                & \begin{tabular}[c]{@{}c@{}}$\sigma_{\bg/\ag/\cg},$\\ $\overline{\sigma}_{\bg/\ag/\cg}$\end{tabular} & \begin{tabular}[c]{@{}l@{}}strategy profile of an \\ application/cross-layer game\end{tabular}                                   & $T$                                                      & Timelock                                                                                                           \\ \hline
$\Po(S)$       & Power set of $S$                                                                                                & $\Sigma_{\bg/\ag/\cg}$                                                                              & \begin{tabular}[c]{@{}l@{}}Set of strategy profiles of a blockchain/\\ application/cross-layer game\end{tabular}                 & $F_2=\kappa f_1$                                         & \begin{tabular}[c]{@{}l@{}}Bribe budget, available to try to \\ censor a transaction paying fee $f_1$\end{tabular} \\ \hline
$Y$            & Set of transaction triples                                                                                      & $\pi_{\bg/\ag}$                                                                                     & \begin{tabular}[c]{@{}l@{}}Outcome function of a blockchain/\\ application game\end{tabular}                                     & $C$                                                      & Event that $\tx{1}{}$ gets censored                                                                                \\ \hline
$b$            & Blockchain ordering                                                                                             & $(\Ga^p)_{p\in\P}$                                                                                  & Collection of parametrised games                                                                                                 & $N_0^T$                                                  & \begin{tabular}[c]{@{}l@{}}Number of distinct validators in \\ rounds $0,\ldots,T$\end{tabular}                    \\ \hline
$\bmodel_M(Y)$ & \begin{tabular}[c]{@{}l@{}}Set of possible blockchain orderings \\ with transactions from $Y$\end{tabular}      & $p,q,\P,\Q$                                                                                         & \begin{tabular}[c]{@{}l@{}}Parameters and their respective\\ parameter spaces\end{tabular}                                       & $\vee,\wedge$                                            & Maximum, minimum                                                                                                   \\ \hline
$\B(Y)$        & Blockchain game induced by $Y$                                                                                  & $\No(Y)$                                                                                            & Network game induced by $Y$                                                                                                      & $g_j^t$                                                  & \begin{tabular}[c]{@{}l@{}}Expected gain of miner $j$ \\ at round $t$\end{tabular}                                 \\ \hline
\end{tabular}
}
\end{strip}
}

\CameraOnly{
\begin{strip}
\centering
\captionof{table}{Most important notation.}
    \label{tab:notation}
    \resizebox{0.5\textwidth}{!}{%
\begin{tabular}{c|l|c|l}
\hline
$\X$                             & \begin{tabular}[c]{@{}l@{}}Set of all possible transactions \\ in a blockchain ecosystem\end{tabular}                            & $N$                                                                                                 & Application player set                                                                                             \\ \hline
$\Omega$                         & \begin{tabular}[c]{@{}l@{}}Set of all orderings of a (subset \\ of) transactions in $\X$\end{tabular}                            & $\G$                                                                                                & Application game                                                                                                   \\ \hline
$\beta$                          & Block behaviour                                                                                                                  & $\Ga$                                                                                               & Arbitrary game                                                                                                     \\ \hline
$s,t$                            & Time indices                                                                                                                     & \begin{tabular}[c]{@{}c@{}}$\sigma_{\bg/\ag/\cg},$\\ $\overline{\sigma}_{\bg/\ag/\cg}$\end{tabular} & \begin{tabular}[c]{@{}l@{}}strategy profile of an \\ application/cross-layer game\end{tabular}                     \\ \hline
$A_h$                            & Block at height $h$                                                                                                               & $\Sigma_{\bg/\ag/\cg}$                                                                              & \begin{tabular}[c]{@{}l@{}}Set of strategy profiles of a blockchain/\\ application/cross-layer game\end{tabular}   \\ \hline
$i,j$                            & \begin{tabular}[c]{@{}l@{}}Indices used to represent players,\\ $j$ is usually reserved for miners\end{tabular}                  & $\pi_{\bg/\ag}$                                                                                     & \begin{tabular}[c]{@{}l@{}}Outcome function of a blockchain/\\ application game\end{tabular}                       \\ \hline
$\omega$                         & Execution function                                                                                                               & $(\Ga^p)_{p\in\P}$                                                                                  & Collection of parametrised games                                                                                   \\ \hline
$M$                              & \begin{tabular}[c]{@{}l@{}}Set of blockchain game players \\ (miners/validators)\end{tabular}                                    & $p,q,\P,\Q$                                                                                         & \begin{tabular}[c]{@{}l@{}}Parameters and their respective\\ parameter spaces\end{tabular}                         \\ \hline
$\tx{}{}$                        & Transaction                                                                                                                      & $\No(Y)$                                                                                            & Network game induced by $Y$                                                                                        \\ \hline
$f$                              & Fee function                                                                                                                     & $\F_{\Pi,\beta,\omega}$                                                                             & \begin{tabular}[c]{@{}l@{}}Cross-layer game with respect \\ to $\beta$ and $\omega$ of $\Pi$\end{tabular}          \\ \hline
$\D(S)$                          & \begin{tabular}[c]{@{}l@{}}Set of probability distributions \\ over $S$\end{tabular}                                             & $\B(\G)$                                                                                            & \begin{tabular}[c]{@{}l@{}}Collection of blockchain games \\ induced by outcome of $\G$\end{tabular}               \\ \hline
$X$                              & Set of transactions                                                                                                              & $\Sigma_{\bg}(\sigma_{\ag}), \sigma_{\bg}(\sigma_{\ag})$                                            & $\Sigma_{\bg}^{\pi_\ag^p(\sigma_\ag)}, \sigma_{\bg}^{\pi_\ag^p(\sigma_\ag)}$                                       \\ \hline
$\Y(X)$                          & \begin{tabular}[c]{@{}l@{}}Set of all transaction triples \\ of $X$\end{tabular}                                                 & $\pi_{\bg}^{(\sigma_\ag)}$                                                                          & $\pi_{\bg}^{\pi_\ag^p(\sigma_\ag)}$                                                                                \\ \hline
$\Po(S)$                         & Power set of $S$                                                                                                                 & $\sigma_{\bg}(\cdot)$                                                                               & $(\sigma_{\bg}(\sigma_{\ag}'))_{\sigma_{\ag}'\in\Sigma_{\ag}^p}$                                                   \\ \hline
$Y$                              & Set of transaction triples                                                                                                       & $P$                                                                                                 & Probability measure                                                                                                \\ \hline
$b$                              & Blockchain ordering                                                                                                              & $u=(u_i)_{i\in N\cup M}$                                                                            & Vector of utility functions for each $i$                                                                           \\ \hline
$\bmodel_M(Y)$                   & \begin{tabular}[c]{@{}l@{}}Set of possible blockchain orderings \\ with transactions from $Y$\end{tabular}                       & $\eta$                                                                                              & Collusion reduction transformation                                                                                 \\ \hline
$\B(Y)$                          & Blockchain game induced by $Y$                                                                                                   & $\tilde{\Ga}$                                                                                       & Collusion reduction of $\Ga$                                                                                       \\ \hline
$\lambdavec=(\lambda_j)_{j=0}^m$ & Hashrate distribution                                                                                                            & $\gvec=(g_p)_{p\in\P}$                                                                              & Collection of composition functions                                                                                \\ \hline
$\B_{\Uup/\Gup/\Lup}(Y)$         & \begin{tabular}[c]{@{}l@{}}Unpredictable/Globally Predictable\\ /Locally Predictable Censor-Only \\ Blockchain Game\end{tabular} & $\circ_\gvec$                                                                                       & $\gvec$-composition                                                                                                \\ \hline
$D$                              & Epoch length                                                                                                                     & $\1_E$                                                                                              & \begin{tabular}[c]{@{}l@{}}Indicator function; equals $1$ \\ if $E$ is true and $0$ otherwise\end{tabular}         \\ \hline
$e$                              & Epoch index                                                                                                                      & $T$                                                                                                 & Timelock                                                                                                           \\ \hline
$\vvec^e=(v_t^e)_{t=0}^{D-1}$    & Miner assignment for epoch $e$                                                                                                   & $F_2=\kappa f_1$                                                                                    & \begin{tabular}[c]{@{}l@{}}Bribe budget, available to try to \\ censor a transaction paying fee $f_1$\end{tabular} \\ \hline
$\vvec^e|_j$                     & \begin{tabular}[c]{@{}l@{}}Rounds for which $j$ is chosen \\ as leader by $\vvec^e$\end{tabular}                                 & $C$                                                                                                 & Event that $\tx{1}{}$ gets censored                                                                                \\ \hline
$\mu_{j,t}$                      & \begin{tabular}[c]{@{}l@{}}Block proposal function for \\ miner $j$ at round $t$\end{tabular}                                    & $N_0^T$                                                                                             & \begin{tabular}[c]{@{}l@{}}Number of distinct validators in \\ rounds $0,\ldots,T$\end{tabular}                    \\ \hline
$b_{j,t}$                        & \begin{tabular}[c]{@{}l@{}}Block proposal of miner $j$ \\ at round $t$\end{tabular}                                              & $\vee,\wedge$                                                                                       & Maximum, minimum                                                                                                   \\ \hline
$\Pi$                            & Application protocol                                                                                                             & $g_j^t$                                                                                             & \begin{tabular}[c]{@{}l@{}}Expected gain of miner $j$ \\ at round $t$\end{tabular}                                 \\ \hline
\end{tabular}
}
\end{strip}
}

\FullOnly{
\begin{strip}
\centering
\captionof{table}{Most important notation.}
    \label{tab:notation}
    \resizebox{\textwidth}{!}{%
\begin{tabular}{c|l|c|l}
\hline
$\X$                             & \begin{tabular}[c]{@{}l@{}}Set of all possible transactions \\ in a blockchain ecosystem\end{tabular}                            & $\G$                                                                                                          & Application game                                                                                                         \\ \hline
$\Omega$                         & \begin{tabular}[c]{@{}l@{}}Set of all orderings of a (subset \\ of) transactions in $\X$\end{tabular}                            & $\Ga$                                                                                                         & Arbitrary game                                                                                                           \\ \hline
$\beta$                          & Block behaviour                                                                                                                  & \begin{tabular}[c]{@{}c@{}}$\sigma_{\bg/\ag/\nog/\cg},$\\ $\overline{\sigma}_{\bg/\ag/\nog/\cg}$\end{tabular} & \begin{tabular}[c]{@{}l@{}}strategy profile of an \\ application/network/cross-layer game\end{tabular}                   \\ \hline
$s,t$                            & Time indices                                                                                                                     & $\Sigma_{\bg/\ag/\nog/\cg}$                                                                                   & \begin{tabular}[c]{@{}l@{}}Set of strategy profiles of a blockchain/\\ application/network/cross-layer game\end{tabular} \\ \hline
$A_h$                            & Block at height $h$                                                                                                               & $\pi_{\bg/\ag/\nog}$                                                                                          & \begin{tabular}[c]{@{}l@{}}Outcome function of a blockchain/\\ application/network game\end{tabular}                     \\ \hline
$i,j$                            & \begin{tabular}[c]{@{}l@{}}Indices used to represent players,\\ $j$ is usually reserved for miners\end{tabular}                  & $(\Ga^p)_{p\in\P}$                                                                                            & Collection of parametrised games                                                                                         \\ \hline
$\omega$                         & Execution function                                                                                                               & $p,q,\P,\Q$                                                                                                   & \begin{tabular}[c]{@{}l@{}}Parameters and their respective\\ parameter spaces\end{tabular}                               \\ \hline
$M$                              & \begin{tabular}[c]{@{}l@{}}Set of blockchain game players \\ (miners/validators)\end{tabular}                                    & $\chi$                                                                                                        & Player function                                                                                                          \\ \hline
$\tx{}{}$                        & Transaction                                                                                                                      & $\No(Y)$                                                                                                      & Network game induced by $Y$                                                                                              \\ \hline
$f$                              & Fee function                                                                                                                     & $\F_{\Pi,\beta,\omega}$                                                                                       & \begin{tabular}[c]{@{}l@{}}Cross-layer game with respect \\ to $\beta$ and $\omega$ of $\Pi$\end{tabular}                \\ \hline
$\D(S)$                          & \begin{tabular}[c]{@{}l@{}}Set of probability distributions \\ over $S$\end{tabular}                                             & $\B(\G)$                                                                                                      & \begin{tabular}[c]{@{}l@{}}Collection of blockchain games \\ induced by outcome of $\G$\end{tabular}                     \\ \hline
$X$                              & Set of transactions                                                                                                              & $\Sigma_{\nog}(\sigma_{\ag}), \sigma_{\nog}(\sigma_{\ag})$                                                      & $\Sigma_{\nog}^{\pi_\ag^p(\sigma_\ag)}, \sigma_{\nog}^{\pi_\ag^p(\sigma_\ag)}$                                             \\ \hline
$\Y(X)$                          & \begin{tabular}[c]{@{}l@{}}Set of all transaction triples \\ of $X$\end{tabular}                                                 & $\pi_{\nog}^{(\sigma_\ag)}$                                                                                    & $\pi_{\bg}^{\pi_\ag^p(\sigma_\ag)}$                                                                                      \\ \hline
$\Po(S)$                         & Power set of $S$                                                                                                                 & $\sigma_{\nog}(\cdot)$                                                                                         & $(\sigma_{\nog}(\sigma_{\ag}'))_{\sigma_{\ag}'\in\Sigma_{\ag}^p}$                                                         \\ \hline
$Y$                              & Set of transaction triples                                                                                                       & $\sigma_{\bg}(\cdot,\cdot)$                                                                                   & $(\sigma_{\bg}(\sigma_{\nog}(\sigma_{\ag})))_{(\sigma_\ag,\sigma_{\nog}(\cdot))\in\Sigma_{\ag,\nog}^p}$                  \\ \hline
$b$                              & Blockchain ordering                                                                                                              & $P$                                                                                                           & Probability measure                                                                                                      \\ \hline
$\bmodel_M(Y)$                   & \begin{tabular}[c]{@{}l@{}}Set of possible blockchain orderings \\ with transactions from $Y$\end{tabular}                       & $u=(u_i)_{i\in N\cup M}$                                                                                      & Vector of utility functions for each $i$                                                                                 \\ \hline
$\B(Y)$                          & Blockchain game induced by $Y$                                                                                                   & $\eta$                                                                                                        & Collusion reduction transformation                                                                                       \\ \hline
$\lambdavec=(\lambda_j)_{j=0}^m$ & Hashrate distribution                                                                                                            & $\tilde{\Ga}$                                                                                                 & Collusion reduction of $\Ga$                                                                                             \\ \hline
$\B_{\Uup/\Gup/\Lup}(Y)$         & \begin{tabular}[c]{@{}l@{}}Unpredictable/Globally Predictable\\ /Locally Predictable Censor-Only \\ Blockchain Game\end{tabular} & $\gvec=(g_p)_{p\in\P}$                                                                                        & Collection of composition functions                                                                                      \\ \hline
$D$                              & Epoch length                                                                                                                     & $\circ_\gvec$                                                                                                 & $\gvec$-composition                                                                                                      \\ \hline
$e$                              & Epoch index                                                                                                                      & $\1_E$                                                                                                        & \begin{tabular}[c]{@{}l@{}}Indicator function; equals $1$ \\ if $E$ is true and $0$ otherwise\end{tabular}               \\ \hline
$\vvec^e=(v_t^e)_{t=0}^{D-1}$    & Miner assignment for epoch $e$                                                                                                   & $T$                                                                                                           & Timelock                                                                                                                 \\ \hline
$\vvec^e|_j$                     & \begin{tabular}[c]{@{}l@{}}Rounds for which $j$ is chosen \\ as leader by $\vvec^e$\end{tabular}                                 & $F_2=\kappa f_1$                                                                                              & \begin{tabular}[c]{@{}l@{}}Bribe budget, available to try to \\ censor a transaction paying fee $f_1$\end{tabular}       \\ \hline
$\mu_{j,t}$                      & \begin{tabular}[c]{@{}l@{}}Block proposal function for \\ miner $j$ at round $t$\end{tabular}                                    & $C$                                                                                                           & Event that $\tx{1}{}$ gets censored                                                                                      \\ \hline
$b_{j,t}$                        & \begin{tabular}[c]{@{}l@{}}Block proposal of miner $j$ \\ at round $t$\end{tabular}                                              & $N_0^T$                                                                                                       & \begin{tabular}[c]{@{}l@{}}Number of distinct validators in \\ rounds $0,\ldots,T$\end{tabular}                          \\ \hline
$\Pi$                            & Application protocol                                                                                                             & $\vee,\wedge$                                                                                                 & Maximum, minimum                                                                                                         \\ \hline
$N$                              & Application player set                                                                                                           & $g_j^t$                                                                                                       & \begin{tabular}[c]{@{}l@{}}Expected gain of miner $j$ \\ at round $t$\end{tabular}                                       \\ \hline
\end{tabular}
}
\end{strip}
}

    \section{Discussion on the worst case miner distribution}
\label{app:worst-case}

We argue informally that for a given $\lambda_0$ and $f_1<f_2$ the worst-case (i.e. largest) required timelock $T$ happens with two miners ($m=2$) and their hash rates are the same ($\lambda_1=\lambda_2$), i.e., when $\lambdavec=(\lambda_0,\frac{1-\lambda_0}{2},\frac{1-\lambda_0}{2})$. We plan to make this argument more formal in future work.

Let $\lambda_0,\ldots,\lambda_m >0$ such that $\sum_i \lambda_i =1$, where $\lambda_1 \le \lambda_2\le\cdots \le \lambda_m\leq\frac{1}{2}$ is the distribution of hashrates of $m>1$ miners, whereas $\lambda_0$ is the hashrate of the remaining myopic miners.
Recall that $t_j^*=T-\ceil{\rho_j}$ denotes the round from which on miner $j$ will censor $\tx{1}{}$, and note that $\rho_1\le \cdots\le \rho_m$ due to Corollary~\ref{cor:start-censor}. We will now motivate that compared to $m=2$ and $\lambda_1=\lambda_2$ we have that both  $m>3$ and $\lambda_1 <\lambda_2$ only decrease $\rho_m$. We do this by transforming the hashrate distribution step-by-step to include fewer and fewer miners. Denote by $\rho_{\max}$ the maximal $\rho_i$ in each step, as the number of miners $m$ will decrease in each step.

Our argument proceeds in two steps: first we transform it into a situation where the biggest miner has as least as much hashrate as the remaining non-myopic miners combined by merging the biggest miners (and therefore decreasing the number of miners by $1$). This step only increases $\rho_{\max}$. Secondly, if there exists one ``big'' miner, then merging smaller miners will only increase $\rho_{\max}$. This implies that $m=2$ yields the largest $\rho_{\max}$, which translates to the largest required timelock to have censoring probability zero.

First, we consider $m$ miners with hashrates $\boldsymbol{\lambda}$, which we will transform to $m-1$ miners by letting $\lambda'_{m-1}:= \lambda_{m-1} + \lambda_m$ and $\lambda_k' := \lambda_k$ for $k<m-1$. We  now argue that $\rho_{\max} \le \rho'_{\max}$, since the bigger hashrate $\lambda'_{m-1} > \lambda_m$ makes the biggest miner more resilient to the risk that comes with censoring. For the second step, we consider $m$ miners with hashrates~$\boldsymbol{\lambda}$, which we will transform to $m-1$ miners by letting $\lambda'_{m-2}:= \lambda_{m-2} + \lambda_{m-1}$ and $\lambda'_{m-1} := \lambda_m$ (assuming $\lambda'_{m-2}<\lambda'_{m-1}$; hence the first step) and $\lambda'_k := \lambda_k$ for $k<m-2$. By the above argument, $\rho_{\max} \le \rho'_{\max}$. 

Repeating these two steps to iteratively group either small miners or big miners eventually leads to just two miners. Lastly, analysing the case $m=2$ shows that $\rho_2$ as a function of $\lambda_2$ is increasing on $[0, \tfrac{1-\lambda_0}2)$ with an extremum at $\tfrac{1-\lambda_0}2$. Hence, $m=2$ with $\lambda_1=\lambda_2$ gives the largest $\rho_{\max}$.


    
    \section{Detailed Case Study: CRAB Payment Channels}
\label{app:timelock-examples}
Unlike the examples in Section~\ref{subsec:pc}, where the structure of the underlying PC was kept implicit, we now consider a specific construction in detail. In particular, Lightning-style channels are known to be vulnerable to bribing attacks when miners behave rationally~\cite{aumayr2024securing}. To address this, \cite{aumayr2024securing} introduced the CRAB channel, a design that mitigates such attacks. The core idea is to let the participants each put in an additional collateral $c$ into the channel, that will be rewarded to a miner in case a participant misbehaves by publishing an old channel state. The CRAB construction introduces a delay $T$ before which the party closing the channel cannot claim their balance, as illustrated in Figure~\ref{fig:crab}.
\begin{figure}[h]
    \centering
    \includegraphics[width=\linewidth]{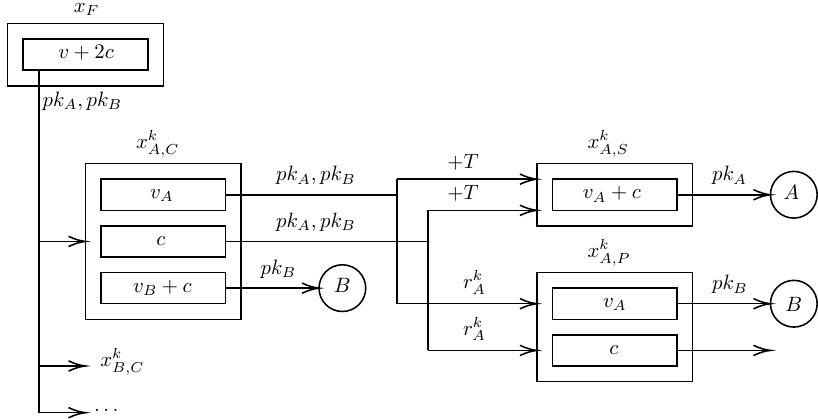}
    \caption{Transaction flow for a CRAB PC}
    \label{fig:crab}
\end{figure}
Just as in standard Lightning channels, the CRAB channel locks its capacity in a 2-of-2 multisignature output via a transaction $\tx{F}{}$. Each participant can unilaterally broadcast a commitment transaction that closes the channel and distributes the funds on-chain. In Figure~\ref{fig:crab}, we specified Alice's $k$-th commitment transaction $\tx{A,C}{k}$. These commitment transactions are regularly updated to reflect the latest balances, but earlier (outdated) versions remain valid and can still be broadcast. To discourage this, participants exchange revocation keys during updates, enabling the other party to claim the entire channel balance through a punishment transaction if an old commitment is posted. For example, when updating from state $k$ to state $k+1$, Alice will share $r_A^k$ with Bob and Bob will give $r_B^k$ to Alice. Because there is a timelock $T$, Alice cannot immediately post $\tx{A,S}{k}$ to claim her funds, giving Bob the opportunity to punish her by posting $\tx{A,P}{k}$\footnote{In the actual CRAB construction, $\tx{A,P}{k}$ is not specified, and instead replaced by two transactions; one allowing Bob to claim $v_A$ and one allowing anyone to claim $c$. Specifying $\tx{A,P}{k}$ is just for simplification, and does not alter the security analysis as including only the transaction that would spend $c$ renders $\tx{A,S}{k}$ already unspendable.} in case state $k$ would be an old state. The second output $c$ in $\tx{A,P}{k}$ can be claimed by anyone, and will thus be claimed by the miner who includes $\tx{A,P}{k}$. 

In standard Lightning, rational miners may be bribed to include an outdated commitment transaction while censoring the corresponding punishment transaction—resulting in fund loss for the honest party. CRAB channels mitigate this by requiring both participants to lock up collateral, which is forfeited to the miner if a party attempts to close the channel dishonestly using an old commitment. Using our framework, we analyse under which conditions such misbehaviour becomes unprofitable with respect to U-COBG.

Consider a CRAB channel between Alice and Bob. Alice wishes to close the channel and can choose to broadcast either her most recent commitment transaction $\tx{A,C}{l}$, reflecting balances $v_A^l$ and $v_B^l$, or an outdated one $\tx{A,C}{o}$, reflecting $v_A^o$ and $v_B^o$. She may also choose whether to attach a corresponding spending transaction $\tx{A,S}{l}$ or $\tx{A,S}{o}$. If Alice posts $\tx{A,C}{o}$, Bob can retaliate by broadcasting a punishment transaction $\tx{A,P}{o}$, which gives him Alice’s balance and the miner her collateral. Alice can try to prevent this by raising the fee on $\tx{A,S}{o}$, effectively bribing miners to ignore the punishment. Bob may counter-bribe by increasing the fee on his punishment transaction. This fee race continues until the timelock expires, allowing Alice to settle using $\tx{A,S}{o}$ if the punishment was successfully censored.

We model this interaction as an application game $\mathcal{K}$ with player set $N = \qty{A, B}$, and $(X, \Sigma_\ag, \pi_\ag)$ defined via a recursive extensive-form game. The game unfolds as a sequence $(K_t)_{t = -1}^T$, illustrated in Figures~\ref{fig:crab-tree1} and \ref{fig:crab-tree2}.
\begin{figure}[h]
    \centering
    \includegraphics[width=0.6\linewidth]{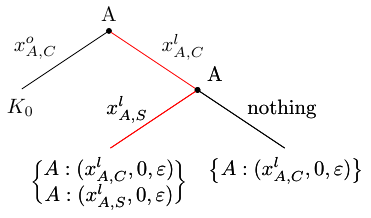}
    \caption{Tree for $K_{-1}$.}
    \label{fig:crab-tree1}
\end{figure}

\begin{figure}[h]
    \centering
    \includegraphics[width=\linewidth]{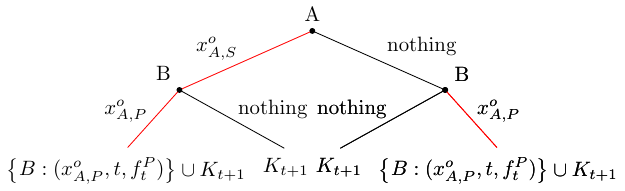}
    \caption{Tree for $K_t$ for $0\leq t\leq T$. If $t=T$, replace $K_{t+1}$ by $\varnothing$.}
    \label{fig:crab-tree2}
\end{figure}

We can prove a sufficient condition for Alice not posting old commitment transactions in the parametrised cross-layer game $(\mathcal{K}^T,\bgame_{0}(\mathcal{K}^T),\omega)$, by recognising once again a pre-BitVM timelock bribing risk. In this cross-layer game, $\omega$ can be specified via a function $\tilde{\omega}$, such that $\omega_i(b)=\tilde{\omega}_i(b)-\sum_{(x,t,f)\in b:x\in\chi(i)}\sum_{j\in M}f_j(b)$, for all $i\in N$ and each $b\in\bmodel_M(Y)$ and $\omega_j(b)=\tilde{\omega}_j(b)+\sum_{(x,t,f)\in b}f_j(b)$, for all $j\in M$ and $b\in\bmodel_M(Y)$. Specifically, $\tilde{\omega}$ is given by:
\begin{align*}
    \tilde{\omega}_A(b)&=
    \begin{cases}
        v_A^l+c, & b\supseteq\{\tx{A,C}{l},\tx{A,S}{l}\}, \\
        v_A^o+c, & b\supseteq\{\tx{A,C}{o},\tx{A,S}{o}\}, \\
        0, & b\supseteq\{\tx{A,C}{o},\tx{A,P}{o}\}, 
    \end{cases} \\
    \tilde{\omega}_B(b)&=
    \begin{cases}
        v_B^l+c, & b\supseteq\{\tx{A,C}{l}\}, \\
        v_B^o+c, & b\supseteq\{\tx{A,C}{o}\}, \\
        v+c, & b\supseteq\{\tx{A,C}{o},\tx{A,P}{o}\}, 
    \end{cases} \\
    \tilde{\omega}_j(b)&=
    \begin{cases}
        c, & b\supseteq\{\tx{A,C}{o},\tx{A,P}{o}\} \\
        0, & \text{otherwise},
    \end{cases}
\end{align*}
where $j$ is the miner who included $\tx{A,P}{o}$. With the notation $b\supseteq S$ we mean every blockchain ordering that \emph{prioritises} the transactions in $S$ over any possible transactions that might spend the same funds as the transactions in $S$, by ordering the former transactions before the latter.

\begin{theorem}[\ref{subsec:thm:crab}]
\label{thm:crab}
    In $(\mathcal{K}^T,\bgame_{0}(\mathcal{K}^T),\omega)$, Alice will not post an old commitment transaction if $T>\ceil{\rho_m(c,v;\lambdavec)}$, where $\lambdavec$ is the hashrate distribution assumed in $\bgame_{0,\mathcal{K}^T}$.
\end{theorem}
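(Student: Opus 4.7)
The plan is to reduce the question of whether Alice posts an old commitment transaction to the censor-only race analysed in Theorem~\ref{thm:tau}. First, I would identify precisely what Alice's deviation looks like in $\mathcal{K}^T$: she posts an old commitment transaction giving her a balance strictly larger than her current one, but the CRAB construction forces her to wait $T$ blocks before she can spend her claimed balance via a transaction $\tx{A}{}$. In the meantime, Bob can immediately broadcast a punishment transaction $\tx{B}{}$ which claims Alice's entire (ill-gotten) balance together with the collateral~$c$. Conflictingly, exactly one of $\tx{A}{}$ and $\tx{B}{}$ can be included, and we are in the two-transaction setting of Theorem~\ref{thm:tau}.

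Next I would pin down the maximum rational fees each party can offer. For Alice, the fee she is willing to put on $\tx{A}{}$ cannot exceed~$v$, because anything larger makes posting the old state at best as bad as following the IPB (and she already forfeits her collateral to Bob on that branch). For Bob, the punishment transaction $\tx{B}{}$ lets him keep Alice's claimed balance, so he has every incentive to route the collateral~$c$ to the miners as fee (he is indifferent above $c$ and would not pay out of his own balance). Since $c<v$, this matches the hypothesis $f_1<f_2$ of Theorem~\ref{thm:tau} with $f_1=c$, $f_2=v$, $t_1=t_2=0$ (both parties act immediately once the old state is posted), and timelock~$T$ on $\tx{A}{}$.

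Applying Theorem~\ref{thm:tau} directly, whenever $T>\ceil{\rho_m(c,v;\lambdavec)}$ the probability that $\tx{B}{}$ is included equals~$1$, i.e.\ the punishment transaction is included with certainty under the unique optimal miner strategy $\overline{\sigma}_{bg}^{Y}$ of $\bgame_{0,Y}$. Consequently, Alice's expected utility from posting the old commitment transaction is strictly dominated by the utility she obtains from the IPB (she loses both her true balance and her collateral~$c$), so by Definition~\ref{def:incentive-compatibility} this deviation is removed after iterated elimination of weakly dominated strategies, and Alice will not post an old commitment transaction.

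The main obstacle, and the place where one has to be careful, is the fee-setting substep: one must argue that \emph{no} rational choice of fees by Alice (possibly including a randomised one, or a sequence of replacement fee-bumps across rounds $t=0,\ldots,T-1$) allows her to escape the regime $f_1<f_2$ with $f_2\le v$. This requires a short dominance argument using the CRAB balance constraints and the monotonicity of $\rho_m$ in its first argument, together with the observation that, once the old state is posted, Alice has no continuation strategy that recovers her collateral. With that in hand, the rest of the proof is a direct instantiation of Theorem~\ref{thm:tau}.
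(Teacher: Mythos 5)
Your proposal is correct and follows essentially the same route as the paper: reduce Alice's deviation to the two-transaction censor race of Theorem~\ref{thm:tau}, with the punishment transaction as the immediately includable one rewarding the miner $c$ and the spending transaction as the timelocked one carrying a bribe bounded by Alice's gain from cheating (the paper bounds this by $v_A^o-v_A^l\leq v$, which is the tighter version of your bound of $v$), then invoke the $T>\ceil{\rho_m}$ condition. The fee-setting caveat you flag at the end is resolved in the paper exactly as you sketch---Alice's maximal rational bribe is capped by her gain from the old state, and monotonicity of $\rho_m$ in the second argument lets one pass from $v_A^o-v_A^l$ to $v$.
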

\begin{remark}
\label{rem:crab}
    If we assume the worst-case hashrate distribution $\lambdavec=(0,\frac{1}{2},\frac{1}{2})$, we have $\rho_1=\rho_2$ equal to $0$ if $\frac{c}{v}>\frac{1}{2}$ and equal to $\infty$ otherwise, so we retrieve the same condition as \cite{aumayr2024securing} that $c>\frac{v}{2}$ in order for Alice not to post an old commitment transaction. We stress once again that we can easily keep the application game as is and study the CRAB channel under different blockchain assumptions without changing the proof.
\end{remark}

\section{Detailed Case Study: Sandwich Attack}
\label{app:mev}
To recap, we consider an abstract model of a DeFi smart contract where a user $U$ makes a trade by placing a buy limit order, i.e., a transaction $\tx{U}{}$, that buys some asset for a price up to $l$. The current price is $l-p$, with $p\geq0$, and so if $\tx{U}{}$ were included as is, $U$ would have utility $p$. However, the miners are also able to interact with the smart contract and are thus part of the application game. Upon seeing $\tx{U}{}$, each miner has the capability to front- and backrun, or \emph{sandwich}, $\tx{U}{}$. In particular, each miner $j\in M$ could add a transaction $\tx{F,j}{}$ before $\tx{U}{}$ buying the asset at price $l-p$, and a transaction $\tx{B,j}{}$ after $\tx{U}{}$, selling the asset again at price $l$. In doing so, the miner captures $p$ and $U$ is left with utility $0$, as the transaction now buys at price $l$ instead of $l-p$. 

We model this by playing the application game $\G=(\qty{U}\cup M,\{\tx{U}{}\}\cup\{\tx{F,j}{},\tx{B,j}{}\}_{j\in M},\Sigma_\ag,\pi_\ag)$. The set $M$ contains the miners, who are all part of the application game and can thus monitor the intents of other players. The strategy profile set $\Sigma_\ag$ reflects this by corresponding to a simultaneous game in which each miner gets to decide whether or not construct sandwiching transactions, upon seeing $\tx{U}{}$. We will assume that $(\tx{U}{},0,f)\in\pi_\ag(\sigma_\ag)$ for each $\sigma_\ag\in\Sigma_\ag$, with $f>0$ some fixed fee. Afterwards, we play for a given $\pi_\ag(\sigma_\ag)$ a network game where 
$U$ gets to decide to which miners to send $(\tx{U}{},0,f)$, i.e., the network game $\mathcal{N}=(\qty{U},M,\pi_\ag(\sigma_\ag),\Po(M),\pi_{\nog})$, where $\sigma_{\nog}\in\Po(M)$ is the strategy profile containing the set of miners that $U$ chose to share $\tx{U}{}$ with, and $\pi_{\nog}(\sigma_{\nog})$ is the tuple $(\pi_{\nog,j}(\sigma_{\nog}))_{j\in M}$ where for each $j\in M$, $\pi_{\nog,j}(\sigma_{\nog})$ is defined as $\pi_\ag(\sigma_\ag)\cap\{(\tx{F,j}{},0,0),(\tx{B,j}{},0,0)\}$ if $j\in\sigma_{\nog}$ and $\pi_\ag(\sigma_\ag)\cap\{(\tx{F,j}{},0,0),(\tx{B,j}{},0,0)\}\setminus\{(\tx{U}{},0,f)\}$ otherwise. Remark that although the miners might already know $U$ wants to make the trade $\tx{U}{}$, they would only be willing to sandwich $\tx{U}{}$ if they have actually received $\tx{U}{}$ as input to the blockchain game.

For a given $\pi_{\nog}(\sigma_{\nog})$, the subsequent blockchain game is given by $(M,\pi_{\nog}(\sigma_{\nog}),\Sigma_{\bg},\pi_{\bg})$\footnote{Notice the abuse of notation, $\pi_{\nog}(\sigma_{\nog})$ is a tuple of sets of transaction triples.}. We specify $\Sigma_{\bg}=\qty{H,D}^\qty|M|$, where each miner decides whether to ($H$) just include $\tx{U}{}$, or to ($D$) sandwich $\tx{U}{}$, if they were to receive the transaction $\tx{U}{}$. One miner $j\in M$ then gets selected at random (proportional to its hashrate) to propose the blockchain ordering, which will either be $b=b_{H,j}$ (containing only $\tx{U}{}$) or $b=b_{D,j}$ (containing $\tx{F,j}{},\tx{U}{},\tx{B,j}{}$). This implicitly defines the ordering function $\pi_{\bg}$. We moreover specify, for each possible ordering $b$, the execution function $\omega$ for $U$ as $\omega_U(b)=p$ if $b=b_{H,j}$ for some $j\in M$ and $\omega_U(b)=0$ if $b=b_{D,j}$ for some $j\in M$, and for each $j\in M$ as $\omega_j(b)=f\1_\qty{b=b_{H,j}}+(f+p)\1_\qty{b=b_{D,j}}$.

From the above formulation, it is clear that without any additional measures, the miners will always choose strategy $D$. Given that, $U$ will always end up with zero utility and does not have any preference with regards to which miner to share $(\tx{U}{},0,f)$ with. A simple countermeasure to this inevitable MEV scenario would be to introduce a trusted miner, holding a proportion $\lambda_H$ of the hashrate. We can model this miner as a player with only the strategy $H$ (or equivalently assign a utility of $-\infty$ to any strategy profile where this miner has strategy $D$). Consequently, it is clear that $U$ will only share $(\tx{U}{},0,f)$ with the trusted miner. 

This section examines how limited transaction propagation at the network layer impacts miner incentives and strategic behavior, extending beyond a simple model where all miners see all transactions. Focusing on a DeFi scenario where a user places a buy limit order vulnerable to miner “sandwich” attacks (front- and back-running to capture user profits), the model shows that miners who receive the user’s transaction can exploit it, leaving the user with zero utility. Because miners will always choose to sandwich when possible, the user gains nothing by sharing the transaction broadly. A key insight is that introducing a trusted miner who commits not to sandwich can restore some utility to the user, as the user will only share the transaction with this miner, highlighting how selective transaction propagation and trusted parties can shape MEV outcomes and strategic interactions.
    
    \section{Further Types of Composition}
\label{app:examples}
\subsubsection*{\textbf{Composition Across Different Applications}}
We begin by illustrating how our framework supports reasoning about the interaction of distinct application-layer protocols that execute concurrently on a shared blockchain infrastructure. Such interactions are increasingly prevalent in decentralised finance (DeFi), where composability enables sophisticated behaviour but also introduces subtle incentive misalignments. We present two representative examples, Cross-DEX arbitrage and Oracle-to-DEX feedback loops, that highlight how our framework can formally capture cross-application dependencies and enable compositional security analysis that would be difficult to achieve in isolation.

\textit{Cross-DEX Arbitrage.}
Arbitrage strategies across decentralised exchanges (DEXs) may span multiple smart contracts executing in parallel on the same blockchain. While individual DEXs might appear secure in isolation, the compositional perspective reveals how IC behaviour at one layer (e.g., submitting a buy order) can enable profit extraction in another (e.g., selling on a second DEX before the price adjusts). Our framework captures these interactions as separate games composed over a shared blockchain layer, making it possible to analyse whether such arbitrage behaviours are expected, profitable, or potentially destabilising under different network and consensus assumptions.

\textit{Oracle-DEX Feedback Loops.}
Oracles supply real-world data to smart contracts, and DEXs may use this data to trigger conditional logic such as liquidations or price-based execution. This setup creates a feedback loop: oracle updates influence DEX behaviour, which in turn can incentivise manipulation of the oracle. Our framework captures this setting by representing the oracle and DEX as two parametrised application games, composed through a dependency on a shared data stream. By instantiating different network games or blockchain models (e.g., censorship-tolerant vs. adversarial ordering), we can formally analyse the conditions under which oracle manipulation becomes profitable and explore mitigation strategies such as delay windows or cross-validation mechanisms.

\subsubsection*{\textbf{Cross-Blockchain Games: Atomic Swaps}}
Atomic swaps are a class of protocols that enable trustless exchange of assets between two different blockchains—such as Bitcoin and Ethereum—without intermediaries. Modelling such protocols requires reasoning about two independent blockchain environments, each with its own set of participants, timing assumptions, and strategic behaviour. Our framework naturally extends to this setting by treating each blockchain as a separate but composable component.

To model an atomic swap, we instantiate two distinct blockchain games: $\bgame^{(1)}$ and $\bgame^{(2)}$, corresponding to the execution environments of the two blockchains. Each blockchain game has its own player set, transaction structure, and execution function. Likewise, the protocols executed on these blockchains—e.g., Hashed Timelock Contracts (HTLCs) on both chains—are modelled as separate parametrised application games $\G_1=(\G_1^p)_{p\in\P}$ and $\G_1=(\G_2^q)_{q\in\Q}$, possibly sharing overlapping players. For simplicity, assume we have in both games players sets containing Alice and Bob.

The interaction between the two chains is captured via a shared parameter, typically a secret $s$ revealed through the execution of a transaction on one blockchain. For example, Alice posts a transaction on $\bgame^{(1)}$ that discloses $s$, enabling Bob to claim funds on $\bgame^{(2)}$ before a timelock expires. We express this dependency by considering the composition $\G_2\circ_\gvec \G_1$, where similarly to the collection $\gvec^{dep}$ in Section~\ref{subsec:pc}, we define $\gvec$ as $(g_p)_{p\in\P}$ where for all $p\in\P$, $g_p(\sigma_{\ag,1})$ is a function of the time $\tau_A(\sigma_{\ag,1})$ at which Alice discloses $s$ in case of strategy profile $\sigma_{\ag,1}\in\Sigma_{\ag,1}^p$. 

This yields two cross-layer games $(\G_1^p, \No(\G_1^p), \allowbreak \bgame^{(1)}(\No),\omega_1)$ and $(\G_2^p, \No(\G_2^p), \bgame^{(2)}(\No), \omega_2)$, whose compositional analysis allows us to study incentive compatibility in the joint system. Specifically, our framework allows for evaluating when rational participants would follow the intended swap protocol or deviate—e.g., by claiming coins on one chain without enabling the counterparty to do so on the other.

A key advantage of our approach is that each blockchain can be modelled with different network assumptions, execution rules, or adversarial capabilities. As such, one can study swap security under heterogeneous conditions—such as delayed propagation, partial censorship, or differing hashrate distributions—while preserving a modular and rigorous incentive analysis. This illustrates how our framework extends beyond a single chain to capture the growing class of cross-chain protocols in a principled and compositional manner.


\subsubsection*{\textbf{Composing Complex Application-Network-Consensus Layer: MEV Auctions and PBS}}
Proposer-Builder Separation (PBS) is a design principle introduced to mitigate MEV centralisation in proof-of-stake systems like Ethereum. In PBS, specialised actors called \emph{builders} aggregate user transactions into blocks and participate in out-of-protocol \emph{block auctions} to sell these blocks to validators or proposers, who ultimately publish them on-chain. This decouples the roles of transaction inclusion and block finalisation, creating a complex multi-agent, cross-layer incentive environment.

Our framework naturally captures the layered structure of PBS. First, we model the \emph{searchers}—who scan mempools and applications for profitable transaction orderings—as players in one or more application games. Each application game produces a set of transaction triples based on strategic behaviour, including MEV-seeking bundles. These bundles are passed into a \textit{network game} $\mathcal{N}^{\text{PBS}}$, representing the builder auction layer. In this game, multiple builders (players) receive bundles from searchers and compete to construct block proposals. Each builder's strategy is to select and order a subset of received transactions into a candidate block, compute a bid (i.e., the payment they are willing to offer to the proposer), and submit this to the auction.

The proposer is modelled either as a designated player in the blockchain game or as part of the network  game logic. It selects the highest-bidding builder and forwards its block to the consensus layer, inducing the blockchain ordering $b$. The corresponding \emph{blockchain game} $\bgame$ thus reflects not direct transaction submission, but the output of the PBS auction, mediated by proposer selection rules and builder behaviour.

The \textit{execution function} $\omega$ maps the resulting blockchain ordering to utilities for all participants: searchers receive utility if their bundles are included profitably, builders earn the difference between the bid and their internal profit margin, and the proposer earns the winning bid.

This compositional structure allows us to: (i) model \emph{timing asymmetries} (e.g., builders with faster access to bundles), (ii) incorporate \emph{collusion or exclusive order flow} via $\mathcal{N}^{\text{PBS}}$, (iii) reason about \emph{incentive compatibility} of proposer or builder strategies, and (iv) study the effect of \emph{network propagation models} (e.g., which searchers reach which builders) on MEV extraction.

Crucially, since application games are parametrised, our framework supports scenarios where the utility of one application-layer actor (e.g., a searcher exploiting a DEX arbitrage) depends on the success of another (e.g., a liquidator in a lending protocol), highlighting cross-application interactions common in real-world MEV.

By explicitly modelling the PBS architecture as a compositional game, we enable principled analysis of incentive alignment between searchers, builders, and proposers. This facilitates exploration of mechanism design questions such as: what auction formats discourage censorship? How should rebates or fees be structured to incentivise honest builder behaviour? And how robust are MEV mitigation techniques under strategic deviation? Our framework provides the tools to formally investigate such questions within a layered game-theoretic abstraction.

    \section{Proofs}
\label{app:proofs}

\subsection{Proof of Theorem~\ref{thm:constant-composition}}
\label{subsec:thm:constant-composition}
For ease of notation, we will for this proof write $\sigma$ and $\Sigma$ instead of $\sigma_\ag$ and $\Sigma_\ag$ for strategies and strategy profiles in the application game. Let us first define the concept of a projection on a player subset, which will be useful in a bit.
\begin{definition}[Projection on a player subset]
\label{def:eta-projection}
    For a player set $N$, and a subset $S\subseteq N$, we say that the transformation $\eta_S:S\to S$ is a \emph{projection on S} of the transformation $\eta:N\to N$, if for every $i\in S$ we have 
    \begin{equation*}
        \eta_S(i)=
        \begin{cases}
            \eta(i),&\eta(i)\in S,\\
            \zeta(\eta(i)),&\eta(i)\notin S,
        \end{cases}
    \end{equation*}
    for some injective map $\zeta:\eta(S)\setminus S\rightarrow S\setminus\eta(S)$, defined whenever $\eta(S)\setminus S\neq\varnothing$.
\end{definition}

It suffices to show that for each value of the parameter $p\in\P$, $(((\G_2\circ_\gvec\G_1)^p,\beta,\omega),\overline{\sigma}^p)$ is IC w.r.t $\beta$ (implying player set $M$), where $\gvec$ is a collection $(g_p)_{p\in\P}$ of constant functions. We proceed by contradiction. Fix some $p\in\P$, let $\eta:N\cup M\to N\cup M$ (with $N=N_1\cup N_2$) be some transformation, and denote by $\tilde{\G}$ the $\eta$-CR of $\G:=(\G_2\circ_\gvec\G_1)^p$. We will derive a contradiction by assuming that the strategy profile $\overline{\sigma}^p:=(\overline{\sigma}_{1}^p,(\overline{\sigma}_{2}^{g_p(\sigma_{1})})_{\sigma_{1}\in\Sigma_{1}^p})$, which we will refer to as $\overline{\sigma}=(\overline{\sigma}_{1},\overline{\sigma}_{2})$ as $p\in\P$ is fixed and $g_p(\overline{\sigma}_{1})=q_p$ for some constant $q_p\in\Q$, is not a Nash equilibrium for $(\tilde{\G},\beta,\omega)$. That is, in $(\tilde{\G},\beta,\omega)$ there exists a deviation $\sigma=(\sigma_{1},\sigma_{2})$ by a player $k\in N\cup M$ such that $u_k(\sigma)>u_k(\overline{\sigma})$.

Let us denote by $\eta_1$ and $\eta_2$ the projections of $\eta$ on $N_1\cup M$ and $N_2\cup M$, as defined in Definition~\ref{def:eta-projection}. These projections define CRs $\tilde{\G}_1$ of $\G_1$ and $\tilde{\G}_2$ of $\G_2$, respectively. It should be clear by additivity of the execution function, and by the fact that the utility of a collusion of players is the sum of the utilities of the individual players in that collusion, that $u_k(\sigma)=u_{1,k_1}(\sigma_{1})+u_{2,k_2}(\sigma_{2})$, and $u_k(\overline{\sigma})=u_{1,k_1}(\overline{\sigma}_{1})+u_{2,k_2}(\overline{\sigma}_{2})$, where $k_1=\eta_1(\eta^{-1}(k)\cap (N_1\cup M))$ and $k_2=\eta_2(\eta^{-1}(k)\cap (N_2\cup M))$. But then, by assumption, $u_{1,k_1}(\sigma_{1})+u_{2,k_2}(\sigma_{2})>u_{1,k_1}(\overline{\sigma}_{1})+u_{2,k_2}(\overline{\sigma}_{2})$, which means that $u_{1,k_1}(\sigma_{1})>u_{1,k_1}(\overline{\sigma}_{1})$ or $u_{2,k_2}(\sigma_{2})>u_{2,k_2}(\overline{\sigma}_{2})$. But then, $\sigma_{1}$ is a profitable deviation by $k_1$ from $\overline{\sigma}_{1}$ in $(\tilde{\G}_1,\beta,\omega)$, or $\sigma_{2}$ is a profitable deviation by $k_2$ from $\overline{\sigma}_{2}$ in $(\tilde{\G}_2,\beta,\omega)$, contradicting $\Pi_1$ and $\Pi_2$ being both IC w.r.t. $\beta$.

\subsection{Proof of Theorem~\ref{thm:g-cobg}}
\label{subsec:thm:g-cobg}
First of all, one should realise that for a given leader schedule $(v_t)_{t=0}^T$\footnote{We leave out the epoch superscript as we are always in epoch 0.}, censoring can only occur if each validator in the leader schedule is sufficiently bribed. That is, a validator $j$ who is selected to propose a block before or at round $T$, will only censor $\tx{1}{}$ if $j$ is guaranteed a reward larger than $f_1$. The fee function $f_2$ must therefore give more than $f_1$ to each \emph{distinct} validator in the leader schedule. Everyone can see whether $f_2$ actually does that, and if not, already $v_0$ will include $\tx{1}{}$ as it knows that the censoring will not succeed. Therefore, for a given leader schedule $(v_t)_{t=0}^T$ with $n$ distinct validators, a bribe budget $F_2$ enables censoring only if $F_2>nf_1$, via the fee function $f_2$ described earlier. 

If the fee function with bribe budget $F_2:=\kappa f_1$ is now fixed before the leader schedule is known, censoring will happen only if $\kappa$ is sufficiently large to cover the number of distinct miners, which happens with a certain probability. We calculate this probability below, denoting by $Z_t$ the event that $v_t=0$.
\begin{align*}
    P(C)&=P\qty(C\bigg|\bigcap_{t=0}^{T}Z_t^c)P\qty(\bigcap_{t=0}^{T}Z_t^c)\\&\quad+P\qty(C\bigg|Z_T\cap\bigcap_{t=0}^{T-1}Z_t^c)P\qty(Z_T\cap\bigcap_{t=0}^{T-1}Z_t^c)\\&\quad+P\qty(C\bigg|\bigcup_{t=0}^{T-1}Z_t)P\qty(\bigcup_{t=0}^{T-1}Z_t)\\&=P(N_0^T<\kappa)(1-\lambda_0)^{T+1}\\&\quad+P(N_0^{T-1}<\kappa-1)\lambda_0(1-\lambda_0)^{T}\\&\quad+0\cdot P\qty(\bigcup_{t=0}^{T}Z_t)\\&=(1-\lambda_0)^{T}\qty[(1-\lambda_0)p_{T}(\kappa)+\lambda_0p_{T-1}(\kappa-1)],
\end{align*}
where in the second equality we used for the first term that if no single block in the rounds $0,\ldots,T$ were given to the portion $\lambda_0$, censoring would only occur if throughout rounds $0,\ldots,T$ there were less than $\kappa$ distinct validators, for the second term that if only $v_T$ is zero, at least $f_1$ should have been spent there in order for $\tx{2}{}$ to be preferred over $\tx{1}{}$, meaning that the remaining blocks $0,\ldots,T-1$ can only have been mined by less than $\kappa-1$ distinct validators, and for the third term that if any block in round $0,\ldots,T$ was mined by $\lambda_0$, $\tx{1}{}$ would have been included at that round and censoring would have been impossible.

\subsection{Proof of Theorem~\ref{thm:u-cobg}}
\label{subsec:thm:u-cobg}
(i) In the most optimistic case for the bribing party, all miners $1,\ldots,m$ will censor $\tx{1}{}$ from round $0$ up to round $T$. If we assume that in round $T$ $\tx{2}{}$ is included with probability one, censoring can only occur if every block in the rounds $t=0,\ldots,T-1$ is actually mined by a miner $j=1,\ldots,m$. This happens with probability $(1-\lambda_0)^T$. The actual censoring probability can thus only be smaller than or equal to $(1-\lambda_0)^T$.

(ii) According to \eqref{eq:u-cobg}, a miner $j=1,\ldots,m$ will only censor in round $0$ if its expected gain $g_j^1(j)$ is larger than $f_1$. Once again, assuming the best-case scenario for the bribing party in which all the miners $j=1,\ldots,m$ are censoring from round $1$ until round $T$, $\tx{2}{}$ will be included with probability at most $(1-\lambda_0)^{T-1}$, in which case miner $j$ will gain a reward of at most $F_2$. Hence, $g_j^1(j)\leq (1-\lambda_0)^{T-1}F_2$. If $\kappa<(1-\lambda_0)^{-T+1}$, we thus have $g_j^1(j)\leq(1-\lambda_0)^{T-1}F_2<\kappa^{-1}F_2=f_1$, which means that $j$ will not censor in round $0$. As this argument holds for every miner $j=1,\ldots,m$, no single miner will censor $\tx{1}{}$ in round $0$, guaranteeing $\tx{1}{}$ will be included.

\subsection{Proof of Theorem~\ref{thm:tau}}
\label{subsec:thm:tau}
For ease of notation, we will for this proof write $\sigma$ instead of $\sigma_{\bg}$ for strategies and strategy profiles in U-COBG. We will work our way backwards, starting at round $T$. For each round $t=0,\ldots,T$, and for each miner $j=1,\ldots,m$, we will compute the expected gain $g_j^t\qty(\sigma_j^t;(\sigma_j^s)_{s=t+1}^{T})$ (essentially acting as if $\tx{1}{}$ was censored until round $t$), 
where $\sigma_j^s\in\qty{1,2}$ denotes the decision taken by miner $j$ at time $s$, where $\sigma_j^s=1$ encodes the miner trying to include $\tx{1}{}$ in round $s$, and $\sigma_j^s=2$ encodes the miner censoring $\tx{1}{}$ in round $s$, or including $\tx{2}{}$ if possible. 

For miner $j=1,\ldots,m$, we have in round $T$ an expected gain $g_j^{T}\qty(\sigma_j^{T})$ of $\lambda_j f_1$ if $\sigma_j^{T}=1$ and of $\lambda_j f_2$ if $\sigma_j^{T}=2$. By assumption, $f_1<f_2$ and so in round $T$ all miners will include $\tx{2}{}$. Keep in mind that in U-COBG, we also consider a portion of the hashrate $\lambda_0$ that will by construction always include $\tx{1}{}$ in rounds before $T$, and will switch to including $\tx{2}{}$ only in round $T$ as it is the more rewarding transaction to include.

For rounds $t<T$, the expected gain depends on what players expect to gain in subsequent rounds. For example, at a round $t<T$, a miner $j$ can decide whether to include $\tx{1}{}$. Out of the remaining miners, there will be a portion of hashrate $\lambda_{I,j}^t$ that will include $\tx{1}{}$ at round $t$, and a portion of hashrate $\lambda_{C,j}^t$ that will censor $\tx{1}{}$ at round $t$. Hence, with probability $\lambda_{I,j}^t$, in round $t$ another miner will mine a block including  $\tx{1}{}$, leaving $j$ with a zero gain. With probability $\lambda_{C,j}^t$, another miner will mine a block in round $t$ excluding $\tx{1}{}$, leading to miner $j$ having to decide in round $t+1$ again whether or not to include $\tx{1}{}$. If we assume that $j$'s decision in round $t+1$ leads to an expected gain of $g$, $j$ will have with probability $\lambda_{C,j}^t$ a gain of $g$. This reasoning leads us to recursive expressions for the expected gain $g_j^{t}\qty(\sigma_j^t;(\sigma_j^s)_{s=t+1}^{T})$ in round $t=0,\ldots,T-1$; 
\begin{footnotesize}
\begin{align}
\label{eq:recursion}
    &g_j^{t}\qty(\sigma_j^t;(\sigma_j^s)_{s=t+1}^{T})=\nonumber\\&
    \begin{cases}
        \lambda_j f_1+\lambda_{C,j}^t g_j^{t+1}\qty(\sigma_j^{t+1};(\sigma_j^s)_{s=t+2}^{T}),& \sigma_j^t=1,\\
        \lambda_j g_j^{t+1}\qty(\sigma_j^{t+1};(\sigma_j^s)_{s=t+2}^{T})+\lambda_{C,j}^t g_j^{t+1}\qty(\sigma_j^{t+1};(\sigma_j^s)_{s=t+2}^{T}),& \sigma_j^t=2.
    \end{cases}
\end{align}
\end{footnotesize}
Hence, if $f_1<g_j^{t+1}\qty(\sigma_j^{t+1};(\sigma_j^s)_{s=t+2}^{T})$, $j$ will choose to censor $\tx{1}{}$ instead of including it.

For each miner $j=1,\ldots,m$, we are interested in the strategy $(\sigma_j^s)_{s=0}^{T}$ that maximises the expected gain $g_j^0$. First of all, we claim that this optimal strategy is always of the the form $(1,\ldots,1,2,\ldots,2)$. That is, a miner will always include $\tx{1}{}$ up to some point, after which this miner switches to censoring $\tx{1}{}$, never trying to include it again. 
\begin{lemma}
\label{lem:first1then2}
    For a fixed miner $j=1,\ldots,m$, there exists a round $t_j^*\in\qty{0,\ldots,T}$ such that the strategy $(\overline{\sigma}_j^t)_{t=0}^{T}$ that maximises the expected gain $g_j^0$ is given by
    \begin{equation}
    \label{eq:optimal-strat}
        \overline{\sigma}_j^t=
        \begin{cases}
            1,& t< t_j^*,\\
            2,& t\geq t_j^*.            
        \end{cases}
    \end{equation}    
\end{lemma}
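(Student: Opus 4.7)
The plan is to establish the threshold form by backwards induction on $t$, showing that once censoring becomes the optimal action it remains optimal for all subsequent rounds up to $T$. The key is to identify the precise condition under which each action is preferred and propagate it backwards in time, rather than trying to argue monotonicity of the value function directly.

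From the recursion (\ref{eq:recursion}), the two candidate actions at round $t < T$ yield, given the continuation value $v_j^{t+1}$ under optimal play, the payoffs $\lambda_j f_1 + \lambda_{C,j}^t v_j^{t+1}$ for $\sigma_j^t = 1$ and $(\lambda_j + \lambda_{C,j}^t) v_j^{t+1}$ for $\sigma_j^t = 2$. Their difference equals $\lambda_j (v_j^{t+1} - f_1)$, so action $2$ is optimal iff $v_j^{t+1} \geq f_1$ and action $1$ is optimal iff $v_j^{t+1} \leq f_1$, with indifference broken arbitrarily. At the terminal round $t = T$, equation (\ref{eq:final-gain}) together with $f_1 < f_2$ gives $v_j^T = \lambda_j f_2$, with action $2$ optimal.

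The crucial step is the implication: if action $1$ is optimal at round $t+1$, then action $1$ is also optimal at round $t$. Under this hypothesis, $v_j^{t+2} \leq f_1$, so
\begin{equation*}
    v_j^{t+1} = \lambda_j f_1 + \lambda_{C,j}^{t+1} v_j^{t+2} \leq \lambda_j f_1 + \lambda_{C,j}^{t+1} f_1 = (\lambda_j + \lambda_{C,j}^{t+1}) f_1 \leq f_1,
\end{equation*}
using $\lambda_j + \lambda_{C,j}^{t+1} \leq 1$. Hence $v_j^{t+1} \leq f_1$, so action $1$ is optimal at $t$ as well. Taking the contrapositive, the set of rounds at which action $2$ is optimal is upward-closed in $\{0, \ldots, T\}$, and since it contains $T$ it has the form $\{t_j^*, t_j^* + 1, \ldots, T\}$ for some $t_j^* \in \{0, \ldots, T\}$, yielding the claimed form of $(\overline{\sigma}_j^t)_{t=0}^T$.

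The main subtlety, and the reason a naive induction on the value function fails, is that $v_j^t$ itself is not monotone in $t$: in the "include" regime one can easily have $v_j^t > v_j^{t+1}$, so the tempting strategy of showing that $v_j^{t+1}$ grows monotonically and crosses $f_1$ at a unique point does not work. The right invariant to propagate backwards is not the value itself but its position relative to $f_1$, and the computation above shows that being on the "include" side at round $t+1$ forces the same at round $t$, which is precisely what the threshold conclusion requires.
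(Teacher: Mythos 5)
Your proof is correct and takes essentially the same route as the paper's: both identify that action $2$ is preferred at round $t$ iff $v_j^{t+1}\geq f_1$, show that ``include at round $t+1$'' forces $v_j^{t+1}\leq(\lambda_j+\lambda_{C,j}^{t+1})f_1\leq f_1$ and hence ``include at round $t$'', and conclude the threshold form by contraposition from $\overline{\sigma}_j^{T}=2$. The only difference is cosmetic (you propagate the invariant $v_j^{t+1}\lessgtr f_1$ forward-to-backward with a one-index shift relative to the paper's write-up), and your closing remark about why monotonicity of $v_j^t$ itself is the wrong invariant is accurate.
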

\begin{proof}
    Clearly, since $f_1<f_2$, $\overline{\sigma}_j^{T}=2$. Let us now assume that for some $t\in\qty{1,\ldots,T}$, and given a sequence $(\overline{\sigma}_j^s)_{s=t+1}^{T}$, we have $\overline{\sigma}_j^{t}=1$. By Equation~\eqref{eq:recursion}, this implies that $f_1\geq g_j^{t+1}\qty((\overline{\sigma}_j^s)_{s=t+1}^{T})$. Again by Equation~\eqref{eq:recursion}, we find that $g_j^{t}\qty(\overline{\sigma}_j^t;(\overline{\sigma}_j^s)_{s=t+1}^{T})=(\lambda_j+\lambda_{C,j}^t)g_j^{t+1}\qty(\overline{\sigma}_j^{t+1};(\overline{\sigma}_j^s)_{s=t+2}^{T})\leq(\lambda_j+\lambda_{C,j}^t)f_1\leq f_1.$
    But then, using Equation~\eqref{eq:recursion} one last time for round $t-1$, we find $g_j^{t-1}\qty(1;(\overline{\sigma}_j^s)_{s=t}^{T})\geq g_j^{t-1}\qty(2;(\overline{\sigma}_j^s)_{s=t}^{T})$. Hence, $\overline{\sigma}_j^{t-1}=1$. By an induction argument, this implies that if we have $\overline{\sigma}_j^{t}=1$ at some round $t\in\qty{0,\ldots,T-1}$, we have $\overline{\sigma}_j^{s}=1$ for all $s<t$ as well. By contraposition and a similar induction argument, if we have $\overline{\sigma}_j^{t}=2$ at some round $t\in\qty{0,\ldots,T-1}$, we must also have $\overline{\sigma}_j^{s}=2$ for all $s>t$. Consequently, the optimal strategy must be of the form specified in \eqref{eq:optimal-strat}.
\end{proof}
By Lemma~\ref{lem:first1then2}, determining the optimal strategy for each miner amounts to finding the round $t_j^*$ at which miner $j$ switches from including $\tx{1}{}$ to censoring $\tx{1}{}$, for each $j=1,\ldots,m$. Recalling that $g_j^T(2)=\lambda_jf_2$ and combining \eqref{eq:recursion} and \eqref{eq:optimal-strat}, we know that for any miner $j=1,\ldots,m$, the expected gain for $t\geq t_j^*$ is given by
\begin{equation}
\label{eq:censoring-gain}
    g_j^{t}\qty((\overline{\sigma}_j^s)_{s=t}^{T})=\qty(\prod_{s=t}^{T-1}\lambda_C^s)\lambda_jf_2,
\end{equation}
where for $t\geq t_j^*$, we defined $\lambda_C^t=\lambda_j+\lambda_{C,j}^t$ as the total portion of hashrate that is censoring $\tx{1}{}$ in round $t$. Now remark that if we have $\overline{\sigma}_j^t=2$, we have $g_j^{t+1}\qty((\overline{\sigma}_j^s)_{s=t+1}^{T})>f_1$. Consequently, for any miner $k\in\qty{j+1,\ldots,m}$, which has by Definition~\ref{def:nfnwmg} a hashrate $\lambda_k\geq\lambda_j$, \eqref{eq:censoring-gain} implies that $g_k^{t+1}\qty((\overline{\sigma}_k^s)_{s=t+1}^{T})=\qty(\prod_{s=t+1}^{T-1}\lambda_C^s)\lambda_kf_2\geq \qty(\prod_{s=t+1}^{T-1}\lambda_C^s)\lambda_jf_2=g_k^{t+1}\qty((\overline{\sigma}_j^s)_{s=t+1}^{T})>f_1$, where the first equality holds whenever $t+1\geq t_k^*$. That is, $\overline{\sigma}_k^t=2$. By induction, one can easily find that larger miners will start censoring $\tx{1}{}$ before smaller miners.
\begin{corollary}
\label{cor:start-censor}
    Assume that $\lambda_j\leq\lambda_k$ for some $j,k\in\qty{1,\ldots,m}$. Then $\overline{\sigma}_j^t=2$ implies that $\overline{\sigma}_k^t=2$.
\end{corollary}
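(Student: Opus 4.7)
The plan is a backward induction on $t$, from $t=T$ down to $t=0$, with the inductive hypothesis being the corollary's conclusion restricted to rounds strictly greater than $t$. For the base case $t=T$, I would invoke \eqref{eq:final-gain} together with $f_1 < f_2$: every miner, regardless of hashrate, strictly prefers to include $\tx{2}{}$ at round $T$, so $\overline{\sigma}_i^T = 2$ for every $i \in \{1,\ldots,m\}$, and the implication holds trivially.

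For the inductive step, I would fix $t < T$, assume the claim for every $t' > t$, suppose $\overline{\sigma}_j^t = 2$, and pick any $k$ with $\lambda_k \geq \lambda_j$. By Lemma~\ref{lem:first1then2}, $\overline{\sigma}_j^{t'}=2$ for every $t' \geq t$, so the inductive hypothesis gives $\overline{\sigma}_k^{t'}=2$ for every $t' > t$; Lemma~\ref{lem:first1then2} applied to miner $k$ then yields $t_k^* \leq t+1$, which is exactly the condition required to invoke \eqref{eq:censoring-gain} for miner $k$ at round $t+1$. Since the realized sequence $(\lambda_C^s)_{s=t+1}^{T-1}$ is determined by the overall strategy profile rather than by which miner's value is being computed, applying \eqref{eq:censoring-gain} to both $j$ and $k$ with this common product yields
\begin{equation*}
v_k^{t+1} \;=\; \lambda_k f_2 \prod_{s=t+1}^{T-1} \lambda_C^s \;\geq\; \lambda_j f_2 \prod_{s=t+1}^{T-1} \lambda_C^s \;=\; v_j^{t+1} \;>\; f_1,
\end{equation*}
where the final strict inequality is the decision-rule characterization of $\overline{\sigma}_j^t = 2$ read off from \eqref{eq:recursion} (censoring at round $t$ strictly dominates including iff $v_j^{t+1} > f_1$). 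The same characterization applied in reverse to miner $k$ then gives $\overline{\sigma}_k^t = 2$, closing the induction.

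The only real subtlety I anticipate is ensuring that \eqref{eq:censoring-gain} is legitimately applicable to $v_k^{t+1}$, which requires $t_k^* \leq t+1$; this is precisely what the inductive hypothesis buys us, and it is the reason for running the induction backwards from $T$ rather than forwards. Everything else reduces to algebraic monotonicity in $\lambda$, so I foresee no significant obstacle beyond this bit of bookkeeping.
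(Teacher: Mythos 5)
Your proof is correct and follows essentially the same route as the paper: compare $v_k^{t+1}$ and $v_j^{t+1}$ via \eqref{eq:censoring-gain} and monotonicity in the hashrate, then read off $\overline{\sigma}_k^t=2$ from the decision rule in \eqref{eq:recursion}. The paper compresses the argument to a single displayed inequality followed by ``by induction,'' whereas you make the backward induction on $t$ explicit precisely to justify the condition $t+1\geq t_k^*$ needed for \eqref{eq:censoring-gain} --- the same caveat the paper notes in passing --- so your write-up is, if anything, the more careful rendering of the identical idea.
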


Starting at $t=T$ and working our way backwards, we will have an ever-shrinking set of censoring miners, where going further back in time will result in the smallest miner that is still censoring switching to including $\tx{1}{}$. To compute the times $t_j^*$ for all $j=1,\ldots,m$, it will hence be easier to sometimes work with the reverse time $r:=T-t$, looking backwards starting from $t=T$ and computing the value $r^*_j:=T-t_j^*$, starting from the smallest miner.

At $r=0$ (that is, $t=T$), we have already established that all miners will go for $\tx{2}{}$. At $r=1$, \eqref{eq:recursion} tells us that miner $j\in\qty{1,\ldots,m}$ will censor $\tx{1}{}$ as long as $f_1<\lambda_jf_2$. In other words, all miners with a hashrate smaller than or equal to $f_1/f_2$ will include $\tx{1}{}$ at $r=0$ (which is round $T-1$). Say there are $\ell$ miners with such a small hashrate, i.e., $\lambda_1\leq\ldots\lambda_\ell\leq f_1/f_2<\lambda_{\ell+1}$. Then for all $j=1,\ldots,\ell$, we have $r_j^*=0$ and so $t_j^*=T$. 

Let us now consider miner $j=\ell+1$. This miner will still censor at $r=1$, but how long will this miner keep censoring? We are looking for the smallest value of $t$ for which $g_{\ell+1}^{t+1}\qty((2)_{s=t+1}^{T})>f_1$. In other words, by Equation~\eqref{eq:censoring-gain}, we are looking for the smallest $t$ for which $\qty(\prod_{s=t+1}^{T-1}\lambda_C^s)\lambda_{\ell+1}f_2>f_1$.

We know that the hashrate $\sum_{k=1}^\ell\lambda_k$ is already trying to include $\tx{1}{}$, and that for all rounds larger than this smallest $t$, $j=\ell+1$ will be censoring. Hence, by Corollary~\ref{cor:start-censor}, we know that the portion $S_{\ell+1}:=\sum_{k=\ell+1}^m\lambda_k$ of hashrate will censor $\tx{1}{}$. That is, $\lambda_C^s=S_{\ell+1}=1-\lambda_0-\sum_{k=1}^\ell\lambda_k$ for $s=t,\ldots,T-1$. We are thus looking for $r_{\ell+1}^*$, the largest $r$ such that $S_{\ell+1}^{r-1}\lambda_{\ell+1}f_2>f_1$. It is easy to see that since $r\in\N_0$, $r_{\ell+1}^*-1=\floor{\rho_{\ell+1}}$, where $\rho_{\ell+1}$ satisfies $S_{\ell+1}^{\rho_{\ell+1}}\lambda_{\ell+1}f_2=f_1$, i.e., $\rho_{\ell+1}=\frac{\log\frac{f_1}{\lambda_{\ell+1}f_2}}{\log S_{\ell+1}}$.

In other words, we have $r_{\ell+1}^*=\ceil{\rho_{\ell+1}}$. Moving on, for the next miner $j=\ell+2$, we know that $r_j^*\geq r_{\ell+1}^*$. For $r>r_{\ell+1}^*$, miner $\ell+1$ is no longer censoring $\tx{1}{}$, hence from then on we have a portion $S_{\ell+2}$ censoring $\tx{1}{}$. We are thus looking for the largest $r$ such that $S_{\ell+2}^{r-r_{\ell+1}^*-1}S_{\ell+1}^{r_{\ell+1}^*}\lambda_{\ell+2}f_2>f_1$. Again, since $r\in\N_0$, we have $r_{\ell+2}^*-1=\floor{\rho_{\ell+2}}$, and so $r_{\ell+2}^*=\ceil{\rho_{\ell+2}}$, where $\rho_{\ell+2}$ is defined by $S_{\ell+2}^{\rho_{\ell+2}-\ceil{\rho_{\ell+1}}}S_{\ell+1}^{\ceil{\rho_{\ell+1}}}\lambda_{\ell+2}f_2=f_1$, giving $\rho_{\ell+2}=\ceil{\rho_{\ell+1}}+\frac{\log\frac{f_1}{\lambda_{\ell+2}f_2}-\ceil{\rho_{\ell+1}}\log S_{\ell+1}}{\log S_{\ell+2}}$. Repeating this argument inductively, one can see that more generally, for miner $j\in\qty{\ell+1,\ldots,m}$, we have $r_j^*=\ceil{\rho_j}$, where $\rho_j$ satisfies $S_j^{\rho_{j}-\ceil{\rho_{j-1}}}\prod_{i=\ell+1}^{j-1}\qty(S_i^{\ceil{\rho_{i}}-\ceil{\rho_{i-1}}})\lambda_{j}f_2=f_1$, where we set $\rho_\ell=0$. This gives us the following recursion for $\rho_j$:
\begin{footnotesize}
\begin{equation*}
\label{eq:rho-recursion}
    \rho_j=\ceil{\rho_{j-1}}+\frac{\log\frac{f_1}{\lambda_jf_2}-\sum_{i=\ell+1}^{j-1}\qty(\ceil{\rho_{i}}-\ceil{\rho_{i-1}})\log\qty(\sum_{k=i}^m\lambda_k)}{\log\sum_{k=j}^m\lambda_k}.
\end{equation*}
\end{footnotesize}
This identity allows us to compute recursively the values $r_j^*=\ceil{\rho_j}$ for every $j\in\qty{\ell+1,\ldots,m}$, for any given hashrate distribution $\lambdavec$ and fees $f_1,f_2$.

Now, given the values $(r_j^*)_{j=1}^m$ (keep in mind that $r_j^*=0$ for $j=1,\ldots,\ell$), we can write down the probability $P(C)$ that $\tx{2}{}$ will be included instead of $\tx{1}{}$, as a function of the timelock $T$. Clearly, if $T>r_m^*$, there will be at least one round in which all the miners will try to mine a block that includes $\tx{1}{}$, so $\tx{1}{}$ will be included with probability $1$. Trivially, since $f_2>f_1$, if $T=0$, all miners will go for $\tx{2}{}$ and so $\tx{1}{}$ will be included with probability $0$. Now, suppose that $T>r_{j-1}^*$ and $T\leq r_{j}^*$ for some $j\in\qty{1,\ldots,m}$ (we set $r_0^*=0$). Then the probability that $\tx{1}{}$ is censored for $T$ rounds is $\prod_{s=0}^{T-1}\lambda_C^s$. Since $\lambda_C^s=\sum_{k=j}^m\lambda_k$ for $s=r_{j-1}^*+1,\ldots,r_j^*$, we find the desired result 
\begin{footnotesize}
\begin{equation*}
    P(C)=\qty(\sum_{k=j}^m\lambda_k)^{T-r_{j-1}^*}\prod_{i=1}^{j-1}\qty(\qty(\sum_{k=i}^m\lambda_k)^{r_i^*-r_{i-1}^*}).
\end{equation*}
\end{footnotesize}

\subsection{Proof of Theorem~\ref{thm:2htlc-timelock}}
\label{subsec:thm:2htlc-timelock}
Suppose that $t^*_m(\overline{f^C_1},v_1;\lambdavec,T_1)+1< t^*_m(\overline{f^D_2},v_2;\lambdavec,T_2)$. Dave will deviate from the intended protocol behaviour in $\htlc_2$ by not sharing the secret with Charlie until $\tilde{t}:=t^*_m(\overline{f^C_1},v_1;\lambdavec,T_1)+1$. At time $\tilde{t}$, the Charlie-Dave channel will simply be updated off-chain (censoring would indeed fail with probability 1), giving Dave the same utility as if he would act according to the protocol. Charlie will want to claim $v_1$ in $\htlc_1$, but Dave will not respond, forcing Charlie to close the channel and post $\tx{1}{H},\tx{1}{C}$, the latter with a fee at most $\overline{f_1^C}$ (we can assume without loss of generality that $\tx{1}{H}$ has fee $0$). One can see that there must exist an $\varepsilon>0$ such that $t_m^*(\overline{f_1^C},v_1-\varepsilon;\lambdavec,T_1)=\tilde{t}$. Dave will post $\tx{1}{D}$ with the fee $v_1-\varepsilon$. With positive probability, $\tx{1}{D}$ will be included, in which case Dave gains $\varepsilon>0$. Dave's expected utility gain is positive and he will thus deviate from the IPB.

\subsection{Proof of Theorem~\ref{thm:wormhole}}
\label{subsec:thm:wormhole}
Assume without loss of generality that $t^s_3=0$. If everyone were to follow the protocol, we would have in $\htlc_3$ Dave sharing $s_3$ at $t=t^s_3=0$ with Charlie, who can update the channel, leading to the transaction triple $(x_3^P,0,0)$ being outputted. Charlie can now share $s_2$ at $t=0$ with Dave in $\htlc_2$, who will also update the channel, leading to $(x_2^P,0,0)$ being outputted. Similarly, in $\htlc_1$, Dave will share $s_1$ at $t=0$ with Alice, leading to $(x_1^P,0,0)$ being outputted. The blockchain game induced by the transaction triple set $\{(x_1^P,0,0),(x_2^P,0,0),(x_3^P,0,0)\}$ will trivially lead to an ordering which includes $x_1^P,x_2^P,x_3^P$. This implies a utility of $v^A_1$ for Alice, $v^C_2+v^C_3+v^2$ for Charlie, and $v^B_1+v^B_2+v^1+v^D_3+v^3$ for Dave.

However, Dave can perform the following deviation. Instead of sharing $s_3$ in $\htlc_3$, Dave does not share the secret at all, i.e., $\tau_D=\infty$. Since Charlie does not deviate, at $t=T_3$, Charlie will initiate a refund and Dave will accept, leading to the triple $(x^R_3,T_3,0)$. Since $\tau_{3,D}=\infty$, Dave will initiate a refund at $t=T_2$ in $\htlc_2$, and Charlie will accept, outputting  $(x^R_2,T_2,0)$. Finally, since Dave knows the secret $s_1$, Dave will actually share $s_1$ with Alice at $t=0$ in $\htlc_1$, outputting $(x^P_1,0,0)$. These three transaction triples will lead via the blockchain game again to a utility of $v^A_1$ for Alice, but to a utility of $v^C_2+v^C_3+v^3$ for Charlie, and of $v^B_1+v^1+v^B_2+v^2+v^D_3$ for Dave, who by deviating gains $v^B_1+v^1+v^B_2+v^2+v^D_3-(v^B_1+v^B_2+v^1+v^D_3+v^3)=v^2-v^3$. 

\subsection{Proof of Theorem~\ref{thm:crab}}
\label{subsec:thm:crab}
Realise that as a result of playing the sequence $(K_t)_{t=0}^T$ and the subsequent blockchain game, the resulting blockchain ordering will include the old commitment transactions and either a spending transaction or a punishment transaction. In the case of a spending transaction, Alice would receive an amount $v_A^o+c$, whereas in the case of a punishment transaction, she would receive $0$. If Alice would post the latest commitment transaction, she would receive $v_A^l+c$. Since Alice is rational, she will therefore not be willing to bribe the miners with more than $v_A^o-v_A^l$. If Bob already broadcasts in $K_0$ the punishment transaction, which rewards the miner immediately with an amount $c$, and if $c$ is high enough for the miners not to censor the punishment transaction, Alice will have no incentive to broadcast the spending transaction. 

This brings us in the setting of pre-BitVM Bitcoin timelock bribing, with the punishment transaction yielding an immediate reward of $c$ to the miners, and the spending transaction yielding a reward of at most $v_A^o-v_A^l$ after $T$ blocks. Hence, for a given hashrate distribution $\lambdavec$, Theorem~\ref{thm:tau} guarantees us that the punishment transaction will be included with probability $1$ if $T>\ceil{\rho_m(c,v_A^o-v_A^l;\lambdavec)}$. Since $v_A^o-v_A^l\leq v$, if $T>\ceil{\rho_m(c,v;\lambdavec)}$, the punishment transaction will always be included and so Alice has no incentive to broadcast an old commitment transaction. 

}

\end{document}